\newtheorem{theorem}{Theorem}[section] %
\newtheorem{proposition}[theorem]{Proposition}
\newtheorem{lemma}[theorem]{Lemma}
\newtheorem{corollary}[theorem]{Corollary}
\theoremstyle{definition}
\newtheorem{definition}[theorem]{Definition}
\theoremstyle{remark}
\newtheorem{remark}[theorem]{Remark}
\newtheorem{example}[theorem]{Example}
\title{Evidential Decision Theory\texorpdfstring{\\}{}via Partial Markov Categories}
\author{Elena Di Lavore and Mario Román\\ \textit{Tallinn University of Technology, Tallinn, Estonia}}
\begin{document}
\maketitle

\begin{abstract}
  We introduce partial Markov categories.
  In the same way that Markov categories encode stochastic processes, partial Markov categories encode stochastic %
  processes with constraints, observations and updates.
  In particular, we prove a synthetic Bayes theorem and we apply it to define a syntactic partial theory of observations on any Markov category whose normalisations can be computed in the original Markov category.
  Finally, we formalise Evidential Decision Theory in terms of partial Markov categories, and provide examples.
\end{abstract}

\let\thefootnote\relax\footnotetext{\emph{Funding.} Elena Di Lavore and Mario Rom\'an were supported by the ESF funded Estonian IT Academy research measure (project 2014-2020.4.05.19-0001) and the Estonian Research Council grant PRG1210.}

\section{Introduction}\label{sec:introduction}
Evidential Decision Theory is a branch of decision theory that focuses on \emph{observational evidence}: given a decision problem, Evidential Decision Theory prescribes the action that we \emph{observe to have done} in the best possible outcome~\cite{ahmed_2014}.
This contrasts with Causal Decision Theory, which prescribes the action that \emph{causes} the best possible outcome~\cite{gibbard_harper_1978}.
In Evidential Decision Theory, no direct causal connection is required for the action to affect the outcome: it suffices that the observation of the action alters the conditional probability of the outcome via Bayesian update~\cite{yudkowsky_soares_2017}.
Characterizing, comparing and formalizing decision theories, such as Evidential Decision Theory, remains an open problem in artificial intelligence research~\cite{horvitz88:decision,everitt15:sequential}.

Newcomb's paradox~\cite{nozick_1969} is a famous decision problem that sets apart Evidential and Causal Decision Theory.
An agent is in front of two boxes: a transparent box filled with \(1 \matheuro\) and an opaque box.
The agent is given the choice between taking both boxes (\emph{two-boxing}) and taking just the opaque box (\emph{one-boxing}).
However, the opaque box is controlled by a very accurate predictor.
The predictor placed \(1000 \matheuro\) in the box if it predicted that the agent would one-box and left it empty otherwise.
The agent knows this.
Which action should the agent choose?
\begin{table}[h!]
  \centering
  \caption{Utilities for Newcomb's problem.}\label{fig:newcomb-utilities}
  \begin{tabular}{|c | c c |}
    \hline
    & predictor = \emph{one-box} & predictor = \emph{two-box}\\
    \hline
    agent = \emph{one-box} & \(1000 \matheuro\) & \(0 \matheuro\)\\
    agent = \emph{two-box} & \(1001 \matheuro\) & \(1 \matheuro\)\\
    \hline
  \end{tabular}
\end{table}

Evidential Decision Theory asks: \emph{``Which action is evidence for the best possible outcome?''}
In the case of Newcomb's paradox, %
because the predictor is accurate, actions and predictions are correlated.
\emph{Two-boxing} correlates with the opaque box being empty; \emph{one-boxing} correlates with this box being full.
Therefore, Evidential Decision Theory prescribes one-boxing: an evidential decision theorist gets \(1000\matheuro\) almost all the times.
On the other hand, Causal Decision Theory prescribes the action that \emph{causes} the best possible outcome.
For Newcomb's paradox, %
assuming any of the two predictions, two-boxing has a strictly greater utility. 
Therefore, the %
causal decision theorist is bound to two-boxing and getting \(1\matheuro\) almost all the times.

Formalising decision problems is subtle: different decision theories disagree in many well-studied scenarios~\cite{gibbard_harper_1978}, and solutions are sensitive to slight modifications in the problem statement.
In order to clarify these disagreements, we need both an intuitive mathematical syntax to model decision problems, and a formal algorithmic procedure to solve them according to the prescriptions of the theory.
In this paper, we answer the following question:
\slogan{What is a minimalistic mathematical framework\\ that can formulate and solve decision problems\\ in Evidential Decision Theory?}

A good calculus for modelling decision problems needs to \emph{(i)} express probabilistic processes, to model a stochastic environment in which the agent needs to act, \emph{(ii)} express constraints, to restrict the model to satisfy the specifications of the decision problem, and \emph{(iii)} explicitly capture the implicit assumptions of decision theory.
We introduce partial Markov categories as both a syntax for modelling decision problems and a calculus for solving them.

Markov categories~\cite{fritz_2020} are a syntax for probabilistic processes, where it is natural to express \emph{conditioning}, \emph{independence} and \emph{Bayesian networks}~\cite{fong_2013,cho_2019,fritz_2020,jacobs2021causal,jacobs2020logical}.
For example, the predictor, the agent and the utility function in Newcomb's paradox can be expressed as morphisms in a suitable Markov category.
However, this language does not allow the encoding of constraints, e.g., in the context of Newcomb's paradox, we cannot impose \emph{a posteriori} that the predictor's prediction very likely coincides with the agent's action.

This restriction is imposed by the structure of Markov categories that allows resources and processes to be \emph{discarded} but only resources to be \emph{copied}: throwing a coin twice is not the same as throwing it once and copying the result.
String diagrams are the internal language for morphisms in monoidal categories and for Markov categories in particular.
\Cref{fig:copy-discard-markov} shows the string diagrams for discarding and copying morphisms in a Markov category.
\begin{figure}[h!]
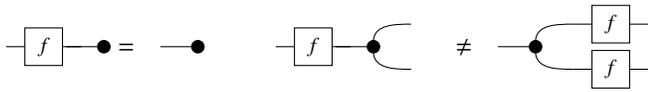

  \[\copydiscardmarkovFig{}\]
  \caption{Stochastic processes can be discarded but not copied.}\label{fig:copy-discard-markov}
\end{figure}

Most Markov categories commonly used to encode stochastic processes exhibit the additional property of having \emph{conditionals} (\Cref{fig:conditionals}).
Conditionals ensure that every joint distribution can be split into a marginal and a conditional distribution.
While not necessary to encode stochastic processes, this property is essential for \emph{reasoning} about them. In this manuscript, we always consider Markov categories to have conditionals.\footnote{In standard terminology, Markov categories are ``copy and natural-discard categories with conditionals'' and partial Markov categories are ``copy-discard categories with conditionals''.}
\begin{figure}[h!]
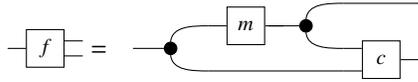

  \[\generatorFig{morphismOneTwo=\(f\)} = \conditionalsFig{m}{c}{}{}{}\]
  \caption{Conditionals require that a stochastic process \(f\) be split into a marginal \(m\) and a conditional \(c\).}\label{fig:conditionals}
\end{figure}

On the other hand, \emph{cartesian restriction categories}~\cite{cockett2007restriction, curien_obtulowicz_1989}, and discrete cartesian restriction categories in particular~\cite{cockett2012range2, di_liberti_nester_2021}, are a calculus of partial processes with \emph{constraints}.
Cartesian restriction categories allow the copying and discarding of resources, as in Markov categories.
However, processes are now allowed to be copied but not to be discarded (\Cref{fig:copy-discard-partial}).
\begin{figure}[h!]
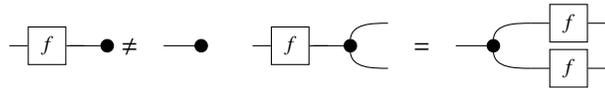

  \[\copydiscardpartialFig{}\]
  \caption{Partial processes can be copied but not discarded.}\label{fig:copy-discard-partial}
\end{figure}

\emph{Discrete} cartesian restriction categories additionally possess an \emph{equality constraint} (\textmonoidpic[copycolor]).
Its axioms say that copying resources and then checking that they are equal should be the same as the identity process (\Cref{fig:comparator-axioms}).
\begin{figure}[h!]
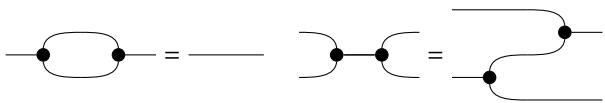

  \comparatoraxiomsFig{}
  \caption{Equality constraint structure and its axioms in a discrete cartesian restriction category.}\label{fig:comparator-axioms}
\end{figure}

We introduce \emph{partial Markov categories} and \emph{discrete partial Markov categories}, which extend Markov categories and (discrete) cartesian restriction categories to encode both \emph{probabilistic reasoning} and \emph{constraints}.
Discrete partial Markov categories allow the modelling of decision problems like Newcomb's paradox: they can express the constraint about the prediction matching the action of the agent (\Cref{fig:newcomb-diagram}).
\begin{figure}[!ht]
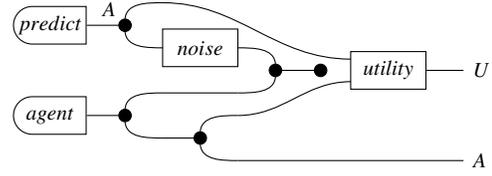

  \[\newcombFig{}\]
  \caption{Model of Newcomb's paradox in a discrete partial Markov category.}\label{fig:newcomb-diagram}
\end{figure}
With the same technique, discrete partial Markov categories can solve decision problems according to Evidential Decision Theory:
computing the utility constrained to a certain action allows us to answer %
\emph{``which action we observe to have done in the best possible outcome?''}

That is, observing the value of a variable means constraining the variable to take a certain value.
In the case of Newcomb's paradox, the solution can be computed by simplifying the diagram in \Cref{fig:newcomb-solution} according to the axioms of discrete partial Markov categories, as we do in \Cref{sec:solvingdecision}.
\begin{figure}[h!]
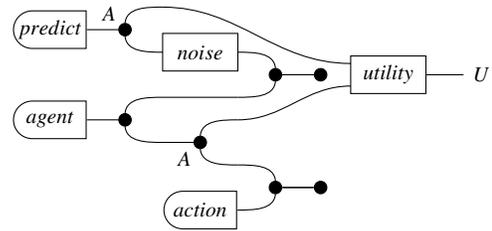

  \[\newcombsolutionFig{}\]
  \caption{Newcomb's problem in a discrete partial Markov category.}\label{fig:newcomb-solution}
\end{figure}

We claim that the algebra of partial Markov categories is a good theoretical framework for Evidential Decision Theory:
it provides both a convenient syntax in terms of string diagrams and a calculus that translates these diagrams to the computations that solve the given decision problem. %

\subsection{Contributions}
Our main conceptual contribution is the algebra of partial Markov categories (\Cref{def:partial-markov-cat}) and discrete partial Markov categories (\Cref{def:discrete-partial-markov-cat}). 
Our main result is the construction of a partial Markov category on top of any Markov category, such that deterministic observations are computed in terms conditionals of the base Markov category (\Cref{prop:normalisation-of-programs,prop:conditionals-of-programs}).

We introduce partial and discrete partial Markov categories in \Cref{sec:partial-markov}, where we show (\Cref{th:bayes}) a synthetic version of Bayes' theorem that holds in any discrete partial Markov category.
We apply this framework to synthetically compare Pearl's and Jeffrey's update rules (\Cref{def:pearl-update,def:jeffrey-update,prop:pearlandjeffrey}), to model decision problems like Newcomb's paradox, and to solve them according to Evidential Decision Theory (\Cref{sec:evidential-dec-theory}).
We show that the Kleisli category of the \emph{Maybe} monad over a Markov category is a partial Markov category (\Cref{th:maybe-subdistributions}).

Finally, in \Cref{sec:observations-conditionals}, we provide a calculus for reasoning with exact conditioning (\Cref{def:exact-conditioning-cat}).
The construction from \Cref{prop:normalisation-of-programs,prop:conditionals-of-programs} allows us to express deterministic observations even in non-discrete cases like that of continuous probabilistic processes.

\subsection{Related work}
\paragraph*{Markov categories}
The categorical approach to probability theory based on Markov categories~\cite{fritz_2020} has led to the abstraction of various results from probability theory~\cite{fritz2020infinite,fritz2021probability,fritz2019probability,fitz2021deFinetti}.
Markov categories have been further applied for formalising Bayes networks and other kinds of probabilistic reasoning in categorical terms~\cite{fong_2013, jacobs2020logical, jacobs2021causal}.
The breadth of results and applications of Markov categories suggest that there can be an equally rich landscape for their partial counterpart.

\paragraph*{Categories of partial maps}
Partiality has long been studied in Computer Science and even categorical approaches to it date back to the works of Carboni~\cite{carboni1987bicategories}, Di Paola and Heller~\cite{di1987dominical}, Robinson and Rosolini~\cite{robinson1988categories}, and Curien and Obtu{\l}owicz~\cite{curien_obtulowicz_1989}.
However, our categorical structures are more related to more recent work on restriction categories~\cite{cockett02,cockett2003restriction}, and, in particular, cartesian restriction categories~\cite{cockett2007restriction} and discrete cartesian restriction categories~\cite{cockett2012range2,di_liberti_nester_2021}.

\paragraph*{Copy-discard categories}
Partial categories for probabilistic processes have been considered previously~\cite{panangaden1999, jacobs2018probability, cho_2019} but no comprehensive presentation was given.
Bayesian inversion for compact closed copy-discard categories has been studied by Coecke and Spekkens~\cite{coeckespekkens2012picturing}.
The relationship between that definition and \Cref{def:bayes-inversion} might involve normalisation.
Copy-discard categories have been applied to graph rewriting~\cite{corradini_gadducci_1999}, where they are called GS-monoidal.

\paragraph*{Categorical semantics of probabilistic programming}
There exists a vast literature on categorical semantics for probabilistic programming languages (for some related, see e.g.~\cite{hasegawa97, stay2013bicategorical, staton_et_al_2016, heunen_kammar_staton_yang_2017, ehrhard2017measurable, dahlqvist19semantics, vakar2019domain}).
However, while the internal language of Markov categories has been studied, the notion of partial Markov category and its diagrammatic syntax have remained unexplored.
Stein~\cite{stein2021compositional, stein_thesis_2021} has recently presented the $\mathbf{Cond}$ construction for exact conditioning, which could be related to our construction of a partial Markov category of \constrainedProcesses{} in \Cref{def:exact-conditioning-cat} via normalisation.%

\paragraph*{Evidential Decision Theory}
Existing formalisations of decision theories exist mainly in philosophical terms~\cite{lewis1981causal, joyce1999foundations, greaves2006justifying, ahmed_2014, greaves2013epistemic, yudkowsky_soares_2017}.
We provide a categorical formalisation of Evidential Decision Theory.
 \section{Preliminaries}\label{sec:preliminaries}
Symmetric monoidal categories are an algebra of processes that compose sequentially and in parallel.
They possess a convenient sound and complete syntax in terms of string diagrams~\cite{joyal91}.
In particular, copy-discard categories and the more specialized Markov categories allow us to reason about probabilistic processes.
In this section, we introduce the categorical approach to probability theory.

\subsection{Copy-discard categories}
Copy-discard categories is the name we give to any category where each object has a uniform comonoid structure (\Cref{diagram-copy-discard}).
The comultiplication is what we call the ``copy'' and the counit is what we call the ``discard'': they have the same operations as cartesian monoidal categories, but neither is assumed to be natural.
Copy-discard categories have been called GS-monoidal categories when applied to graph rewriting~\cite{corradini_gadducci_1999} and CD categories when applied to non-normalised probabilistic processes~\cite{cho_2019}. See~\cite[Remark 2.2]{fritz_liang_2022} for a history of the term.

\begin{figure}[!ht]
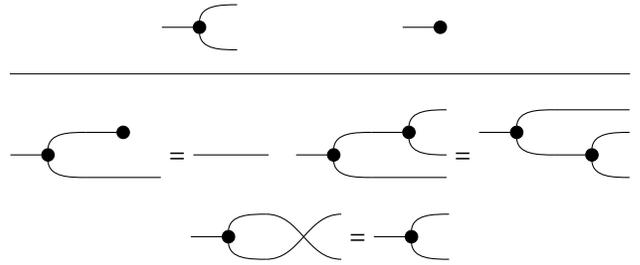

  \cocommutativecomonoidAxiomsFig{}
  \caption{Generators and axioms of a cocommutative comonoid.}\label{diagram-copy-discard}
\end{figure}
\begin{definition}
  A \emph{\defining{linkcopy}{copy}-\defining{linkdiscard}{discard} category} is a symmetric monoidal category \(\cat{C}\) where every object \(X \in \cat{C}\) has a cocommutative comonoid structure \((X, \cp_X, \discard_X)\) (\Cref{diagram-copy-discard}) and this structure is uniform: \(\cp_{X \tensor Y} = (\cp_X \tensor \cp_Y) \comp (\id{X} \tensor \swap[X,Y] \tensor \id{Y})\), \(\cp_{\monoidalunit} = \id{}\), \(\discard_{X \tensor Y} = \discard_X \tensor \discard_Y\), and \(\discard_{\monoidalunit} = \id{}\)\footnote{We omit associators and unitors to avoid clutter. By the coherence theorem of monoidal categories~\cite{macLane1971}, associators and unitors can be recovered uniquely.}.
\end{definition}

Copying and discarding are not required to be natural: only some of the morphisms will be copyable or discardable.
We call these \emph{deterministic} and \emph{total}, respectively.

\begin{definition}\label{def:total_deterministic}
  A morphism \(f \colon X \to Y\) in a copy-discard category is called \emph{deterministic} if \(f \comp \cp_{Y} = \cp_{X} \comp (f \tensor f)\) (\Cref{diagram-total-deterministic}, left); and \emph{total} if \(f \comp \discard_{Y} = \discard_{X}\) (\Cref{diagram-total-deterministic}, right).
  \begin{figure}[h!]
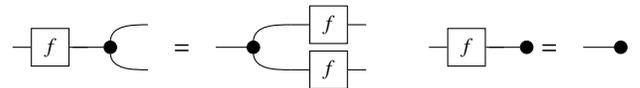

    \begin{align*}
      \copynaturalFig{} & \discardnaturalFig{}
    \end{align*}
    \caption{Deterministic and total morphisms.}\label{diagram-total-deterministic}
\end{figure}
\end{definition}

\subsection{Markov categories}
Probability theory requires more structure than the one given by copy-discard monoidal categories. Explicitly, it is usually assumed that a category that encodes a theory of probability will have a notion of \emph{conditional}~\cite{cho_2019,fritz_2020}.

Markov categories~\cite{fritz_2020} have been defined as copy-discard categories where the counit is moreover natural.
The Markov categories better suited for probability theory are those that have \emph{conditionals}.
We decide to call \emph{Markov categories} only to those with conditionals. The purpose of this slight change of convention is to make the parallel with cartesian categories more explicit: Markov categories are cartesian categories with a weaker splitting.

\begin{definition}
  A copy-discard category \(\cat{C}\) has \emph{conditionals} if, for every morphism \(f \colon X \to Y_{1} \tensor Y_{2}\), there are \(c \colon Y_{1} \tensor X \to Y_{2}\) and \(m \colon X \to Y_{1}\) such that
  \(f = \cp \dcomp (m \tensor \id{}) \dcomp (\cp \tensor \id{}) \dcomp (\id{} \tensor c)\), i.e.~they satisfy the equation in \Cref{fig:conditionals}.
\end{definition}

In this situation, $c \colon Y_1 \tensor X \to Y_2$ is a \emph{conditional} of $f$ with respect to $Y_1$ and $m \colon X \to Y_1$ is a \emph{marginal} of $f$ on $Y_1$.
Note that, in general, conditionals and marginals are not unique.
\Cref{prop:unique-conditionals-collapse} investigates the consequences of unique conditionals.

\begin{definition}
  A \emph{Markov category} is a copy-discard category with conditionals where all morphisms are total (\Cref{diagram-markov-split}).
\end{definition}
\begin{figure}[h!]
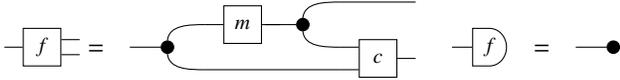

  \begin{align*}
    \conditionalsdefFig{}&\discardfactorisationFig{}
  \end{align*}
  \caption{Markov category with conditionals.}\label{diagram-markov-split}
\end{figure}

\begin{remark}[Notation]\label{rem:conditional-composition}
  For \(m \colon X \to A\) and \(c \colon A \tensor X \to B\), we indicate with \(m \condcomp c \colon X \to A \tensor B\) the \defining{linkcondcomp}{composition} of a marginal with its conditional: \(m \condcomp c \defn \cp_{X} \dcomp (m \tensor \id{X}) \dcomp (\cp_{A} \tensor \id{X}) \dcomp (\id{A} \tensor c)\), as shown in \Cref{diagram-markov-split}, left.
\end{remark}

\subsection{Some Markov categories} 

The canonical example of Markov category is the monoidal Kleisli category of the finitary distribution monad~\cite{fritz_2020}.

\begin{definition}\label{def:finitary-distributions}
  A \emph{finite support distribution} \(\sigma\) on a set \(X\) is a function \(\sigma \colon X \to [0,1]\) such that the set \(\{x \in X \st \sigma(x) \neq 0\}\) is finite and \(\sum_{x \in X} \sigma(x) = 1\).
  We indicate the set of finite support distributions on a set \(X\) as \(\distr(X)\).
  The mapping \(\distr\) can be extended to a functor \(\distr \colon \Set \to \Set\) and to a monad, called the \defining{linkdistr}{\emph{finitary distribution monad}}.
  For a function \(f \colon X \to Y\), the function \(\distr(f)\) associates to a distribution \(\sigma \in \distr(X)\) a distribution \(\tau \in \distr(Y)\) defined by \(\tau(y) \defn \sum_{x \st f(x) = y} \sigma(x)\).
  The monad multiplication \(\mu_{X} \colon \distr(\distr(X)) \to \distr(X)\) associates to \(f \in \distr(\distr(X))\) the distribution \(g \in \distr(X)\) given by \(g(x) \defn \sum_{\sigma \in \distr(X)}\sigma(x) \cdot f(\sigma)\).
  The monad unit associates to each \(x \in X\) the Dirac delta \(\dirac{x}\).
  The finitary distribution monad is monoidal with structural transformation \(\laxator_{X,Y} \colon \distr(X) \tensor \distr(Y) \to \distr(X \tensor Y)\) given by \(\laxator_{X,Y}(\sigma,\tau) (x,y) \defn \sigma(x) \cdot \tau(y)\).
\end{definition}

The Kleisli category of \(\distr\) is a copy-discard category with conditionals. %
The copy-discard structure is lifted from \(\Set\) by post-composing with the unit of the finitary distribution monad \(\distr\).
A morphism \(f \colon X \to Y\) in \(\kleisli{\distr}\) represents a stochastic channel: we interpret the value of \(f(x)\) in \(y \in Y\) as the probability of \(y\) given \(x\) according to the channel \(f\), and we indicate it as \(f(y \given x)\).

\begin{remark}
  The definition of composition in \(\kleisli{\distr}\) and its copy-discard structure allow an intuitive reading of a string diagram in terms of its components.
  The value of a morphism is obtained by multiplying the values of all the components and summing on the wires that are not inputs or outputs.
  For example, the morphism
  \[\morphismcopydiscardExampleFig{}\] 
  evaluates to the formula 
  $$f (z_{1}, z_{2} \given x) = \sum_{y \in Y} g(y \given x) \cdot h_{1}(z_{1} \given y) \cdot h_{2}(z_{2} \given y).$$
\end{remark}

Measurable maps between measurable spaces %
form a copy-discard category.
However, this category does not have conditionals. Instead, we consider its subcategory on standard Borel spaces, which has conditionals and is a Markov category~\cite{fritz_2020}.

\begin{definition}
  The \defining{linkborelstoch}{category} \(\BorelStoch\) has standard Borel spaces \((X,\sigma_{X})\), where \(X\) is a set and \(\sigma_{X}\) is a \(\sigma\)-algebra on \(X\), as objects.
  A morphism \(f \colon (X,\sigma_{X}) \to (Y,\sigma_{Y})\) is a function \(f \colon \sigma_{Y} \times X \to [0,1]\) such that, for each \(T \in \sigma_{Y}\), \(f(T \given -) \colon X \to [0,1]\) is a measurable function, and, for each \(x \in X\), \(f(- \given x) \colon \sigma_{Y} \to [0,1]\) is a probability measure.
  The composition of \(f \colon (X, \sigma_{X}) \to (Y, \sigma_{Y})\) and \(g \colon (Y, \sigma_{Y}) \to (Z, \sigma_{Z})\) and the identity \(\id{X} \colon (X, \sigma_{X}) \to (X, \sigma_{X})\) are given by
  \begin{align*}
    (f \dcomp g) (T \given x) &\defn \int_{y \in Y} g(T \given y) \cdot f(dy \given x),\\
    \id{X} (T \given x) &\defn \begin{cases} 1 & \text{ if } x \in T, \\ 0 & \text{ otherwise. }\end{cases}
  \end{align*}
  The category \(\BorelStoch\) is monoidal.
  The monoidal product is defined on objects by \((X, \sigma_{X}) \tensor (Y, \sigma_{Y}) \defn (X \times Y, \sigma_{X} \times \sigma_{Y})\) and on morphisms by
  \[(f \tensor f')(T \given x,x') \defn \int_{(y,y') \in T} f(dy \given x) \cdot f'(dy' \given x').\]
  The monoidal unit is the one-element set.
\end{definition}

\begin{definition}
  The \defining{linkgiry}{Giry functor} \(\Giry \colon \Meas \to \Meas\) assigns to a set \(X\) the set of probability measures on it.%
  On morphisms, \(f \colon (X, \sigma_{X}) \to (Y, \sigma_{Y})\), it is defined by
  \(\Giry(f) (T \given \sigma) \defn \sigma(f^{-1}(T))\).
  This functor is a monad~\cite{giry82:categorical}: its unit \(\monadunit_{(X,\sigma_{X})} \colon x \mapsto \dirac{x}\) that associates to each \(x \in X\) the Dirac distribution at \(x\); its multiplication is \(\monadmultiplication_{(X,\sigma_{X})} (T \given p) \defn \int_{\tau \in \Giry(X,\sigma_{X})} \tau(T) \cdot p(d\tau)\).
  The Giry monad is monoidal with a structural transformation analogous to that of the finitary distribution monad: \(\laxator_{X,Y}(\sigma,\tau) (S,T) \defn \sigma(S) \cdot \tau(T)\).
\end{definition}

\begin{remark}
  The category \(\BorelStoch\) can also be seen as the Kleisli category of the Giry monad \(\Giry \colon \Meas \to \Meas\) restricted to standard Borel spaces~\cite{giry82:categorical,fritz_2020}.
\end{remark}

\subsection{Subdistributions}\label{ex:non-example-markov}

A \emph{subdistribution} \(\sigma\) over \(X\) is a distribution whose total probability is allowed to be less than \(1\)~\cite{jacobs2018probability,cho_2019}.
In other words, it is a distribution over \(\maybe[X]\).
This means that a morphism \(f \colon X \to Y\) in \(\kleisli{\subdistr}\) represents a stochastic channel that has some probability of failure.

The symmetric monoidal Kleisli category of the finitary subdistribution monoidal monad, \(\subdistr\), is the main example for (discrete) partial Markov categories.
It is the semantic universe where we compute the solutions to the decision problems in \Cref{sec:evidential-dec-theory}.

\begin{definition}
  A \emph{finite support subdistribution} on a set \(X\) is a function \(\sigma \colon X \to [0,1]\) such that the set \(\{x \in X \st \sigma(x) \neq 0\}\) is finite and \(\sum_{x \in X} \sigma(x) \leq 1\).
  We indicate the set of subdistributions on a set \(X\) as \(\subdistr(X)\).
  The mapping \(\subdistr\) can be extended to a functor \(\subdistr \colon \Set \to \Set\) and to a monad, called the \defining{linksubdistr}{\emph{finitary subdistribution monad}}.
  For a function \(f \colon X \to Y\), \(\subdistr(f)\) is defined by
  \[\subdistr(f)(y \given \sigma) \defn \sum_{f(x)=y} \sigma(x),\]
  for any subdistribution \(\sigma \in \subdistr(X)\) and any element \(y \in Y\).

  The monad multiplication \(\monadmultiplication_X \colon \subdistr(\subdistr(X)) \to \subdistr(X)\) and the monad unit \(\monadunit_X \colon X \to \subdistr(X)\) are defined analogously to those of \(\distr\) (\Cref{def:finitary-distributions}).
  Explicitly, \(\monadmultiplication\) is defined by
  \[\monadmultiplication_X (x \given p) \defn \sum_{\sigma \in \subdistr(X)} p(\sigma) \cdot \sigma(x);\]
  the monad unit \(\monadunit\) is defined by \(\monadunit_X(x \given x') \defn \dirac{x'}(x)\),
  where \(\dirac{x} \in \distr(X)\) is the Dirac distribution that assigns probability \(1\) to \(x\) and \(0\) to everything else.
\end{definition}

\begin{remark}
  The fact that \(\subdistr\) is a functor and a monad can be seen by the fact that there is a distributive law between the \emph{Maybe} monad \((\maybe)\) with the finitary \emph{distribution} monad \(\distr\) because the category \(\Set\) of sets and functions is distributive.
  This implies that \((\maybe)\) can be lifted to the Kleisli category \(\kleisli{\distr}\) and that there is a distributive law between \(\distr\) and \((\maybe)\).
  Their composition is the finitary subdistribution monad \(\subdistr = \distr(\maybe)\).
  The distributive law \(\distributivelaw \colon \maybe[\distr(-)] \to \distr(\maybe)\) is defined by
  \(\distributivelaw_{X}(\sigma) \defn \extenddomain{\sigma}\) and \(\distributivelaw_{X}(\bot) \defn \dirac{\bot}\),
  where \(\sigma \in \distr(X)\) and \(\extenddomain{\sigma}\) is \(\sigma\) extended to \(\maybe[X]\) by \(\extenddomain{\sigma}(\bot) \defn 0\).
  See the work of Jacobs for details~\cite[Section 4]{jacobs2018probability}.
\end{remark}

We can check that the Kleisli category of the subdistribution monad, \(\kleisli{\subdistr}\), has conditionals.
However, not every map is total, which prevents it from being a Markov category.

\begin{proposition}\label{prop:finitary-subdistributions}
  The Kleisli category of the finitary subdistribution monad, \(\kleisli{\subdistr}\), is a copy-discard category with conditionals.
\end{proposition}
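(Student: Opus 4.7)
The plan is to verify the two pieces of structure separately: the copy-discard structure, which comes for free from the monoidal Kleisli construction, and the existence of conditionals, which requires an explicit construction with a careful case analysis at the zeros of the marginal.

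For the copy-discard structure I would first observe that because \(\subdistr\) is a monoidal monad on the cartesian category \(\Set\), the Kleisli category \(\kleisli{\subdistr}\) inherits a symmetric monoidal structure with the monoidal product on objects given by the cartesian product of sets. The cocommutative comonoid structure on each object \(X \in \Set\) given by the diagonal \(\Delta_X \colon X \to X \times X\) and the terminal map \(!_X \colon X \to 1\) lifts to \(\kleisli{\subdistr}\) by post-composing with the monad unit \(\monadunit\). Uniformity of copy and discard then reduces to the analogous identities for diagonals and terminal maps in \(\Set\), together with naturality of \(\monadunit\) and the fact that the laxator \(\laxator_{X,Y}\) factors unit applications correctly — this is the standard Kleisli-lifting argument, and I would only sketch it.

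The main content is the construction of conditionals. Given a Kleisli morphism \(f \colon X \to \subdistr(Y_1 \times Y_2)\), whose value at \((y_1, y_2)\) given \(x\) we write \(f(y_1, y_2 \given x)\), I would define the marginal \(m \colon X \to \subdistr(Y_1)\) by
\[ m(y_1 \given x) \defn \sum_{y_2 \in Y_2} f(y_1, y_2 \given x), \]
which is a finite-support subdistribution since \(\sum_{y_1} m(y_1 \given x) = \sum_{y_1,y_2} f(y_1, y_2 \given x) \leq 1\). Then I define the conditional \(c \colon Y_1 \times X \to \subdistr(Y_2)\) by case analysis:
\[ c(y_2 \given y_1, x) \defn \begin{cases} f(y_1, y_2 \given x)/m(y_1 \given x) & \text{if } m(y_1 \given x) > 0, \\ 0 & \text{otherwise.} \end{cases} \]
In the first case the values sum to \(1\) in \(y_2\) so \(c(- \given y_1, x)\) is a distribution; in the second case it is the zero subdistribution; both have finite support because \(f\) does.

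It remains to verify that \(m \condcomp c = f\), i.e.\ that \(m(y_1 \given x) \cdot c(y_2 \given y_1, x) = f(y_1, y_2 \given x)\) as subdistributions on \(Y_1 \times Y_2\). This is immediate when \(m(y_1 \given x) > 0\), by construction. When \(m(y_1 \given x) = 0\), the left-hand side vanishes, and non-negativity of \(f\) together with \(\sum_{y_2} f(y_1, y_2 \given x) = m(y_1 \given x) = 0\) forces \(f(y_1, y_2 \given x) = 0\) for every \(y_2\). The only subtle point — essentially the only obstacle — is precisely the handling of this zero case; once it is dealt with by choosing the zero subdistribution as the arbitrary value of \(c\), the defining equation of conditionals is satisfied pointwise. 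Unfolding the copy-discard structure of \(\kleisli{\subdistr}\) on the left-hand side of the conditional equation reduces to exactly this pointwise product, completing the verification.
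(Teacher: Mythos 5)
Your proposal is correct and follows essentially the same route as the paper's proof: the same marginal \(m(y_1 \given x) = \sum_{y_2} f(y_1,y_2\given x)\), the same conditional with division by the marginal where it is nonzero, and the same resolution of the zero case by assigning the always-failing (zero) subdistribution, which is exactly the paper's choice of putting all mass on \(\bot\). The only cosmetic difference is that the paper additionally spells out the \(\bot\)-component of the verification, which is redundant once pointwise equality on \(Y_1 \times Y_2\) is established.
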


The analogue of \(\kleisli{\subdistr}\) for \(\BorelStoch\) is \(\subBorelStoch\).
A morphism \(f \colon X \to Y\) in \(\subBorelStoch\) represents a stochastic channel that has some probability of failure, i.e. \(f(x)\) is a subprobability measure.

\begin{definition}
  A \emph{subprobability measure} \(p\) on a measurable space \((X,\sigma_{X})\) is a measurable function such that \(p(X) \leq 1\). %
\end{definition}

\begin{definition}[\cite{panangaden1999}]
  The \defining{linksubborelstoch}{category} \(\subBorelStoch\) has standard Borel spaces \((X,\sigma_{X})\), where \(X\) is a set and \(\sigma_{X}\) is a \(\sigma\)-algebra on \(X\), as objects.
  A morphism \(f \colon (X,\sigma_{X}) \to (Y,\sigma_{Y})\) is a function \(f \colon \sigma_{Y} \times X \to [0,1]\) such that, for each \(T \in \sigma_{Y}\), \(f(T \given -) \colon X \to [0,1]\) is a measurable function, and, for each \(x \in X\), \(f(- \given x) \colon \sigma_{Y} \to [0,1]\) is a subprobability measure.
  Identities and composition are defined analogously to those in \(\BorelStoch\).
\end{definition}

\begin{remark}\label{rem:subBorelStoch-kleisli-cat}
  The category \(\subBorelStoch\) arises as the Kleisli category of \defining{linksubgiry}{Panangaden's monad}~\cite{panangaden1999} \(\subGiry\).
  Its underlying functor is the composition of the Giry functor and the Maybe functor: \(\subGiry \defn \Giry(\maybe)\).
  There is a candidate distributive law \(\distributivelaw \colon \maybe[\Giry(-)] \to \Giry(\maybe)\) defined by \(\distributivelaw_{X}(\sigma) \defn \extenddomain{\sigma}\) and \(\distributivelaw_{X}(\bot) \defn \dirac{\bot}\), where \(\extenddomain{\sigma}\) is the extension of \(\sigma\) to \(\maybe[X]\) by \(\extenddomain{\sigma}(\{\bot\}) \defn 0\).
  \Cref{th:distributive-law-giry-maybe} shows that this is indeed a distributive law between the Giry monad and the Maybe monad.
\end{remark}

\begin{proposition}\label{th:distributive-law-giry-maybe}
  There is a distributive law between the Giry monad and the Maybe monad: \(\distributivelaw \colon \maybe[\Giry(-)] \to \Giry(\maybe)\).
\end{proposition}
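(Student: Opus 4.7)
The plan is to verify that $\distributivelaw$ is natural and satisfies the four Beck axioms for a distributive law of $\maybe$ over $\Giry$. Since every inhabitant of a nested expression $\maybe[\Giry(X)]$ or $\maybe[\maybe[\Giry(X)]]$ is either an honest measure $\sigma \in \Giry(X)$ or a $\bot$ introduced at some level, each axiom reduces to a small case split, with the only analytic content concentrated in a single verification.

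For naturality, given a measurable $f \colon X \to Y$, both composites send $\bot \mapsto \dirac{\bot}$ and $\sigma \mapsto \extenddomain{\Giry(f)(\sigma)}$, since extension-by-zero commutes with push-forward along $\maybe[f]$. The two unit axioms are equally direct: $\distributivelaw_X(\monadunit^{\maybe}_{\Giry(X)}(\sigma)) = \extenddomain{\sigma} = \Giry(\monadunit^{\maybe}_X)(\sigma)$ by the definition of the extension; and $\distributivelaw_X \circ \maybe[\monadunit^{\Giry}_X]$ sends $x$ to $\extenddomain{\dirac{x}} = \dirac{x}$ and $\bot$ to $\dirac{\bot}$, agreeing with $\monadunit^{\Giry}_{\maybe[X]}$. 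The $\maybe$-multiplication axiom, checked pointwise on $\maybe[\maybe[\Giry(X)]]$ whose inhabitants are $\bot$, $\text{Just}(\bot)$, or $\text{Just}(\text{Just}(\sigma))$, collapses in each case to $\dirac{\bot}$ or $\extenddomain{\sigma}$ on both sides, using only that the extra $\bot$-stratum of $\extenddomain{\extenddomain{\sigma}}$ carries measure zero.

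The main obstacle is the $\Giry$-multiplication axiom $\distributivelaw_X \circ \maybe[\monadmultiplication^{\Giry}_X] = \monadmultiplication^{\Giry}_{\maybe[X]} \circ \Giry(\distributivelaw_X) \circ \distributivelaw_{\Giry(X)}$. On input $\bot$, both composites reduce to $\dirac{\bot}$ by unfolding. On input $\text{Just}(\Sigma)$ with $\Sigma \in \Giry(\Giry(X))$, let $\tau \defn \monadmultiplication^{\Giry}_X(\Sigma)$; the left-hand side equals $\extenddomain{\tau}$, while the right-hand side, obtained by tracing the push-forward along $\distributivelaw_X$ followed by $\monadmultiplication^{\Giry}_{\maybe[X]}$, is the measure on $\maybe[X]$ that sends $T \in \sigma_{\maybe[X]}$ to $\int_{\rho \in \Giry(X)} \extenddomain{\rho}(T) \, d\Sigma(\rho)$. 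Evaluating on $T = \{\bot\}$ gives $0 = \extenddomain{\tau}(\{\bot\})$, and on $T = \text{Just}(A)$ gives $\int_{\rho} \rho(A) \, d\Sigma(\rho) = \tau(A) = \extenddomain{\tau}(\text{Just}(A))$, directly from the defining integral of $\monadmultiplication^{\Giry}$. Since sets of these two forms generate $\sigma_{\maybe[X]}$, the measures agree. The only analytic ingredients in the whole proof are this single identity together with the measurability of $\rho \mapsto \extenddomain{\rho}(T)$, which is inherited from the standard measurability of $\rho \mapsto \rho(A)$ used in constructing the Giry monad.
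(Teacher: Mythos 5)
Your proof is correct, but it proceeds by a genuinely different route than the paper's. You verify the distributive law head-on: naturality and the four Beck axioms, with all the analytic content concentrated in the compatibility with $\monadmultiplication^{\Giry}$, which you settle by checking the identity $\int_{\rho}\extenddomain{\rho}(T)\,d\Sigma(\rho)=\extenddomain{\monadmultiplication^{\Giry}(\Sigma)}(T)$ on the sets $T=\{\bot\}$ and $T\subseteq X$ that generate $\sigma_{\maybe[X]}$, together with the measurability of $\rho\mapsto\extenddomain{\rho}(T)$ that makes $\Giry(\distributivelaw_X)$ well defined. The paper instead takes as input Panangaden's result that the composite $\Giry(\maybe)=\subGiry$ is already a monad; it then only checks that $\distributivelaw$ is natural (\Cref{lemma:distributive-law-natural}) and that the unit and multiplication of $\subGiry$ factor as $\monadunit_{1}\tensor\monadunit_{2}$ and $(\id{}\tensor\distributivelaw\tensor\id{})\dcomp(\monadmultiplication_{1}\tensor\monadmultiplication_{2})$, and invokes the abstract \Cref{prop:composing-monads} --- proved by string-diagrammatic reasoning in the category of endofunctors --- to deduce that these data force the Beck axioms. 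What each approach buys: yours is self-contained and in particular reproves that $\subGiry$ is a monad, at the price of an explicit measure-theoretic verification; the paper's outsources that verification to the literature and packages the axiom-chasing into a reusable lemma that applies verbatim to the $\distr$ and $(\maybe)$ case as well. Your case analyses for naturality, the two unit axioms, and the $\maybe$-multiplication axiom all check out, and the generating-class argument for the final axiom is sound since every measurable subset of $\maybe[X]$ is a disjoint union of a set in $\sigma_{X}$ and possibly $\{\bot\}$.
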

\begin{proof}
  The composite functor \(\Giry(\maybe) = \subGiry\) is a monad~\cite{panangaden1999}, with multiplication and unit given by compositions of the multiplications and units of \(\Giry\) and \((\maybe)\).
  For the units, this is easy to see as the unit of \(\Giry(\maybe)\) is just the inclusion of the unit of \(\Giry\), and the inclusion \(\Giry \to \Giry(\maybe)\) is given by the unit of \((\maybe)\).
  For the multiplications, we can check that \(\monadmultiplication = (\id{} \tensor \distributivelaw \tensor \id{}) \dcomp (\monadmultiplication_{1} \tensor \monadmultiplication_{2})\).
  In fact, let \(p \in \Giry(\maybe[\Giry(\maybe[X])])\).
  Then,
  \begin{align*}
    & p \quad \overset{(\id{} \tensor \distributivelaw_{X} \tensor \id{})}{\mapsto}\quad p(\extenddomain{(-)}) \quad \overset{\monadmultiplication_{1} \tensor \monadmultiplication_{2}}{\mapsto} \quad \int_{\tau \in \Giry(\maybe[X])} \tau(-) \cdot p(d\tau) \ ,
  \end{align*}
  which corresponds with the definition of the multiplication of \(\Giry(\maybe)\).
  The components defined in \Cref{rem:subBorelStoch-kleisli-cat} form a natural transformation. %
  These conditions already imply that there is in fact a distributive law between \(\Giry\) and \((\maybe)\).%
\end{proof}

\begin{remark}\label{rem:probability-of-failure}
  In \(\kleisli{\subdistr}\), the composition of a morphism \(f \colon X \to Y\) with the discard map \(\discard\) defines the validity, or probability of success of \(f\): \(f \dcomp \discard \ (\ast \given x) = \sum_{y \in Y} f(y \given x) = 1 - f(\bot \given x)\).
  If the probability of success \(f \dcomp \discard\) is deterministic (\Cref{fig:domain-of-definition}), then \(f(- \given x)\) can either certainly fail, \(f(\bot \given x) = 1\), or give a total distribution \(f(- \given x) \in \distr(Y)\).
  This means that \(f\) factors through the inclusion \(\distr(Y)+1 \into \subdistr\) and \(f \dcomp \discard\) corresponds to the domain of definition of \(f\).

  Following~\cite{cockett02}, we keep this nomenclature in any copy-discard category: we call the morphism \(f \dcomp \discard \colon X \to \monoidalunit\) the \emph{probability of success} of \(f\), and, when it is deterministic, we call it the \emph{domain of definition} of \(f\).
  \begin{figure}[h!]
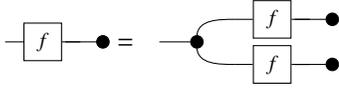

    \[\domainofdefinitionFig{}\]
    \caption{When the probability of success of \(f\) is deterministic, it gives the domain of definition of \(f\).}\label{fig:domain-of-definition}
  \end{figure}
\end{remark}

The categories \(\kleisli{\subdistr}\) and \(\subBorelStoch\) are copy-discard categories with conditionals in which not all morphisms are total.
This means that they cannot be Markov categories.
We claim that totality is not essential for modelling stochastic processes and that dropping this assumption allows us to model \emph{observations}.
 \section{Partial Markov categories}\label{sec:partial-markov}
Cartesian restriction categories extend cartesian categories encoding \emph{partiality}: a map may not be defined on all its inputs and fail when evaluated on inputs outside its domain of definition.
We introduce \emph{partial Markov categories} as a similar extension of Markov categories to encode partial stochastic processes, i.e.\ stochastic processes that have a probability of success on each one of their inputs.
Partiality is obtained by dropping naturality of the discard maps, i.e.\ by allowing morphisms to be non-total.

In a Markov category all morphisms, and in particular all conditionals, are total.
This fact implies that the marginal of a morphism \(f \colon X \to A \tensor B\) on \(A\) is obtained by discarding the \(B\) output.
In partial Markov categories, we would like to drop the totality assumption while still obtaining marginals by discarding one of the outputs.
We could be tempted to impose that conditionals be total morphisms, but this is not possible even in \(\kleisli{\subdistr}\): the ``always fail'' map \(\bot \colon A \to 0\) cannot have a total conditional.
Therefore, we impose a weaker condition which we call \emph{quasi-totality}.

\begin{definition}\label{def:quasi-total-morphism}
  A morphism \(f \colon X \to Y\) is \emph{quasi-total} if \(f = \cp \dcomp (f \tensor (f \dcomp \discard))\).
  \[\quasitotalmapdefFig{}\]
\end{definition}

All deterministic morphisms are quasi-total and their probability of success (\Cref{rem:probability-of-failure}) gives their domain of definition as in~\cite{cockett02}.
\Cref{prop:characterisation-quasi-total} precises the relationship between quasi-totality and the domain of definition.

\begin{definition}\label{def:partial-markov-cat}
  A \emph{partial Markov category} is a copy-discard category with quasi-total conditionals.
\end{definition}

\begin{example}
  Any Markov category \(\cat{C}\) is an example of a partial Markov category.
  In fact, a Markov category is a partial Markov category where all maps are total.
\end{example}

\begin{example}
  By \Cref{prop:finitary-subdistributions}, the category \(\kleisli{\subdistr}\) is a partial Markov category because the conditionals defined in its proof are quasi-total.
\end{example}

\begin{proposition}\label{prop:characterisation-quasi-total}
  In a partial Markov category, a morphism \(f \colon X \to Y\) is quasi-total if and only if its probability of success, \(f \dcomp \discard\), is deterministic (\Cref{fig:domain-of-definition}).
\end{proposition}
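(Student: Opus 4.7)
The forward implication is immediate: post-composing the defining equation $f = \cp \dcomp (f \tensor (f \dcomp \discard))$ with $\discard$ and applying the counit axioms together with uniformity ($\cp_{\monoidalunit} = \id{}$) yields $f \dcomp \discard = \cp \dcomp ((f \dcomp \discard) \tensor (f \dcomp \discard))$, which is exactly the defining equation of $f \dcomp \discard$ being deterministic.

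For the converse, the plan is to write $s := f \dcomp \discard$ and to introduce the \emph{restriction idempotent} $\bar f := \cp \dcomp (\id{} \tensor s)$. A short functoriality argument shows that $\cp \dcomp (f \tensor s) = \bar f \dcomp f$, so the quasi-totality equation reduces to the factorisation statement $f = \bar f \dcomp f$. Under the hypothesis that $s$ is deterministic, $\bar f$ is itself idempotent: unfolding $\bar f \dcomp \bar f$ and using coassociativity of copy collapses the two copies of $s$ into one via $\cp \dcomp (s \tensor s) = s$.

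The core step is to factor $f$ through $\bar f$. I would apply the conditional splitting axiom to $f$ viewed as a morphism $f \colon X \to \monoidalunit \tensor Y$, obtaining a scalar marginal $m \colon X \to \monoidalunit$ and a quasi-total conditional $c \colon X \to Y$ (after $\monoidalunit \tensor X \cong X$ and using $\cp_{\monoidalunit} = \id{}$) with $f = \cp \dcomp (m \tensor c)$. Post-composing this with $\discard$ gives $s = \cp \dcomp (m \tensor t)$ where $t := c \dcomp \discard$. Now expand $c = \cp \dcomp (c \tensor t)$ by the quasi-totality of $c$, and use coassociativity and cocommutativity of copy to rearrange the three channels $m$, $c$, $t$ so that $m$ and $t$ recombine into the single scalar $s$. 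The outcome is $f = \cp \dcomp (s \tensor c) = \bar f \dcomp c$, and the idempotency of $\bar f$ then yields $\bar f \dcomp f = \bar f \dcomp \bar f \dcomp c = \bar f \dcomp c = f$, which is the desired quasi-totality.

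The delicate point is this final rearrangement: the extra copy of $t$ supplied by the quasi-totality of $c$ is exactly what makes it possible to absorb the marginal $m$ into $s$, and the fact that the conditional $c$ is only quasi-total (rather than total, as it would be in a Markov category) is precisely why one is forced to use the partial-Markov splitting rather than a naturality argument for $\discard$.
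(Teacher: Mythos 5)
Your proof is correct and follows essentially the same route as the paper's: the forward direction by post-composing the quasi-totality equation with $\discard$, and the converse by splitting $f$ along its quasi-total conditional on the monoidal unit wire and using determinism of $f \dcomp \discard$ to collapse the two copies of the failure scalar. The only cosmetic difference is that you package the argument via the restriction idempotent $\bar{f}$ and inline the content of \Cref{prop:QTconditionals-give-nice-marginals} (showing that $f \dcomp \discard$ itself serves as the marginal) rather than citing it directly.
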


\begin{example}
  By instantiating \Cref{prop:characterisation-quasi-total} in \(\kleisli{\subdistr}\) and by the considerations in \Cref{rem:probability-of-failure}, a morphism \(f \colon X \to Y\) is quasi-total if and only if it factors through the inclusion \(\distr(X)+1 \into \subdistr(X)\).
\end{example}

After proving \Cref{prop:finitary-subdistributions}, one can conjecture that a similar procedure may exist in Kleisli categories of Maybe monads on other Markov categories.
We show that this is indeed the case.

\subsection{Kleisli categories of Maybe monads}
We show that Kleisli categories of Maybe monads on Markov categories have conditionals inherited by the base category.
This gives a recipe for constructing partial Markov categories from Markov categories with coproducts.
We first show the result for a monad \(\monad\), satisfying some extra conditions, on a Markov category.
These technical conditions will, indeed, be satisfied by the Maybe monad (\Cref{lemma:maybe-monad-satisfies-assumptions}).

\begin{proposition}\label{prop:partial-markov-from-markov}
  Let \(\monad \colon \cat{C} \to \cat{C}\) be a monoidal monad on a copy-discard category with conditionals \(\cat{C}\).
  Suppose that the structural transformation \(\laxator\) is a split epimorphism with section \(\laxatorsec\), that \(\monad \cp_{A} \dcomp \laxatorsec_{A,A} = \cp_{\monad A}\) and that, for every morphism \(f \colon \monad\monad(A) \tensor X \to \monad B\), \((\cp_{\monad A} \tensor \id{X}) \dcomp (\id{\monad A} \tensor \monad \monadunit_{A} \tensor \id{X}) \dcomp (\id{\monad A} \tensor f) \dcomp \monadmultiplication_{A,B} = (\cp_{\monad A} \tensor \id{X}) \dcomp (\id{\monad A} \tensor \monadunit_{\monad A} \tensor \id{X}) \dcomp (\id{\monad A} \tensor f) \dcomp \monadmultiplication_{A,B}\).
  Then, the Kleisli category \(\kleisli{\monad}\) has conditionals.
\end{proposition}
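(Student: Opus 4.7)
The plan is to construct conditionals in $\kleisli{\monad}$ by transferring conditionals from $\cat{C}$ through the section $\laxatorsec$ of the laxator. Given a Kleisli morphism $f\colon X \to Y_1 \tensor Y_2$, i.e.\ a $\cat{C}$-morphism $f\colon X \to \monad(Y_1 \tensor Y_2)$, I would first split its codomain in $\cat{C}$ by setting $\tilde{f} \defn f \dcomp \laxatorsec_{Y_1,Y_2}\colon X \to \monad Y_1 \tensor \monad Y_2$. Because $\cat{C}$ has conditionals, this factors as $\tilde{f} = m \condcomp c'$ for some $m\colon X \to \monad Y_1$ and some $c'\colon \monad Y_1 \tensor X \to \monad Y_2$ in $\cat{C}$. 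I would then take $m$ itself as the candidate Kleisli marginal, which already has the correct type $X \to Y_1$ in $\kleisli{\monad}$, and define the candidate Kleisli conditional as $c \defn (\monadunit_{Y_1} \tensor \id_X) \dcomp c'\colon Y_1 \tensor X \to \monad Y_2$.

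The heart of the proof is then to verify that the conditional composition $m \condcomp c$, interpreted in $\kleisli{\monad}$, equals $f$. Unfolding the Kleisli copy (which is $\cp_X \dcomp \monadunit$), the Kleisli tensor (which uses $\laxator$) and the Kleisli composition (which uses $\monadmultiplication$) expresses the left-hand side as a $\cat{C}$-morphism laden with interleaved $\monadunit$'s, $\laxator$'s and a $\monadmultiplication$. The three hypotheses play complementary roles. The identity $\monad \cp_A \dcomp \laxatorsec_{A,A} = \cp_{\monad A}$ matches the ``copy after $m$'' produced by the Kleisli copy (morally $m \dcomp \monad \cp_{Y_1}$) with the copy $m \dcomp \cp_{\monad Y_1}$ that appears in the $\cat{C}$-conditional decomposition of $\tilde{f}$, after splitting with $\laxatorsec$. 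The coherence condition on $\monadmultiplication$ is exactly what is needed to absorb the extra $\monadunit_{Y_1}$ inserted in the definition of $c$ into the subsequent monad multiplication, so that $c$ behaves as $c'$ inside the full composition. Finally, the section identity $\laxatorsec \dcomp \laxator = \id$, together with naturality of $\monadunit$ and the monad unit law, recovers $f = \tilde{f} \dcomp \laxator$ at the very end.

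The main obstacle is precisely this bookkeeping step: the Kleisli copy and composition introduce several instances of $\monadunit$ and $\laxator$ that must be rearranged against the $\monadmultiplication$, and it is not a priori obvious that they all cancel. The monad-multiplication hypothesis is tailor-made to swap $\monadunit_{\monad A}$ with $\monad \monadunit_A$ at the right position, which is what allows the monad unit axiom $\monad \monadunit \dcomp \monadmultiplication = \id$ to collapse a whole layer of $\monad$; without it, the double application of $\monad$ that arises from copying inside $\kleisli{\monad}$ and then Kleisli-composing would not reduce to the single layer of $\monad$ present in $f$.
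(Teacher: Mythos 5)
Your construction is exactly the one the paper uses: split the codomain with $\laxatorsec$, take conditionals in $\cat{C}$, keep $m$ as the Kleisli marginal, precompose $c'$ with $\monadunit_{Y_1}\tensor\id{}$ for the Kleisli conditional, and verify $m\condcomp c = f$ by unwinding the Kleisli structure, with the three hypotheses playing precisely the roles you describe (the copy hypothesis reconciling $\monad\cp$ with $\cp_{\monad Y_1}$, the multiplication hypothesis trading $\monad\monadunit$ for $\monadunit_{\monad Y_1}$ so the unit law can cancel it, and the section identity recovering $f$ at the end). This matches the paper's proof in both strategy and detail.
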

\begin{proof}[Proof sketch]
  Since \(\cat{C}\) has conditionals, there are a marginal \(m \colon A \to \monad A\) and a conditional \(c \colon \monad A \tensor X \to \monad B\) such that \(f \dcomp \laxatorsec_{A,B} = m \condcomp c\), where \(m \condcomp c\) is defined in \Cref{rem:conditional-composition} and \Cref{fig:conditionals}.
  We want to find a marginal \(m' \colon X \to \monad A\) and a conditional \(c' \colon A \tensor X \to \monad B\) such that \(m' \condcomp c' = f\) in \(\kleisli{\monad}\).
  Good candidates for the marginal and the conditional are \(m' \defn m\) and \(c' \defn (\monadunit_A \tensor \id{X}) \dcomp c\).
\end{proof}

The conditionals defined in the proof of \Cref{prop:partial-markov-from-markov} are not necessarily quasi-total.
If the conditionals in the base category \(\cat{C}\) satisfy some additional assumption, we show that conditionals can be chosen to be quasi-total.

\begin{lemma}\label{lemma:quasi-total-conditionals}
  Let \(\monad \colon \cat{C} \to \cat{C}\) be a monoidal monad on a Markov category satisfying the assumptions of \Cref{prop:partial-markov-from-markov}.
  If, for any morphism \(f \colon X \to A \tensor \monad B\) in \(\cat{C}\), we can choose a conditional \(c \colon A \tensor X \to \monad B\) such that \(\cp \dcomp (c \tensor (c \dcomp \monad \discard)) \dcomp \laxator = c \), then the conditionals defined in \Cref{prop:partial-markov-from-markov} are quasi-total.
\end{lemma}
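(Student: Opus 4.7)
The strategy is to unfold the quasi-totality condition for $c'$ in $\kleisli{\monad}$ into an equation in the base category $\cat{C}$, and then derive that equation by pre-composing the extra hypothesis on $c$ with the monad unit, using a determinism-like property of $\monadunit_A$ that follows from the assumptions of \Cref{prop:partial-markov-from-markov}.

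First, I would recall from the proof of \Cref{prop:partial-markov-from-markov} that the conditional produced there has the form $c' = (\monadunit_A \tensor \id_X) \dcomp c$, where $c \colon \monad A \tensor X \to \monad B$ is a conditional in $\cat{C}$ of $f \dcomp \laxatorsec_{A,B}$ with respect to $\monad A$. The extra hypothesis of the present lemma, applied with its ``$A$'' taken to be $\monad A$, lets us choose this particular $c$ so that
\[\cp_{\monad A \tensor X} \dcomp (c \tensor (c \dcomp \monad \discard_B)) \dcomp \laxator_{B,\monoidalunit} \;=\; c.\]

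Next, I would rewrite the Kleisli quasi-totality condition for $c'$ as an equation in $\cat{C}$. Unfolding the Kleisli copy as $\cp \dcomp \monadunit$, the Kleisli discard as $\discard \dcomp \monadunit$, the Kleisli tensor via $\laxator$, and Kleisli composition via $\monadmultiplication$, repeated use of naturality of $\monadunit$ and the triangle identity $\monadunit \dcomp \monadmultiplication = \id$ collapses everything to
\[c' \;=\; \cp_{A \tensor X} \dcomp (c' \tensor (c' \dcomp \monad \discard_B)) \dcomp \laxator_{B,\monoidalunit}\]
in $\cat{C}$. Substituting $c' = (\monadunit_A \tensor \id_X) \dcomp c$ and pre-composing the hypothesis on $c$ with $(\monadunit_A \tensor \id_X)$ produces this equation exactly, provided one can commute $(\monadunit_A \tensor \id_X)$ past the copy, that is, provided
\[(\monadunit_A \tensor \id_X) \dcomp \cp_{\monad A \tensor X} \;=\; \cp_{A \tensor X} \dcomp \bigl((\monadunit_A \tensor \id_X) \tensor (\monadunit_A \tensor \id_X)\bigr).\]
Because $\id_X$ is deterministic, this reduces to determinism of $\monadunit_A$ with respect to $\cp$.

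I would derive this determinism by chaining three facts: naturality of $\monadunit$ applied to $\cp_A$, the monoidal monad axiom $\monadunit_{A \tensor A} = (\monadunit_A \tensor \monadunit_A) \dcomp \laxator_{A,A}$, and the assumption $\monad \cp_A \dcomp \laxatorsec_{A,A} = \cp_{\monad A}$ from \Cref{prop:partial-markov-from-markov}. The main obstacle is that combining these three ingredients leaves a residual idempotent $\laxator_{A,A} \dcomp \laxatorsec_{A,A}$ sitting between $(\monadunit_A \tensor \monadunit_A)$ and the rest of the diagram; one must argue that this idempotent acts as the identity here, either by observing that the relevant factor already lies in the image of $\laxatorsec$ (so that $\laxatorsec \dcomp \laxator = \id$ takes over) or by direct verification for the monads of interest, in particular for the Maybe monad of \Cref{th:maybe-subdistributions} where $\monadunit$ is manifestly deterministic. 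This is the technical step I would defer to the appendix; once it is settled, substituting back gives the desired quasi-totality of $c'$.
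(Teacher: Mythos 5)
Your proposal matches the paper's proof: both unfold the Kleisli quasi-totality condition for \(c' = (\monadunit_A \tensor \id{X}) \dcomp c\) into an equation in \(\cat{C}\) using unitality, naturality and monoidality of \(\monadmultiplication\) and \(\monadunit\) together with \(\laxatorsec \dcomp \laxator = \id{}\), then commute \(\monadunit_A \tensor \id{X}\) past the copy and apply the hypothesis on \(c\). The only divergence is at the step you flag as the technical obstacle: the paper simply cites ``determinism of \(\monadunit\)'' there without the derivation you sketch, so your extra care about the residual \(\laxator \dcomp \laxatorsec\) is, if anything, more thorough than the printed argument.
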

\begin{proof}[Proof sketch]
  By spelling out the definition of quasi-total morphism (\Cref{def:quasi-total-morphism}) in \(\kleisli{\monad}\), we show that it becomes the condition in the statement.
\end{proof}

We now show that \Cref{prop:partial-markov-from-markov} holds for the Maybe monad.
The proof relies on the following result from~\cite{fritz_2020}.

\begin{lemma}[{\cite[Remark 11.29]{fritz_2020}}]\label{prop:markov-comonoid-is-copy}
  In a Markov category, any comonoid coincides with the copy-discard structure.
\end{lemma}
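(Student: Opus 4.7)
The plan is to exploit the two defining features of a Markov category: every morphism is total, and every morphism admits a conditional decomposition. Let $(X, d, e)$ be an arbitrary comonoid on an object $X$.

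First, I would identify the counit with the discard. Since $e \colon X \to \monoidalunit$ is a morphism in a Markov category, it is total, so $e \dcomp \discard_{\monoidalunit} = \discard_X$; by uniformity of the copy-discard structure $\discard_{\monoidalunit} = \id_{\monoidalunit}$, hence $e = \discard_X$.

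With $e$ identified, the counit laws of the comonoid become $d \dcomp (\discard \tensor \id) = \id_X = d \dcomp (\id \tensor \discard)$, i.e.\ both marginals of $d \colon X \to X \tensor X$ are $\id_X$. Next, I would apply the conditional property to $d$: there exist a marginal $m \colon X \to X$ and a conditional $c \colon X \tensor X \to X$ such that $d = \cp \dcomp (m \tensor \id) \dcomp (\cp \tensor \id) \dcomp (\id \tensor c)$, and the counit law $d \dcomp (\id \tensor \discard) = \id_X$ forces $m = \id_X$. By coassociativity of copy, $d = \cp \dcomp (\id \tensor \cp) \dcomp (\id \tensor c) = \cp \dcomp (\id \tensor (\cp \dcomp c))$. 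Finally, imposing the remaining counit condition $d \dcomp (\discard \tensor \id) = \id_X$ and simplifying via the counit law for the ambient copy (applied to the discard-copy pair that appears after pushing the discard past $c$) extracts the identity $\cp \dcomp c = \id_X$, so that $d = \cp_X \dcomp (\id_X \tensor \id_X) = \cp_X$.

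The main obstacle is this last step: teasing out $\cp \dcomp c = \id_X$ from the remaining marginal condition. It is a short diagrammatic manipulation, but it is the crucial place where the counit law for copy combines with the conditional decomposition of $d$ to collapse the expression back to $\cp_X$; everything else is a direct consequence of totality and of choosing the (forced) marginal.
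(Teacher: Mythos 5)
Your proof is correct and is essentially the argument behind the cited Remark 11.29 of Fritz (the paper itself gives no proof, deferring to that reference): totality forces the counit to be $\discard_X$, and the conditional decomposition of $d$ together with the two counit laws forces the comultiplication to be $\cp_X$. The only imprecision is the phrase about ``pushing the discard past $c$'' in the last step --- the discard lands on the first output leg, which is a branch of the ambient copy and never meets $c$, so counitality of $\cp_X$ immediately yields $d \dcomp (\discard \tensor \id) = \cp_X \dcomp c$ and hence $\cp_X \dcomp c = \id_X$ as you claim.
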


\begin{lemma}\label{lemma:maybe-monad-satisfies-assumptions}
  The Maybe monad \((\maybe) \colon \cat{C} \to \cat{C}\) on a Markov category with coproducts satisfies the assumptions of \Cref{prop:partial-markov-from-markov}.
\end{lemma}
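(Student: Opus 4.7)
The plan is to verify each of the three assumptions of Proposition~\ref{prop:partial-markov-from-markov} for the Maybe monad $(\maybe)$ on a Markov category $\cat{C}$ with coproducts (which I assume distribute over $\tensor$, as in all standard examples), in order of increasing subtlety.

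First, I would construct the laxator and its section. Using the distributivity isomorphism $(A + \monoidalunit) \tensor (B + \monoidalunit) \cong (A \tensor B) + A + B + \monoidalunit$, define $\laxator_{A,B}$ as the cotuple whose first component is the coproduct inclusion $\iota_1 \colon A \tensor B \to \maybe(A \tensor B)$ and whose remaining three components discard to $\monoidalunit$ and then embed as $\iota_2$. The candidate section $\laxatorsec_{A,B} \colon \maybe(A \tensor B) \to \maybe A \tensor \maybe B$ is given by the paired inclusion $\iota_1 \tensor \iota_1$ on the $A \tensor B$ summand and by the diagonal $\monoidalunit \to \monoidalunit \tensor \monoidalunit \xrightarrow{\iota_2 \tensor \iota_2} \maybe A \tensor \maybe B$ on the $\monoidalunit$ summand. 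Checking $\laxatorsec_{A,B} \dcomp \laxator_{A,B} = \id$ is an immediate summand-wise verification.

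Second, for the copy compatibility $\monad \cp_A \dcomp \laxatorsec_{A,A} = \cp_{\monad A}$, I would appeal to Lemma~\ref{prop:markov-comonoid-is-copy}: any cocommutative comonoid on $\maybe A$ coincides with the canonical copy-discard. It therefore suffices to exhibit $(\monad \cp_A \dcomp \laxatorsec_{A,A}, \discard_{\maybe A})$ as such a comonoid. Precomposing with $\iota_1 \colon A \to \maybe A$ and $\iota_2 \colon \monoidalunit \to \maybe A$, the candidate comultiplication acts as $\cp_A \dcomp (\iota_1 \tensor \iota_1)$ and as $\cp_\monoidalunit \dcomp (\iota_2 \tensor \iota_2)$ respectively; coassociativity, cocommutativity and the counit law then lift summand-wise from the comonoid structures on $A$ and on $\monoidalunit$, and uniqueness forces the desired equation.

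The main obstacle is the third, unit-comparison, condition. The two natural transformations $\monad \monadunit_A, \monadunit_{\monad A} \colon \maybe A \to \maybe\maybe A$ agree on the $A$ summand of $\maybe A$, where both send $a$ to the double inclusion $\iota_1 \dcomp \iota_1$, but differ on the $\monoidalunit$ summand: $\monad \monadunit_A$ sends $\bot$ to the outer inclusion $\iota_2 \colon \monoidalunit \to \maybe\maybe A$, whereas $\monadunit_{\maybe A}$ sends it to the inner inclusion $\iota_1 \dcomp \iota_2$. I would then split the first copy of $\cp_{\monad A}$ along the coproduct $\maybe A = A + \monoidalunit$. On the $A$ side, the two transformations act identically, so the two composites coincide on the nose; on the $\monoidalunit$ side, the first factor going into $\monadmultiplication_{A,B}$ is $\iota_2(\bot)$, and the key identity $(\iota_2 \tensor \id_{\maybe B}) \dcomp \laxator_{A,B} = \discard_{\maybe B} \dcomp \iota_2$---which holds by inspection of the definition of $\laxator$---collapses the entire composite to $\iota_2$ regardless of the intermediate value produced by $f$. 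Since both sides agree after precomposition with each of the two coproduct inclusions into $\maybe A$, the universal property of the coproduct delivers the full equation.
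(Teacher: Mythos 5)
Your proposal is correct and follows essentially the same route as the paper: the same laxator/section defined through the distributivity isomorphism, the same appeal to Lemma~\ref{prop:markov-comonoid-is-copy} for the copy-compatibility condition, and the same crux for the unit-comparison condition, namely that after the mismatched summand everything is discarded and terminality of \(\monoidalunit\) in the Markov category \(\cat{C}\) identifies the two residual maps into \(\monoidalunit\). The only cosmetic difference is that you organise the last verification by case-splitting along the coproduct injections into \(\maybe A\) (using the distributivity you rightly flag as an assumption), whereas the paper rewrites both sides into a normal form over the four-fold coproduct; the underlying computation is the same.
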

\begin{proof}[Proof sketch]
  Recall that the structural transformation \(\laxator\) for the Maybe monad is defined by \(\laxator_{A,B} \defn \id{A \tensor B} + \coproductmap{\finmap{A + B}}{\id{1}}\).
  We define a candidate for the section \(\laxatorsec\) of the structural transformation \(\laxator\) as \(\laxatorsec_{A,B} \defn \id{A \tensor B} + \initmap{A+B} + \id{1}\).
  Note that \(\laxatorsec\) is a natural transformation because it is a coproduct of natural transformations.
  With this definition, \(\laxatorsec\) is a section of \(\laxator\).
  For the second assumption, we need to show that \((\cp_{A} +1) \dcomp \laxatorsec_{A,A}\) coincides with the copy map \(\cp_{A+1}\).
  By \Cref{prop:markov-comonoid-is-copy}, it suffices to realize that \((\cp_{A} +1) \dcomp \laxatorsec_{A,A}\) forms a comonoid with unit \(\discard_{A+1}\).
  For the third assumption, we rewrite the two sides of the desired equation in normal form, using distributivity of \(\tensor\) over \(+\), and unitality of the universal map from the coproduct \(\cocopy_{A} \colon A + A \to A\).
  We notice that the two sides of the equation are equal because of terminality of \(1\).
\end{proof}

Combining \Cref{prop:partial-markov-from-markov} and \Cref{lemma:maybe-monad-satisfies-assumptions}, we obtain the desired result.
It follows that \(\subBorelStoch\) is a partial Markov category.

\begin{lemma}\label{lemma:borel-quasi-total}
  The Markov category \(\BorelStoch\) satisfies the assumptions of \Cref{lemma:quasi-total-conditionals}.
\end{lemma}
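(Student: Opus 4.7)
The plan is to verify hypothesis (iii) of \Cref{lemma:quasi-total-conditionals} for $\cat{C} = \BorelStoch$ and $\monad = (\maybe)$: for every $f \colon X \to A \tensor \maybe B$ in $\BorelStoch$, I must exhibit a conditional $c \colon A \tensor X \to \maybe B$ satisfying $\cp \dcomp (c \tensor (c \dcomp \maybe \discard)) \dcomp \laxator = c$. The remaining hypotheses of \Cref{lemma:quasi-total-conditionals} are already supplied by \Cref{lemma:maybe-monad-satisfies-assumptions}.

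First I would unfold the equation concretely in $\BorelStoch$. Because $\cp$ is deterministic in a Markov category, the left-hand side evaluated at $(a,x)$ is the pushforward of the product measure $c(-\mid a,x) \tensor (c \dcomp \maybe\discard)(-\mid a,x)$ under the Maybe laxator $\laxator_{B,1} \colon \maybe B \tensor \maybe 1 \to \maybe B$, which sends a pair to $\bot$ whenever either coordinate is $\bot$. A short computation reduces the equation to the pointwise requirement $c(B \mid a, x) \in \{0,1\}$ for every $(a,x)$: each conditional fiber must be either a probability measure concentrated on $B$ or the Dirac at $\bot$. This is precisely the condition that $c$, viewed as a Kleisli morphism in $\subBorelStoch$, has deterministic probability of failure, in line with \Cref{prop:characterisation-quasi-total}.

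Second, I would construct such a $c$ by disintegration on standard Borel spaces. Using the distributivity isomorphism $A \tensor \maybe B \cong (A \tensor B) + A$, decompose $f$ into a defined subkernel $f_{\mathrm{def}} \colon X \to A \tensor B$ and a failure subkernel $f_{\mathrm{fail}} \colon X \to A$. Apply the standard disintegration theorem in $\BorelStoch$ to $f_{\mathrm{def}}$ to extract a probability kernel $k \colon A \tensor X \to B$, defined almost everywhere with respect to the $A$-marginal of $f_{\mathrm{def}}$. Then set $c$ to be $\iota_B \dcomp k$ on a measurable set witnessing this marginal's support, and the constant $\delta_\bot$ on its complement; standard measurable-selection arguments on Polish spaces make the routing measurable, so $c$ is indeed a morphism of $\BorelStoch$.

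The main obstacle is verifying that this $c$ is genuinely a conditional of $f$, since the defined and failure marginals on $A$ may overlap and a $\{0,1\}$-valued routing cannot locally blend them. This is handled by exploiting the richness of standard Borel spaces to pick a measurable partition of $A \tensor X$ jointly compatible with $m = f \dcomp (\id_A \tensor \discard_{\maybe B})$ and with the coproduct decomposition. Once the partition is fixed, the equation $f = m \condcomp c$ is verified by checking that disintegration reproduces each piece on its respective cell, and the quasi-totality equation $\cp \dcomp (c \tensor (c \dcomp \maybe\discard)) \dcomp \laxator = c$ then follows immediately from the pointwise characterization obtained in the first step.
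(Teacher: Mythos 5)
Your first step is right: in $\BorelStoch$ the equation $\cp \dcomp (c \tensor (c \dcomp \maybe[\discard])) \dcomp \laxator = c$ unfolds to $c(T \given a,x)\cdot c(B\given a,x) = c(T\given a,x)$ for all measurable $T\subseteq B$, i.e.\ to the pointwise requirement $c(B\given a,x)\in\{0,1\}$, and your candidate $c$ (disintegrate the defined part, send everything else to $\dirac{\bot}$) satisfies that requirement by construction. The gap is exactly at the step you flag as ``the main obstacle'', and it is not a technicality that measurable-selection or partition arguments can absorb. Take $A=[0,1]$ with Lebesgue measure, $B=\{b\}$, $X=\monoidalunit$, and let $f$ be the product of Lebesgue measure with $\frac{1}{2}\dirac{b}+\frac{1}{2}\dirac{\bot}$. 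Every conditional of $f$ with respect to $A$ must satisfy $c(B\given a)=\frac{1}{2}$ for Lebesgue-almost every $a$, so no $\{0,1\}$-valued kernel is a conditional of $f$: the defined and failed mass genuinely blend on a set of positive marginal measure, and no partition of $A\tensor X$, however cleverly chosen, can route around that. Read for arbitrary $f\colon X\to A\tensor\maybe[B]$, the hypothesis of \Cref{lemma:quasi-total-conditionals} is simply not achievable, so your third paragraph cannot be completed as written.

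The missing idea is that the hypothesis only needs to hold for the morphisms to which \Cref{lemma:quasi-total-conditionals} is actually applied in \Cref{prop:partial-markov-from-markov}, namely those of the form $g\dcomp\laxatorsec_{A,B}$ with $g\colon X\to\maybe[A\tensor B]$, conditioned over $\maybe[A]$ rather than over $A$. Such a joint law is concentrated on $(A\tensor B)+\{(\bot,\bot)\}$, so $A\times\{\bot\}$ and $\{\bot\}\times B$ are null sets; hence any disintegration already satisfies $c(B\given\alpha,x)\in\{0,1\}$ for marginal-almost every $\alpha$, and redefining $c$ to be $\dirac{\bot}$ on the measurable exceptional set $\{(\alpha,x)\st c(B\given\alpha,x)\notin\{0,1\}\}$ produces another conditional (the defining equation only sees $c$ up to marginal-null sets) that satisfies the equation everywhere. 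This is what the paper's own terse proof does --- keep the conditional where it is determined, set $\dirac{\bot}$ elsewhere --- and your construction, restricted to this class of joints, collapses to essentially the same argument; the defined/failure decomposition and the explicit pointwise characterisation you give are a useful sharpening, but you should state clearly that the verification is only claimed for joints in the image of $\laxatorsec$.
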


\begin{theorem}\label{th:maybe-subdistributions}
  Let \((\maybe) \colon \cat{C} \to \cat{C}\) be the \emph{Maybe} monad on a Markov category \(\cat{C}\) with coproducts.
  Then, its Kleisli category \(\kleisli{\maybe}\) has conditionals.
  Suppose that \(\cat{C}\) and its \emph{Maybe} monad additionally satisfy the conditions for \Cref{lemma:quasi-total-conditionals}.
  Then, conditionals are quasi-total and \(\kleisli{\maybe}\) is a partial Markov category.
\end{theorem}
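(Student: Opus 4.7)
The plan is to assemble this result directly from the three preceding lemmas: the existence of conditionals follows from \Cref{prop:partial-markov-from-markov} applied to the Maybe monad, the verification of its hypotheses is exactly \Cref{lemma:maybe-monad-satisfies-assumptions}, and quasi-totality follows from \Cref{lemma:quasi-total-conditionals} under the stated additional assumption. Since \Cref{def:partial-markov-cat} defines a partial Markov category as a copy-discard category with quasi-total conditionals, the conclusion follows once all three ingredients are in place.

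More concretely, I would proceed in three steps. First, I would note that $\kleisli{\maybe}$ inherits the copy-discard structure from $\cat{C}$ in the standard way for monoidal monads, namely by post-composing $\cp$ and $\discard$ in $\cat{C}$ with the monad unit $\monadunit$; this produces a cocommutative comonoid on every object and the uniformity conditions pass along the monoidal structure. Second, I would invoke \Cref{prop:partial-markov-from-markov} to obtain conditionals in $\kleisli{\maybe}$; this requires verifying that $(\maybe)$ is monoidal, that its laxator is a split epimorphism compatible with copy, and that it satisfies the third technical equation, and all of this is precisely the content of \Cref{lemma:maybe-monad-satisfies-assumptions}. Third, under the additional hypothesis that $\cat{C}$ and $(\maybe)$ meet the condition of \Cref{lemma:quasi-total-conditionals}, that lemma upgrades the conditionals produced in the previous step to quasi-total ones.

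The main obstacle, such as it is, lies entirely in the preliminary \Cref{lemma:maybe-monad-satisfies-assumptions}: choosing the right section $\laxatorsec_{A,B} \defn \id{A \tensor B} + \initmap{A+B} + \id{1}$ of the laxator, and then verifying the copy-compatibility using \Cref{prop:markov-comonoid-is-copy} to identify $(\cp_A + 1) \dcomp \laxatorsec_{A,A}$ with $\cp_{A+1}$. Once this verification is done, the present theorem is essentially a two-line assembly, so the proof here need only cite the three previous results in sequence and observe that \(\kleisli{\maybe}\) is then a copy-discard category with quasi-total conditionals, hence a partial Markov category by \Cref{def:partial-markov-cat}.
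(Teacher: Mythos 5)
Your proposal is correct and matches the paper's own proof essentially verbatim: both assemble the result by inheriting the copy-discard structure from \(\cat{C}\), invoking \Cref{lemma:maybe-monad-satisfies-assumptions} to discharge the hypotheses of \Cref{prop:partial-markov-from-markov} for conditionals, and then applying \Cref{lemma:quasi-total-conditionals} for quasi-totality. No gaps; the hard work is indeed entirely in the cited lemmas.
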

\begin{proof}
  The copy-discard structure of \(\kleisli{\maybe}\) is inherited from the base category \(\cat{C}\).
  By \Cref{lemma:maybe-monad-satisfies-assumptions}, the \emph{Maybe} monad on a Markov category with coproducts satisfies the assumptions of \Cref{prop:partial-markov-from-markov}.
  This shows that \(\kleisli{\maybe}\) has conditionals and, by \Cref{lemma:quasi-total-conditionals}, they are quasi-total, which makes \(\kleisli{\maybe}\) a partial Markov category.
\end{proof}

\begin{corollary}\label{prop:borel-subdistributions}
  \(\subBorelStoch\) is a partial Markov category.
\end{corollary}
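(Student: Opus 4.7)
The plan is to instantiate Theorem~\ref{th:maybe-subdistributions} with $\cat{C} = \BorelStoch$. This requires three ingredients: (i) recognising $\subBorelStoch$ as the Kleisli category of the \emph{Maybe} monad on $\BorelStoch$; (ii) checking that $\BorelStoch$ is a Markov category with coproducts; and (iii) verifying the hypothesis of Lemma~\ref{lemma:quasi-total-conditionals}, which is precisely what Lemma~\ref{lemma:borel-quasi-total} provides.

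For step (i), I would appeal to the distributive law $\distributivelaw \colon \maybe[\Giry(-)] \to \Giry(\maybe)$ established in Proposition~\ref{th:distributive-law-giry-maybe}. Panangaden's monad decomposes as the composite $\subGiry = \Giry(\maybe)$, so by the general theory of composite monads induced by a distributive law, $(\maybe)$ lifts to a monad on $\BorelStoch = \kleisli{\Giry}$ whose Kleisli category is precisely $\kleisli{\subGiry} = \subBorelStoch$, as noted in Remark~\ref{rem:subBorelStoch-kleisli-cat}. For step (ii), the category of standard Borel spaces is closed under countable disjoint unions, and these coproducts lift to $\BorelStoch$ along the left adjoint of the Kleisli inclusion; the fact that $\BorelStoch$ is a Markov category is recalled from~\cite{fritz_2020}.

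With these in place, Theorem~\ref{th:maybe-subdistributions} applies: since $\BorelStoch$ is a Markov category with coproducts, $\kleisli{\maybe}$ inherits a copy-discard structure and has conditionals; and since by Lemma~\ref{lemma:borel-quasi-total} the hypothesis of Lemma~\ref{lemma:quasi-total-conditionals} holds in $\BorelStoch$, the conditionals in $\kleisli{\maybe}$ are quasi-total. Hence $\subBorelStoch \cong \kleisli{(\maybe)_{\BorelStoch}}$ is a partial Markov category.

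I expect the only delicate point to be the bookkeeping in step (i): one must confirm that the monoidal and comonoid data transferred from $\BorelStoch$ to the \emph{Maybe}-Kleisli construction coincide with those already carried by $\subBorelStoch$. This is not a serious obstacle, however, because the distributive law of Proposition~\ref{th:distributive-law-giry-maybe} is monoidal and because the copy and discard morphisms in $\subBorelStoch$ arise by post-composition with the unit of $\subGiry$, which factors through the unit of $\Giry$ followed by the unit of $(\maybe)$.
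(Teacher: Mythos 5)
Your proof is correct and follows exactly the paper's own route: identify $\subBorelStoch$ with the Kleisli category of the \emph{Maybe} monad on $\BorelStoch$ via the distributive law of Proposition~\ref{th:distributive-law-giry-maybe}, invoke Lemma~\ref{lemma:borel-quasi-total} to satisfy the hypothesis of Lemma~\ref{lemma:quasi-total-conditionals}, and conclude by Theorem~\ref{th:maybe-subdistributions}. The additional bookkeeping you flag (coproducts in $\BorelStoch$ and the transfer of the copy-discard structure) is a reasonable elaboration of what the paper leaves implicit.
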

\begin{proof}
  By \Cref{th:distributive-law-giry-maybe}, the category \(\subBorelStoch\) can be seen as the Kleisli category of the Maybe monad \((\maybe) \colon \BorelStoch \to \BorelStoch\) on standard Borel spaces.
  \(\BorelStoch\) is a Markov category and satisfies the assumptions for \Cref{lemma:quasi-total-conditionals} by \Cref{lemma:borel-quasi-total}.
  Then, we can apply \Cref{th:maybe-subdistributions} to obtain that \(\subBorelStoch\) is a partial Markov category.
\end{proof}

\subsection{Some properties of partial Markov categories}

A first useful property of Markov categories is that the marginal on \(A\) of a morphism \(f \colon X \to A \tensor B\) must be given by discarding its \(B\) output: \(f \dcomp (\id{A} \tensor \discard_{B})\).
This characterisation relies on the fact that all morphisms are total, which is not the case in a partial Markov category~\cite{fritz_2020}.
However, the fact that conditionals are quasi-total is enough to give an equivalent characterisation of marginals.
We will implicitly apply this result in all the following sections.

\begin{proposition}\label{prop:QTconditionals-give-nice-marginals}
  Let \(f \colon X \to A \tensor B\) be a morphism in a partial Markov category \(\cat{C}\) with a quasi-total conditional \(c \colon A \tensor X \to B\).
  Then, \(f \dcomp (\id{A} \tensor \discard_{B})\) is a marginal associated with \(c\).
\end{proposition}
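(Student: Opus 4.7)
The plan is to verify $f = m' \condcomp c$ directly, where $m' := f \dcomp (\id{A} \tensor \discard_{B})$. First I would compose the hypothesis $f = m \condcomp c$ on the right with $\id{A} \tensor \discard_{B}$ and collapse $(\id{A} \tensor c) \dcomp (\id{A} \tensor \discard_{B}) = \id{A} \tensor d$, where $d := c \dcomp \discard_{B}$. This yields the explicit form
\[m' = \cp_{X} \dcomp (m \tensor \id{X}) \dcomp (\cp_{A} \tensor \id{X}) \dcomp (\id{A} \tensor d).\]
By \Cref{prop:characterisation-quasi-total}, $d$ is deterministic because $c$ is quasi-total, but the reformulation that will actually drive the argument is quasi-totality in the form $c = \cp_{A \tensor X} \dcomp (c \tensor d)$ (modulo the right unitor $B \tensor \monoidalunit \simeq B$), which I plan to use in reverse.

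Next I would expand $m' \condcomp c$ by substituting the explicit form of $m'$ above. The resulting diagram contains two applications of $c$: one sitting inside $m'$, post-composed with $\discard_{B}$ and thus appearing as $d$; and one external. Using coassociativity and cocommutativity of $\cp$, together with uniformity in the form $\cp_{A \tensor X} = (\cp_{A} \tensor \cp_{X}) \dcomp (\id{A} \tensor \swap[A,X] \tensor \id{X})$, I would regroup the three $X$-wires and the three wires carrying the output of $m$ so that both $c$-applications arise as the two branches produced by a single $\cp_{A \tensor X}$ acting on the $(A,X)$-pair emerging from $m \tensor \id{X}$.

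At that point quasi-totality of $c$ collapses the pair $\cp_{A \tensor X} \dcomp (c \tensor d)$ back to a single $c$, and the remaining diagram is precisely $\cp_{X} \dcomp (m \tensor \id{X}) \dcomp (\cp_{A} \tensor \id{X}) \dcomp (\id{A} \tensor c) = m \condcomp c = f$, as required.

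The hard part will be the regrouping step: exhibiting the inputs of the inner $c \dcomp \discard_{B}$ and of the outer $c$ as two branches of a single copy of the same $(A,X)$-pair. This is pure copy-discard bookkeeping using only coassociativity, cocommutativity and uniformity of $\cp$, but it is considerably more transparent as a string-diagram rewrite than as a symbolic manipulation, so the final write-up would naturally be given diagrammatically.
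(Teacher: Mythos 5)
Your proposal is correct and takes essentially the same route as the paper's proof: both verify \(f = (f \dcomp (\id{A} \tensor \discard_{B})) \condcomp c\) by expanding with the given decomposition \(f = m \condcomp c\), regrouping the copies via coassociativity/cocommutativity and uniformity of the comonoid structure, and then collapsing the two occurrences of \(c\) (one discarded, one not) via quasi-totality. The bookkeeping step you flag as the hard part is exactly the paper's associativity-plus-uniformity rewrite, so nothing is missing.
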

\begin{proof}
  We employ string diagrammatic reasoning.
  \marginalQTconditionalsProofFig{}
  \Cref{eq:nice-conditionals-conditional1,eq:nice-conditionals-conditional2} follow from the assumption that \(m\) and \(c\) are a marginal and a conditional of \(f\), \Cref{eq:nice-conditionals-associativity} is an application of associativity of the copy-discard structure, and \Cref{eq:nice-conditionals-quasi-total} follows from quasi-totality of \(c\) and uniformity of the comonoid structure.
\end{proof}

In Markov categories, if conditionals are unique, the category collapses to a preorder~\cite[Proposition 11.15]{fritz_2020}.
For partial Markov categories, the collapse is not so extreme but, when conditionals are unique, parallel morphisms are characterised by their probability of success.

\begin{proposition}\label{prop:unique-conditionals-collapse}
  Let \(\cat{C}\) be a partial Markov category and suppose that all conditionals are unique.
  Then, for every two morphisms \(f, g \colon X \to Y\), if \(f \dcomp \discard = g \dcomp \discard\) then \(f=g\).
\end{proposition}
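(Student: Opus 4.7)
The plan is to represent $f$ and $g$ via their unique conditional decompositions (viewed through the unitor $Y \cong \monoidalunit \tensor Y$) and then force the underlying conditionals to coincide.

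First, I would treat $f, g \colon X \to Y$ as morphisms $X \to \monoidalunit \tensor Y$. By \Cref{prop:QTconditionals-give-nice-marginals} applied to this splitting, the marginal on $\monoidalunit$ of each is obtained by discarding the $Y$-output, so both marginals equal $f \dcomp \discard = g \dcomp \discard$; denote this common marginal by $m \colon X \to \monoidalunit$. The definition of conditional, together with the simplification $\cp_{\monoidalunit} = \id_{\monoidalunit}$, then produces quasi-total morphisms $c_f, c_g \colon X \to Y$ satisfying $f = \cp_X \dcomp (m \tensor c_f)$ and $g = \cp_X \dcomp (m \tensor c_g)$. Uniqueness of conditionals singles out $c_f$ and $c_g$ as the unique such quasi-total conditionals for $f$ and $g$ respectively.

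The remaining task — and the main obstacle — is to establish $c_f = c_g$. Uniqueness is a statement internal to a single morphism, so it does not immediately equate conditionals coming from two different morphisms. To bridge them, I would form the auxiliary morphism $\cp_X \dcomp (f \tensor g) \colon X \to Y \tensor Y$ and apply the synthetic Bayes theorem (\Cref{th:bayes}) in both directions. The conditional on one side factors through $c_g$ (extended trivially by a discard) while the conditional on the other side factors through $c_f$; the marginals on the two sides are dictated by the common probability of failure $m$, and uniqueness of the Bayes inversion then forces the two extended conditionals to agree, from which $c_f = c_g$ follows. The argument mirrors Fritz's preorder-collapse proof for Markov categories, with quasi-totality substituting for totality; the delicate point in the partial setting is the bookkeeping of the interaction between $m$ and the copy map, which is transparent when every morphism is total but must be tracked explicitly here.

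Substituting the equality of conditionals back into the decompositions gives $f = \cp_X \dcomp (m \tensor c_f) = \cp_X \dcomp (m \tensor c_g) = g$, completing the argument.
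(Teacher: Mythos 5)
There is a genuine gap, and you have in fact located it yourself: nothing in your argument actually establishes $c_f = c_g$. The bridge you propose does not work, for two reasons. First, \Cref{th:bayes} is a statement about \emph{discrete} partial Markov categories --- its proof uses the comparator --- whereas \Cref{prop:unique-conditionals-collapse} concerns arbitrary partial Markov categories; moreover Bayesian inversions in this paper are taken with respect to a state $\sigma \colon \monoidalunit \to X$, which is not part of the data here. Second, even granting comparators, conditioning $\cp_X \dcomp (f \tensor g)$ on its first output and on its second output yields conditionals with respect to \emph{different} outputs, with different marginals ($\cp_X \dcomp (f \tensor (g \dcomp \discard))$ versus $\cp_X \dcomp ((f \dcomp \discard) \tensor g)$, whose equality is essentially the statement you are trying to prove). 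Uniqueness of conditionals only identifies two conditionals of a fixed morphism with respect to a fixed output and marginal; it never forces these two to agree.

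The idea you are missing is to apply the uniqueness hypothesis to the copy morphism itself: both $\id{Y} \tensor \discard_Y$ and $\discard_Y \tensor \id{Y}$, as morphisms $Y \tensor Y \to Y$, are conditionals of $\cp_Y \colon Y \to Y \tensor Y$ with respect to its first output (with marginal $\id{Y}$), so uniqueness forces $\id{Y} \tensor \discard_Y = \discard_Y \tensor \id{Y}$. This single identity is what converts ``uniqueness of conditionals'' into a global collapse. From there the paper concludes exactly from your starting point: your $c_f$ and $c_g$ are the normalisations $\normal{f}$ and $\normal{g}$, and writing $f = \cp_X \dcomp (\normal{f} \tensor (f \dcomp \discard))$, expanding $f \dcomp \discard = g \dcomp \discard = \cp_X \dcomp ((\normal{g} \dcomp \discard) \tensor (g \dcomp \discard))$, and using the displayed identity to slide the discard between the $\normal{f}$ and $\normal{g}$ tensor factors turns the decomposition of $f$ into that of $g$. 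I recommend you adopt this route; the rest of your write-up (the reduction to normalisations via the unitor and \Cref{prop:QTconditionals-give-nice-marginals}) is correct and coincides with the paper's setup.
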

\begin{proof}[Proof sketch]
  The proof is extended from~\cite[Proposition 11.15]{fritz_2020} to rely on quasi-totality instead of totality.
\end{proof}

\subsection{Bayesian inversion and normalisation}
The \emph{Bayesian inversion} of a stochastic channel \(g \colon X \to Y\) with respect to a distribution \(\sigma\) over \(X\) is the stochastic channel \(\bayesinv{g}{\sigma} \colon Y \to X\) classically defined as below\footnote{Bayesian inversions are uniquely defined for all \(y \in Y\) with positive probability, \(\sum_{x_{\bullet}\in X}g(y \given x_{\bullet}) \sigma(x_{\bullet}) > 0\)}.
\[\bayesinv{g}{\sigma}(x \given y) = \dfrac{g(y \given x)\sigma(x)}{\sum_{x_\bullet \in X} g(y \given x_\bullet)\sigma(x_\bullet)}\]
Bayesian inversions can be defined abstractly in partial Markov categories, as they can be in Markov categories~\cite[Proposition 11.17]{fritz_2020}.
Bayesian inversions are just a particular case of conditionals.
We state this result for partial Markov categories (\Cref{prop:bayes-inversions-from-conditionals}) as a straightforward generalisation of~\cite[Proposition 11.17]{fritz_2020}.

\begin{definition}\label{def:bayes-inversion}
  A \defining{linkbayesinv}{\emph{Bayesian inversion}} of a morphism \(g \colon X \to Y\) with respect to \(\sigma \colon \monoidalunit \to X\) is a morphism \(\bayesinv{g}{\sigma} \colon Y \to X\) %
  satisfying the equation in \Cref{diagram-bayesian-inversion}.
  \begin{figure}[h!]
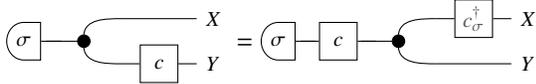

    \BayesinversionDefFig{}
    \caption{Bayesian inversion.}\label{diagram-bayesian-inversion}
  \end{figure}
\end{definition}

\begin{proposition}\label{prop:bayes-inversions-from-conditionals}
  In a partial Markov category, all Bayesian inversions exist.
\end{proposition}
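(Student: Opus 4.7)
The plan is to mimic the Markov-category proof of Fritz~\cite[Proposition 11.17]{fritz_2020}: a Bayesian inversion is just the conditional of the joint distribution with respect to the codomain side. The only non-trivial point is that, unlike in a Markov category, we cannot automatically compute the marginal of a morphism by postcomposing with a discard; for that step we need precisely \Cref{prop:QTconditionals-give-nice-marginals}, which is why quasi-totality of conditionals is essential.

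First, given \(g \colon X \to Y\) and \(\sigma \colon \monoidalunit \to X\), I would form the joint morphism
\[ h \defn \sigma \dcomp \cp_{X} \dcomp (\id{X} \tensor g) \colon \monoidalunit \to X \tensor Y, \]
i.e.\ the left-hand side of the Bayesian inversion equation in \Cref{diagram-bayesian-inversion}. Then I would invoke the existence of conditionals in \(\cat{C}\), applied to \(h\) viewed with its outputs swapped, to produce a marginal \(m \colon \monoidalunit \to Y\) and a conditional \(c \colon Y \to X\) such that
\[ h \dcomp \swap[X,Y] = m \condcomp c, \]
and hence \(h = m \dcomp \cp_{Y} \dcomp (c \tensor \id{Y})\). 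By \Cref{def:partial-markov-cat}, \(c\) can be chosen quasi-total.

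Next, I would identify \(m\) as \(\sigma \dcomp g\). Because \(c\) is quasi-total, \Cref{prop:QTconditionals-give-nice-marginals} lets us compute the marginal by discarding the \(X\)-output: a valid choice of \(m\) is
\[ m = h \dcomp (\discard_{X} \tensor \id{Y}) = \sigma \dcomp \cp_{X} \dcomp (\discard_{X} \tensor g) = \sigma \dcomp g, \]
using the counit law of the copy-discard comonoid in the last step. Substituting back into the decomposition of \(h\) gives exactly the equation of \Cref{diagram-bayesian-inversion} with \(\bayesinv{g}{\sigma} \defn c\), so \(c\) is a Bayesian inversion of \(g\) with respect to \(\sigma\).

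The main obstacle, compared with the totally-total Markov setting, is step three: in an arbitrary copy-discard category with conditionals, discarding one output of a joint need not produce the correct marginal, so the identification \(m = \sigma \dcomp g\) would fail. This is precisely the content encapsulated by quasi-totality of conditionals and \Cref{prop:QTconditionals-give-nice-marginals}, which is why the result generalises to partial Markov categories but not to arbitrary copy-discard categories with conditionals. Everything else is string-diagrammatic rearrangement that carries over verbatim from the Markov case.
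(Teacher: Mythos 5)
Your proof is correct and follows essentially the same route as the paper, whose proof is the one-line instruction to apply the axiom of conditionals to the joint morphism \(\sigma \dcomp \cp_{X} \dcomp (g \tensor \id{})\). You additionally spell out the one implicit step — identifying the marginal as \(\sigma \dcomp g\) via quasi-totality and \Cref{prop:QTconditionals-give-nice-marginals} — which is a faithful elaboration of what the paper leaves to the reader.
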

\begin{proof}
  This can be easily checked by applying the axiom of conditionals to the morphism defined by \(\sigma \comp \cp_{X} \comp (g \tensor \id{})\).
\end{proof}

Bayesian inversions can be computed compositionally.
Similar results appeared in~\cite[Lemma 11.11]{fritz_2020} for Markov categories and~\cite[Section 5.1]{jacobs_2019} for \(\kleisli{\distr}\).
We recast it in the setting of partial Markov categories to prove \Cref{prop:pearlandjeffrey}.

\begin{proposition}\label{prop:composition-bayes-inversions}
  A Bayes inversion of a composite channel \(f \defn c \dcomp d \colon X \to Y\) with respect to a state \(\sigma \colon \monoidalunit \to X\) can be computed by first inverting \(c\) with respect to \(\sigma\) and then inverting \(d\) with respect to \(\sigma \dcomp c\):
  \[\bayesinv{(c \dcomp d)}{\sigma} = \bayesinv{d}{\sigma \dcomp c} \dcomp \bayesinv{c}{\sigma}.\]
\end{proposition}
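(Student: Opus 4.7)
The plan is to verify the claimed equality by showing that the composite $\bayesinv{d}{\sigma \dcomp c} \dcomp \bayesinv{c}{\sigma}$ satisfies the defining equation of $\bayesinv{(c \dcomp d)}{\sigma}$ from \Cref{diagram-bayesian-inversion}. Since Bayesian inversions are not asserted to be unique, the statement is best read as: the proposed composite is one valid Bayesian inversion of $c \dcomp d$ with respect to $\sigma$, and the proof reduces to checking that it makes the defining diagram commute.

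I would proceed entirely by string-diagrammatic rewriting. Start from the diagram for $\sigma \condcomp (c \dcomp d)$, that is, $\sigma$ copied with $c \dcomp d$ applied to one copy. The first step is to insert a copy on the intermediate $Y$-wire between $c$ and $d$ using counitality of \textcopy/\textdiscard; this exposes the subdiagram $\sigma \condcomp c$ inside the calculation. That subdiagram is precisely the left-hand side of the defining equation of $\bayesinv{c}{\sigma}$, so I can rewrite it into the form in which $\sigma \dcomp c$ is produced, copied, and one copy sent through $\bayesinv{c}{\sigma}$ to yield the $X$ output while the other copy continues into $d$. Now the remaining subdiagram through $d$ is exactly $(\sigma \dcomp c) \condcomp d$, which matches the left-hand side of the defining equation of $\bayesinv{d}{\sigma \dcomp c}$; applying that equation rewrites it into a diagram in which $\sigma \dcomp c \dcomp d$ is produced and copied, one copy being the $Y$-output and the other flowing through $\bayesinv{d}{\sigma \dcomp c}$ and then through $\bayesinv{c}{\sigma}$. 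Composed with the appropriate swap, this is the right-hand side of the defining equation for $\bayesinv{(c \dcomp d)}{\sigma}$, with the proposed composite plugged in as the inversion.

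The main subtlety is legitimising the two local applications of the Bayes inversion equation inside a larger context: this is fine because the equation in \Cref{diagram-bayesian-inversion} is an equality of morphisms in $\cat{C}$ and therefore stable under tensoring with identities and post-composition with any further morphism, so no appeal to totality, determinism, or causality is required. The remaining work is purely bookkeeping of wire orderings and swaps using the comonoid axioms of \Cref{diagram-copy-discard}; no probabilistic computation is needed beyond the two invocations of \Cref{def:bayes-inversion}. In particular, the proof goes through uniformly in any partial Markov category, since \Cref{prop:bayes-inversions-from-conditionals} guarantees existence of the two Bayesian inversions we are composing.
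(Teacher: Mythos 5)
Your proposal is correct and follows essentially the same route as the paper: a string-diagrammatic verification that $\bayesinv{d}{\sigma \dcomp c} \dcomp \bayesinv{c}{\sigma}$ satisfies the defining equation of \Cref{def:bayes-inversion} for $c \dcomp d$, obtained by applying that equation first to the subdiagram for $c$ against $\sigma$ and then to the subdiagram for $d$ against $\sigma \dcomp c$, with the reading that the claimed identity picks out one valid Bayesian inversion. The only cosmetic quibble is that the initial ``insert a copy on the intermediate wire via counitality'' step is unnecessary, since the subdiagram witnessing $\bayesinv{c}{\sigma}$ is already present verbatim in the starting diagram.
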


The normalisation of a partial stochastic channel \(f \colon X \to Y\) is classically defined as\footnote{Normalisations are uniquely defined for all \(x \in X\) with probability of failure \(<1\), i.e. \(1 - f(\bot \given x) > 0\)}
\[\normal{f} (- \given x) \defn \dfrac{f(- \given x)}{1- f(\bot \given x)}.\]
Normalisations can be defined in any partial Markov category.
In Markov categories, this notion trivialises as all morphisms are required to be total.

\begin{definition}\label{def:normalisation}
  Let \(f \colon X \to Y\) be a morphism in a partial Markov category.
  A \defining{linknormal}{\emph{normalisation}} of \(f\) is a quasi-total morphism \(\normal{f} \colon X \to Y\) such that
  \(f = \cp \dcomp (\normal{f} \tensor (f \dcomp \discard))\).
  \stochasticchannelquasinormalisationFig{}
\end{definition}

Note that the quasi-totality requirement in \Cref{eq:normalisation-idempotent} is \Cref{eq:def-normalisation} for \(f = \normal{f}\).
This ensures that normalising is an idempotent operation, i.e.~that \(\normal{\normal{f}} = \normal{f}\).
Normalisations are also a particular case of conditionals.

\begin{proposition}
  In a partial Markov category, all normalisations exist.
\end{proposition}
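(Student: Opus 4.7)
The plan is to realise the normalisation of $f \colon X \to Y$ as a particular case of a quasi-total conditional, in the same way that \Cref{prop:bayes-inversions-from-conditionals} realises Bayesian inversions as conditionals. The defining equation of \Cref{def:normalisation} already has the shape of a marginal--conditional splitting, with $f \dcomp \discard \colon X \to \monoidalunit$ in the role of the marginal and $\normal{f}$ in the role of the conditional; the proof consists in producing such a splitting from the partial Markov axiom.

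Concretely, I would first view $f \colon X \to Y$ as a morphism $X \to \monoidalunit \tensor Y$ via the canonical left unitor, and then invoke \Cref{def:partial-markov-cat} to get a marginal $m \colon X \to \monoidalunit$ together with a quasi-total conditional $c \colon \monoidalunit \tensor X \to Y$ such that $f = m \condcomp c$. By \Cref{prop:QTconditionals-give-nice-marginals}, quasi-totality of $c$ lets me take $m = f \dcomp \discard_{Y}$. Unfolding $m \condcomp c$ via \Cref{rem:conditional-composition} and using $\cp_{\monoidalunit} = \id{}$ yields
\[ f = \cp_{X} \dcomp \bigl( (f \dcomp \discard) \tensor c \bigr), \]
up to the canonical isomorphism $\monoidalunit \tensor X \cong X$. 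Cocommutativity of $\cp_{X}$ permutes the two tensor factors into the exact form of \Cref{def:normalisation}, so setting $\normal{f} \defn c$ (precomposed with the left unitor) gives the required normalisation. Its quasi-totality is inherited from $c$, and, as remarked in the text, this already supplies the idempotence condition built into \Cref{def:normalisation}.

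Since the partial Markov axiom supplies a quasi-total conditional directly, there is no substantial obstacle: the argument is a conservative bookkeeping of unitors plus a single appeal to cocommutativity of the copy. The one step that deserves explicit mention is the identification of the marginal with $f \dcomp \discard$, which is why \Cref{prop:QTconditionals-give-nice-marginals}---rather than the mere definition of marginal---is invoked; without it, the marginal produced by the axiom and the discarded map need not coincide.
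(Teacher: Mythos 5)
Your proposal is correct and is exactly the paper's argument: the paper's one-line proof reads ``the normalisation of $f$ is given by its quasi-total conditional on the monoidal unit wire,'' and you have simply spelled out the bookkeeping (unitors, cocommutativity, and the identification of the marginal with $f \dcomp \discard$ via \Cref{prop:QTconditionals-give-nice-marginals}) that this one-liner leaves implicit.
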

\begin{proof}
  The normalisation of \(f\) is given by its quasi-total conditional on the monoidal unit wire.
\end{proof}

Normalisation does not influence conditioning.
\Cref{prop:conditionals-of-programs} relies on this result.

\begin{proposition}\label{prop:conditionals-of-normalisation}
  Let \(f \colon X \to Y \tensor Z\) be a morphism in a partial Markov category.
  Then, the conditionals of the normalisation of \(f\) are conditionals of \(f\).
\end{proposition}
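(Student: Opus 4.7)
The plan is to show that, given a conditional $c \colon Y \tensor X \to Z$ of $\normal{f}$ with associated marginal $m \colon X \to Y$ (so that $\normal{f} = m \condcomp c$), one can exhibit a marginal $m' \colon X \to Y$ of $f$ compatible with the \emph{same} $c$, i.e.\ such that $f = m' \condcomp c$. Guided by the intuition from $\kleisli{\subdistr}$, where the marginal of $f$ is the marginal of $\normal{f}$ scaled pointwise by the probability of success, the natural candidate is
\[ m' \defn \cp_X \dcomp (m \tensor (f \dcomp \discard)), \]
identifying $Y \tensor \monoidalunit \cong Y$ via the right unitor. As a sanity check, once $c$ is known to be a conditional of $f$ the quasi-totality of $c$ together with \Cref{prop:QTconditionals-give-nice-marginals} forces $m'$ to coincide with $f \dcomp (\id{Y} \tensor \discard_Z)$, which is consistent with the expression above.

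The verification $f = m' \condcomp c$ is a direct diagrammatic computation. I would start from the defining equation of the normalisation,
\[ f = \cp_X \dcomp (\normal{f} \tensor (f \dcomp \discard)), \]
and substitute $\normal{f} = m \condcomp c$. The resulting diagram contains three copies of $X$ feeding, respectively, $m$, $c$, and the probability of failure $f \dcomp \discard$. Using associativity and cocommutativity of $\cp_X$, I would group the copies feeding $m$ and $f \dcomp \discard$ under a single binary copy; this subdiagram is by construction $m'$. What remains is exactly the expansion of $m' \condcomp c$ from \Cref{rem:conditional-composition}, so the equality follows.

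The main obstacle is purely bookkeeping: one must keep track of the iterated copies of $X$ under the substitution and rearrangement. No property of the ambient category beyond the comonoid axioms of the copy-discard structure and the defining equations of conditional and normalisation is required. In particular, quasi-totality of $c$ (automatic in a partial Markov category) plays no active role in the argument; it is only needed to conclude, a posteriori via \Cref{prop:QTconditionals-give-nice-marginals}, that $m'$ agrees with $f \dcomp (\id{Y} \tensor \discard_Z)$.
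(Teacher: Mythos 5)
Your proof is correct and follows essentially the same route as the paper's: substitute the defining equation of the normalisation, expand \(\normal{f}\) as \(m \condcomp c\), and use coassociativity and cocommutativity of the copy to regroup the three copies of \(X\) so that \(m\) and \(f \dcomp \discard\) combine into the new marginal \(m' = \cp_X \dcomp (m \tensor (f \dcomp \discard))\). The only cosmetic caveat is that a partial Markov category guarantees the \emph{existence} of quasi-total conditionals rather than the quasi-totality of every conditional of \(\normal{f}\), but this parenthetical remark plays no role in your argument.
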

\begin{proof}
  We employ string diagrams.
  \conditionalsfromnormalisationProofFig{}
\end{proof}

\subsection{Discrete partial Markov categories for Bayes update}
We introduce \emph{discrete partial Markov categories}, a refinement of partial Markov categories that allows for the encoding of \emph{constraints}.

Discrete cartesian restriction categories~\cite{cockett2012range2} are a refinement of cartesian restriction categories that allows for the encoding of \emph{constraints}: a map may fail if some conditions are not satisfied.
Our observation is that a similar refinement can be applied to partial Markov categories to obtain discrete partial Markov categories.
They provide a setting in which it is possible to (i) constrain, via Bayesian updates; and (ii) reason with stochastic maps.

The encoding of constraints requires the existence of \emph{comparator maps} %
that interact nicely with the copy-discard structure (\Cref{diagram-partial-frobenius}, see also \Cref{rem:reading-partial-frobenius}).
A comparator declares that some constraint --- usually an observation, on which we condition --- needs to be satisfied in a probabilistic process.

\begin{definition}
  A copy-discard category \(\cat{C}\) has \defining{linkcompare}{\emph{comparators}} if every object \(X\) has a morphism \(\compare_X \colon X \tensor X \to X\) that is uniform, commutative, associative and satisfies the Frobenius axioms with the copy-discard structure, as in \Cref{diagram-partial-frobenius}.
\end{definition}
\begin{figure}[h!]
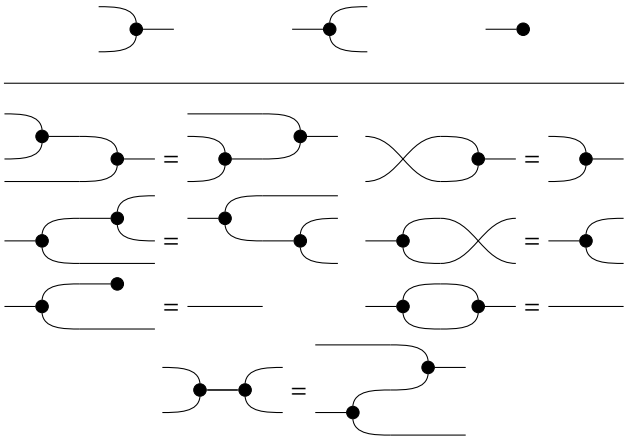

  \partialFrobunniusFig{}
  \caption{Axioms of a partial Frobenius monoid.}\label{diagram-partial-frobenius}
\end{figure}

\begin{definition}\label{def:discrete-partial-markov-cat}
  A \emph{discrete partial Markov category} is a copy-discard category with conditionals and comparators. In other words, it is a partial Markov category with comparators.
\end{definition}

\begin{example}
  The Kleisli category of the finitary subdistribution monad, \(\kleisli{\subdistr}\), is a discrete partial Markov category.
  The comparator \(\compare_{X} \colon X \tensor X \to X\) is given by
  \[\compare_{X}(x \given x_{1},x_{2}) \defn \begin{cases} 1, & x=x_{1}=x_{2}; \\ 0, & \text{otherwise}.\end{cases}\]
  This morphism and the copy-discard structure of \(\kleisli{\subdistr}\) are lifted from the category of partial functions via the inclusion \(\inclusion \colon \Par \into \kleisli{\subdistr}\) given by post-composition with the unit of the finitary distribution monad.
  The comparator in \(\Par\) satisfies the axioms in \Cref{diagram-partial-frobenius} as copying a resource and then checking that the two copies coincide should be the identity process, and for checking equality of two resources and then copying them should be the same as copying one resource if it coincides with the other one.
  By functoriality of the inclusion \(\inclusion \colon \Par \into \kleisli{\subdistr}\), the comparator satisfies the same axioms in \(\kleisli{\subdistr}\).
\end{example}

\begin{remark}\label{rem:reading-partial-frobenius}
  Thanks to the special Frobenius axioms (\Cref{diagram-partial-frobenius}), string diagrams in \(\kleisli{\subdistr}\) keep the same intuitive reading as in \(\kleisli{\distr}\): the value of a morphism is obtained by multiplying the values of all its components and summing on the wires that are not inputs nor outputs.
  For example, the value of the morphism below is
  \[f(z \given x) = \sum_{y \in Y} g(y \given x) \cdot \sigma(y) \cdot h(z \given y).\]
  \[\morphismdiscretecopydiscardExampleFig{}\]
\end{remark}

\begin{example}\label{ex:borel-subdistributions-no-comparator}
  The category \(\subBorelStoch\) has a comparator \(\compare_{X} \colon (X, \sigma_{X}) \tensor (X, \sigma_{X}) \to (X, \sigma_{X})\) defined so that it satisfies the special Frobenius axioms:
    \[\compare_{X}(A, x, y) =
      \begin{cases}
        1, & \mbox{ if } x=y \in A; \\
        0, & \text{otherwise}.
      \end{cases}\]
  This definition gives a measurable function \(\compare_{X}(A \given -,-) \colon X \times X \to [0,1]\) if and only if the diagonal \(\Delta_{X} = \{(x,x) \st x \in X\}\) belongs to the product \(\sigma\)-algebra \(\sigma_{X} \times \sigma_{X}\), which is true in standard Borel spaces.
  However, this naive comparator does not behave in the way we would like: the set \(\{x_{0}\}\) has measure \(0\), so comparing with \(x_{0}\) yields the subdistribution with measure \(0\), which cannot be renormalised.
\end{example}

\subsection{Bayes' Theorem}

Bayes' theorem prescribes how to update one's belief in light of new evidence.
Classically, one observes evidence \(y \in Y\) from a prior distribution \(\sigma\) on \(X\) through a channel \(c \colon X \to Y\).
The updated distribution is given by evaluating the Bayesian inversion of the channel \(c\) on the new observation \(y\).

\begin{theorem}[Synthetic Bayes' Theorem]\label{th:bayes}
  In a discrete partial Markov category, observing a deterministic \(y \colon \monoidalunit \to Y\) from a prior distribution \(\sigma \colon \monoidalunit \to X\) through a channel \(c \colon X \to Y\) is the same, up to scalar, as evaluating the Bayesian inversion of the channel \(\bayesinv{c}{\sigma}(y)\).
  \[\bayesthstatementFig{}\]
\end{theorem}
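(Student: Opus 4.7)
The plan is to rewrite the left-hand side in two stages. First, I would apply the defining equation of Bayesian inversion from \Cref{def:bayes-inversion} (available here by \Cref{prop:bayes-inversions-from-conditionals}) to ``flip'' the channel $c$. Then I would use the special Frobenius axioms of the comparator (\Cref{diagram-partial-frobenius}) together with the determinism of the observation $y$ to collapse the resulting diagram into the right-hand side, picking up the advertised scalar along the way.

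In detail, the left-hand side contains the configuration ``$\sigma$ copied, one branch passed through $c$, the resulting $Y$-wire fed with $y$ into a comparator.'' The Bayesian inversion equation rewrites the $\sigma$--$\cp_X$--$c$ fragment as ``$\sigma \dcomp c$ on $Y$, copied, one copy fed through $\bayesinv{c}{\sigma}$ to yield $X$.'' After this substitution, the $X$-output arises from one copy of the $Y$-distribution via $\bayesinv{c}{\sigma}$, while the other copy of that $Y$-distribution is what now sits opposite the observation $y$ in the comparator.

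For the second stage, I would use that $y$ is deterministic, so $y \dcomp \cp_Y = y \tensor y$, to duplicate $y$: one copy lives on the side and, together with a copy of $\sigma \dcomp c$, gets compared and discarded to produce the advertised scalar; the other copy flows along the $Y$-wire into $\bayesinv{c}{\sigma}$ via the special Frobenius axiom. More concretely, the Frobenius equation lets me transform the $(\sigma \dcomp c)$-copy-and-compare-with-$y$ block into a scalar (the likelihood $\sigma \dcomp c$ compared with $y$ and then discarded) multiplied by the bare wire carrying $y$ into the input of $\bayesinv{c}{\sigma}$. What is left is precisely the desired right-hand side: $y \dcomp \bayesinv{c}{\sigma}$ scaled by the likelihood.

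The main obstacle is the second stage, where one must choreograph the Frobenius, copy, and discard axioms so that the residual scalar factorises out cleanly rather than entangling with the $X$-output. This is where the determinism of $y$ does the real work, since $y \dcomp \cp_Y = y \tensor y$ is exactly what licenses sending one copy of $y$ into $\bayesinv{c}{\sigma}$ while the other participates in producing the scalar; for a non-deterministic $y$ no such factorisation is available. Once this rewriting is in place, the remainder of the proof is a routine application of uniformity of the comonoid structure and naturality of swap.
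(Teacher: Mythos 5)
Your proposal follows essentially the same route as the paper's proof: first the defining equation of Bayesian inversion (together with the fact, \Cref{prop:QTconditionals-give-nice-marginals}, that the marginal in that equation is indeed $\sigma \dcomp c$), then the partial Frobenius axioms to slide the comparator past the copy, and finally determinism of $y$ to split off the scalar. The argument is correct and matches the paper's three-step string-diagrammatic derivation.
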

\begin{proof}
  We employ string diagrams.%
  \bayesthProofFig{}

  The equalities follow from: (\emph{i}) the definition of Bayesian inversion (\Cref{def:bayes-inversion}) and \Cref{prop:QTconditionals-give-nice-marginals}, (\emph{ii}) the partial Frobenius axioms (\Cref{diagram-partial-frobenius}), and (\emph{iii}) determinism of the observation $y$.
\end{proof}

\subsection{Pearl's and Jeffrey's updates}
The process for updating a belief on new evidence may depend on the type of evidence given.
Pearl's~\cite{pearl1988probabilistic,pearl1990jeffrey} and Jeffrey's~\cite{jeffrey1990logic,shafer1981jeffrey,halpern2017reasoning} updates are two possibilities for performing an update of a belief in light of new evidence.
Updating a prior belief according to Pearl's rule increases \emph{validity}, i.e.\ the probability of the new evidence being true according to our belief~\cite{cho2015introduction}.
On the other hand, updating with Jeffrey's rule reduces ``how far'' the new evidence is from our prediction, i.e.\ it decreases \emph{Kullback-Leibler divergence}~\cite{jacobs_2019,jacobs2021learning}.

The difference between these two update rules comes from the fact that they are based on different types of evidence.
Pearl's evidence comes as a probabilistic predicate, i.e.\ a morphism \(q \colon Y \to \monoidalunit\) in a discrete partial Markov category. Pearl's update coincides with the update prescribed by Bayes.

\begin{definition}\label{def:pearl-update}
  Let \(\sigma \colon \monoidalunit \to X\) be a prior distribution and \(q \colon Y \to \monoidalunit\) be a predicate in a discrete partial Markov category \(\cat{C}\), which is observed through a channel \(c \colon X \to Y\).
  Pearl's updated prior is defined to be \(\bayesinv{(c \dcomp q)}{\sigma}\), the total Bayes inversion of \(p \defn c \dcomp q\) with respect to \(\sigma\):
  \[\pearlupdatedefFig\]
\end{definition}

Jeffrey's evidence, on the other hand, is given by a distribution on \(Y\).

\begin{definition}\label{def:jeffrey-update}
  Let \(\tau \colon \monoidalunit \to Y\) be a state in \(\cat{C}\).
  \emph{Jeffrey's updated prior} is \(\tau \dcomp \bayesinv{c}{\sigma}\), the composition of the evidence with the total Bayes inversion of \(c\) with respect to \(\sigma\).
  \[\jeffreyupdatedefFig\]
\end{definition}

When Pearl's evidence predicate \(q\) is deterministic, which means that its probability mass is concentrated in just one point \(y \in Y\), then \(q\) can be written as a constraint.
\begin{equation}\label{eq:map-jeffrey-to-pearl}
  \pearlevidencedeterministicFig{}
\end{equation}
In this case, there is no difference between the two update rules.
This result was proven in~\cite[Proposition 5.3]{jacobs_2019} in the case of the Kleisli category of finitary distribution monad.
We prove it in any discrete partial Markov category.

\begin{proposition}\label{prop:pearlandjeffrey}
  If \(y \colon \monoidalunit \to Y\) is deterministic, then Pearl's update on the predicate \(q \colon Y \to \monoidalunit\), as defined in \Cref{eq:map-jeffrey-to-pearl},
  is Jeffrey's update on \(y\).
\end{proposition}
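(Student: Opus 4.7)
The plan is to factorize Pearl's update along the composition $c \dcomp q$, and then to reduce the Bayesian inversion of the observation predicate to the observed value $y$ itself.

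First, applying compositionality of Bayesian inversions (Proposition \ref{prop:composition-bayes-inversions}) to the channel $c \dcomp q \colon X \to \monoidalunit$ with prior $\sigma$ yields
\[\bayesinv{(c \dcomp q)}{\sigma} \;=\; \bayesinv{q}{\sigma \dcomp c} \dcomp \bayesinv{c}{\sigma}.\]
Comparing with Jeffrey's update $y \dcomp \bayesinv{c}{\sigma}$ (Definition \ref{def:jeffrey-update}), the theorem reduces to showing that the state $\bayesinv{q}{\sigma \dcomp c} \colon \monoidalunit \to Y$ may be taken to be $y$.

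Second, I would verify this by checking the defining equation of Bayesian inversion (Definition \ref{def:bayes-inversion}) for $y$ directly. Unfolding $q$ via Equation \ref{eq:map-jeffrey-to-pearl} as the comparator against $y$ followed by discard, the defining equation expands into a string diagram that collapses, using the partial Frobenius axioms (Figure \ref{diagram-partial-frobenius}) together with determinism of $y$, in exactly the same pattern as the proof of synthetic Bayes (Theorem \ref{th:bayes}). Indeed, $\bayesinv{q}{\sigma \dcomp c} = y$ is essentially Theorem \ref{th:bayes} instantiated with the identity channel on $Y$, prior $\sigma \dcomp c$, and deterministic observation $y$: observing $y$ from any prior on $Y$ returns, up to scalar, the state $y$ itself.

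The main subtlety is the non-uniqueness of Bayesian inversions in a partial Markov category. What we actually establish is that $y \dcomp \bayesinv{c}{\sigma}$ is a Bayesian inverse of $c \dcomp q$ with respect to $\sigma$, and therefore a valid representative of Pearl's updated prior (Definition \ref{def:pearl-update}); no uniqueness claim is needed for the statement as phrased. The bookkeeping around this choice, and the confirmation that the quasi-totality condition on $\bayesinv{q}{\sigma \dcomp c}$ is automatic because $y$ is deterministic, are the only points that need care; both are handled by determinism of $y$ and the partial Frobenius laws.
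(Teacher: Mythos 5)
Your proposal is correct and follows essentially the same route as the paper's proof: decompose $\bayesinv{(c \dcomp q)}{\sigma}$ as $\bayesinv{q}{\sigma \dcomp c} \dcomp \bayesinv{c}{\sigma}$ via Proposition~\ref{prop:composition-bayes-inversions}, then collapse $\bayesinv{q}{\sigma \dcomp c}$ to $y$ by a string-diagram computation using the partial Frobenius axioms, Proposition~\ref{prop:QTconditionals-give-nice-marginals}, and determinism of $y$, exactly as in the proof of Theorem~\ref{th:bayes}. Your explicit remark that only the existence of a suitable Bayesian inverse (not uniqueness) is needed is a point the paper leaves implicit, but it does not change the argument.
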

\begin{proof}
  The result follows from \Cref{prop:composition-bayes-inversions}, \Cref{prop:QTconditionals-give-nice-marginals} and the partial Frobenius axioms (\Cref{diagram-partial-frobenius}), by a string diagrammatic reasoning similar to that of the proof of \Cref{th:bayes}.
\end{proof}

 \section{Evidential Decision Theory}\label{sec:evidential-dec-theory}
This section aims to model decision problems as morphisms in a free partial Markov category and depict them with string diagrams.
Providing semantics for each node of the string diagram (e.g. in terms of subdistributions) will automatically induce semantics for the whole model.
The question that Evidential Decision Theory aims to answer is:
\slogan{Which action is evidence for the best possible outcome?}
This means that the optimal answer to the problem will be the one that, once observed as the output of %
the agent's node, maximises the outcome.
This section expresses this question in terms of the calculus of discrete partial Markov categories.%

\subsection{Solving decision problems in partial Markov categories}
\label{sec:solvingdecision}
The decision problems that we introduce in \Cref{sec:decisionproblems} are defined by three elements: \emph{(i)} an environment, \(w \colon \monoidalunit \to C \tensor O\); \emph{(ii)} an agent,  \(a \colon O \to A\), that observes a part of the enviroment (\(O\)) and chooses an action,  \(x \in A\); \emph{(iii)} and a partial stochastic process \(f \colon C \tensor A \to U\) that imposes some ``compatibility'' conditions on the action and computes the utility.
The following diagram is an abstract model of a decision problem. 
\[\monoidalcontextdecisionproblemdefFig{}\]

Evidential decision theory prescribes the choice of the action \(x \in A\) that we would observe in case we obtained the maximum (average) utility.
We translate this statement in the formalism of discrete partial Markov categories: evidential decision theory prescribes the action \(x \in A\) such that the corresponding deterministic state \(x \colon \monoidalunit \to A\) maximises the average of the normalisation of the following subdistribution over utilities.
\[\observingactionmonoidalcontextdecisionproblemFig{}\]

By applying the Frobenius axioms (\Cref{diagram-partial-frobenius}) in (\ref{eq:dec-problem-frob1}) and (\ref{eq:dec-problem-frob2}), \Cref{prop:QTconditionals-give-nice-marginals} in (\ref{eq:dec-problem-condition1}) and (\ref{eq:dec-problem-condition2}), and the fact that the choice of strategy is deterministic in (\ref{eq:dec-problem-det1}) and (\ref{eq:dec-problem-det2}), we can always reduce this model.
\solvingmonoidalcontextdecisionproblemFig{}

Indeed, in all the examples that we will present next, we only perform comparisons with deterministic states: in other words, the only observations that we need to encode are \emph{deterministic}.
This motivates the results of \Cref{sec:observations-conditionals}: it is possible to add \emph{deterministic} observations syntactically to any Markov category and reduce the computations to the original Markov category.
This allows to reason with deterministic observations even in categories without a comparator structure.

\subsection{Some decision problems}
\label{sec:decisionproblems}
We start by a classical example of decision problem: the Monty-Hall problem~\cite{vosSavant}.
Later, we study Newcomb problem~\cite{nozick_1969} %
this is a problem where Evidential Decision Theory prescribes the action that maximises utility, in contrast with Causal Decision Theory.
Finally, we model the Smoking-lesion problem, which exemplifies the class of problems where Causal Decision Theory outperforms Evidential Decision Theory.
Whenever Causal Decision Theory outperforms Evidential Decision Theory, the categorical modelling makes explicit the assumptions that lead to their discrepancy.
This allows us to clarify which problems satisfy these assumptions.

\paragraph*{Monty-Hall Problem}
An agent is in front of three doors (\(D\)) and is given the choice to pick one of them (\(door \in \distr(D)\)).
Behind one of these doors there is a prize (\(prize \in \distr(D)\)), say of \(1000 \utility\), while behind the other ones there is a goat, which corresponds to \(0 \utility\).
\begin{center}
  \begin{tabular}{| c  c |}
    \hline
    Outcomes & Utility \\
    \hline
    \(open = prize\) & \(1000 \utility\) \\
    \(open \neq prize\) & \(0 \utility\)\\
    \hline
  \end{tabular}
\end{center}
A predictor knows which door hides the prize and opens one of the doors that does not.
The agent is confronted with two choices: keep the original choice of door or change the choice.
Which action should the agent choose?

\begin{figure}[h!]
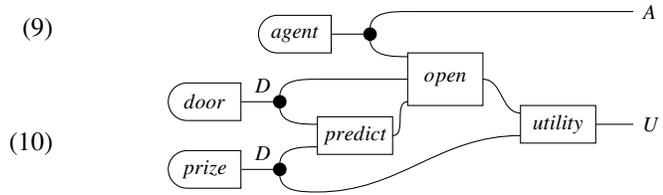

  \[\montyhallFig{}\]
  \caption{Model for Monty-Hall problem.}
  \label{fig:montyhallModel}
\end{figure}
In this classical example, both Evidential and Causal Decision Theory prescribe changing doors.  Indeed, that is the observation that, attached to the first output of \Cref{fig:montyhallModel}, maximises expected utility on the second output.

\paragraph*{Newcomb's Problem}\label{sec:newcomb}
We are now able to model Newcomb's paradox as stated in \Cref{sec:introduction} (\Cref{fig:newcomb-utilities}) and interpret the model in \Cref{fig:newcomb-diagram} in the discrete partial Markov category \(\kleisli{\subdistr}\).

\paragraph*{Transparent Newcomb's problem}
An agent confronts Newcomb's problem with the only difference that both boxes are transparent, which allows the agent to observe which prediction was made.
Should the agent \emph{one-box} or \emph{two-box}?
\begin{figure}[h!]
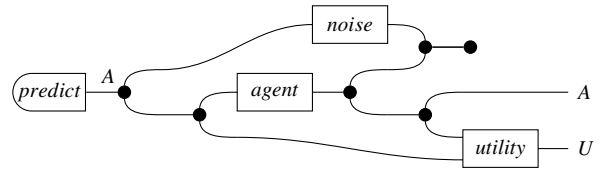

  \[\transparentnewcombFig{}\]
  \caption{Model for the Transparent Newcomb problem.}\label{diagram-transparent-newcomb-model}
\end{figure}

Perhaps surprisingly, with the same reasoning, Evidential Decision Theory still prescribes one-boxing.

\paragraph*{Smoking-lesion problem}
In an imaginary world there is a gene that causes both cancer and desire to smoke (\(gene \in \distr(G \times D)\)).
Smoking gives a small utility, say \(1 \utility\), while dying of cancer gives a large negative utility, say \(-1000 \utility\).
\begin{center}
  \begin{tabular}{| c | c  c |}
    \hline
    & cancer & not cancer \\
    \hline
    smoke & \(-999 \utility\) & \(1 \utility\) \\
    not smoke & \(-1000 \utility\) & \(0 \utility\)\\
    \hline
  \end{tabular}
\end{center}
Should an agent decide to smoke or to refrain from smoking?

\begin{figure}[h!]
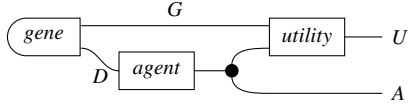

  \[\smokinglesionFig{}\]
  \caption{Model for the Smoking-lesion problem.}
\end{figure}

It is implicit in the statement of the problem that there is a correlation between the desire to smoke and smoking.
Therefore, the choice to smoke is evidence for the gene being present, which, in turn, causes cancer.
This reasoning leads Evidential Decision Theory to prescribe refraining from smoking.
On the other hand, Causal Decision Theory prescribes the action that causes the best utility, which, in this case, is smoking.

 \section{Updating on deterministic observations}\label{sec:observations-conditionals}
Modelling many decision problems in partial Markov categories relies on the existence of comparators.
However, some partial Markov categories may not have this structure or it might not behave as expected (see \Cref{ex:borel-subdistributions-no-comparator}).
The solution to this problem comes from the observation that, in order to solve the decision problems we are interested in, the only observations that appear are deterministic.
We define the category \(\consproc{\cat{C}}\) of \emph{\constrainedProcesses} over a copy-discard category \(\cat{C}\) by syntactically adding observations of deterministic evidence.
A similar construction has appeared in~\cite{stein2021compositional, stein_thesis_2021}.
We show that, when \(\cat{C}\) is a Markov category, constrained processes form a partial Markov category, and that normalisations and updates of them can be computed by taking conditionals only in the original Markov category \(\cat{C}\).

\begin{definition}\label{def:exact-conditioning-cat}
  Let \(\cat{C}\) be a copy-discard category.
  The \defining{linkconsprocC}{category} \(\consproc{\cat{C}}\) of \constrainedProcesses{} in \(\cat{C}\) is obtained by freely adding \(\obsv{y} \colon Y \to \monoidalunit\) for every deterministic map \(y \colon \monoidalunit \to Y\) in \(\cat{C}\) and quotienting by the \defining{linkobsv}{axiom} in \Cref{fig:axiom-exact-conditioning-cat}.
  \begin{figure}[h!]
    \[\axiomconstraintFig{}\]
    \caption{Axiom for the category of \constrainedProcesses{}.}\label{fig:axiom-exact-conditioning-cat}
  \end{figure}
\end{definition}

Intuitively, the generator \(\obsv{y} \colon Y \to \monoidalunit\) corresponds to the observation of the corresponding deterministic evidence \(y \colon \monoidalunit \to Y\) as shown in \Cref{fig:map-deterministic-evidence}.

\begin{proposition}\label{prop:embedding-cons-to-par}
  The category \(\consproc{\cat{C}}\) of \constrainedProcesses{} in \(\cat{C}\) embeds in \(\parproc{\cat{C}}\), the free discrete copy-discard category over \(\cat{C}\).
\end{proposition}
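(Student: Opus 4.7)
The plan is to define an explicit functor $F \colon \consproc{\cat{C}} \to \parproc{\cat{C}}$, check that it respects the single defining axiom of $\consproc{\cat{C}}$, and then establish faithfulness via a semantic retraction argument.

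For the functor, take $F$ to be the identity on objects and on $\cat{C}$-morphisms, and set $F(\obsv{y}) \defn (\id_Y \tensor y) \dcomp \compare_Y \dcomp \discard_Y$ for every deterministic $y \colon \monoidalunit \to Y$. This is the standard ``Frobenius-adjoint'' representation of a deterministic observation inside a discrete copy-discard category, and it is the only sensible target, once one notes that $\parproc{\cat{C}}$ does not contain the symbol $\obsv{y}$ directly but does contain comparators. To check that $F$ is well-defined I must verify that the equation of \Cref{fig:axiom-exact-conditioning-cat} holds in $\parproc{\cat{C}}$ once every $\obsv{y}$ is replaced by the corresponding composite. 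I expect this to reduce to a direct string-diagrammatic calculation using (i) the partial Frobenius axioms of $\compare_Y$, (ii) determinism of $y$ (so that $y \dcomp \cp_Y = y \tensor y$), and (iii) uniformity and cocommutativity of the copy-discard structure.

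For faithfulness, the cleanest route is to exhibit a discrete copy-discard category $\cat{D}$ together with a functor $I \colon \consproc{\cat{C}} \to \cat{D}$ such that \emph{(a)} $I$ is itself faithful, and \emph{(b)} the canonical copy-discard functor $\widehat{I} \colon \parproc{\cat{C}} \to \cat{D}$ induced by $I|_{\cat{C}}$ and the universal property of $\parproc{\cat{C}}$ agrees with $\widehat{I} \circ F = I$ on generators. Once such $I$ and $\cat{D}$ are produced, faithfulness of $I$ immediately forces faithfulness of $F$. A natural source of candidates is the Stein-style $\mathbf{Cond}$ construction~\cite{stein2021compositional, stein_thesis_2021} cited in the introduction, whose morphisms are pairs of a $\cat{C}$-process and an observation constraint; the direct embedding $\consproc{\cat{C}} \to \cat{D}$ should then follow by a bookkeeping argument that tracks every application of $\obsv{y}$ as an explicit constraint.

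The main obstacle is arranging $\cat{D}$ to be \emph{simultaneously} faithful with respect to $\consproc{\cat{C}}$ and genuinely discrete copy-discard, so that $\widehat{I}$ actually exists on all of $\parproc{\cat{C}}$ and matches $F$ on the observation generators. A purely syntactic alternative would be to set up a confluent rewriting procedure on $\parproc{\cat{C}}$-diagrams restricted to the image of $F$, producing canonical forms that are already expressible in $\consproc{\cat{C}}$; this avoids the semantic step but requires a careful case analysis of how the partial Frobenius axioms interact with occurrences of $(\id \tensor y) \dcomp \compare \dcomp \discard$. I would first attempt the semantic route because the bookkeeping arguments are more modular and reusable in later sections.
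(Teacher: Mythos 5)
Your construction of the functor is the same as the paper's: identity on objects and on $\cat{C}$-morphisms, with $\obsv{y}$ sent to the comparator-based composite $(\id{Y} \tensor y) \dcomp \compare_Y \dcomp \discard_Y$, and well-definedness checked by verifying the axiom of \Cref{fig:axiom-exact-conditioning-cat} via the partial Frobenius axioms and determinism of $y$. That part is fine and matches the paper.

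The gap is in faithfulness. Your semantic-retraction strategy is logically sound in outline (if $\widehat{I} \circ F = I$ with $I$ faithful, then $F$ is faithful), but you never produce the category $\cat{D}$, the functor $I$, or a proof that $I$ is faithful; you explicitly flag this as ``the main obstacle'' and leave it open. Constructing a discrete copy-discard $\cat{D}$ admitting a faithful $I \colon \consproc{\cat{C}} \to \cat{D}$ that moreover factors through $F$ on the observation generators is essentially the entire content of the claim, so as written the proposal does not establish faithfulness. The paper's argument is much more elementary and purely syntactic: since $F$ is the identity on morphisms coming from $\cat{C}$, it suffices to show that the assignment on observation generators is injective, i.e.\ that $\fun{F}(\obsv{y}) = \fun{F}(\obsv{z})$ forces $y = z$. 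This is done by a short string-diagram computation that recovers $y$ from $\fun{F}(\obsv{y})$ using the Frobenius axioms together with determinism \emph{and totality} of $y$ and $z$; the non-total case is dispatched separately by noting that a non-total deterministic state must be the always-fail map $\bot_Y$. Your plan also omits this total/non-total case split, which you would need in any route that reasons about arbitrary deterministic states. If you want to salvage your approach, the direct injectivity computation is the missing ingredient you should supply instead of the unspecified $\cat{D}$.
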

\begin{proof}[Proof sketch]
  We define an identity-on-objects functor \(\fun{F} \colon \consproc{\cat{C}} \to \parproc{\cat{C}}\).
  Every morphism \(f\) in \(\consproc{\cat{C}}\) that comes from a morphism in \(\cat{C}\) is left unchanged: \(\fun{F}(f) \defn f\).
  For every deterministic state \(y \colon 1 \to Y\) in \(\cat{C}\), the image of its corresponding costate in \(\parproc{\cat{C}}\) is defined in \Cref{fig:map-deterministic-evidence}, using the comparator structure.
  \begin{figure}[h!]
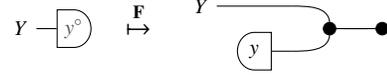

    \[\mappingcostatesConsprocToParprocFig{}\]
    \caption{Correspondence between constrained processes and partial processes.}\label{fig:map-deterministic-evidence}
  \end{figure}
  The fact that \(\fun{F}\) is well defined and its faithfulness follow from string the diagrammatic reasoning. %
\end{proof}

\ConstrainedProcesses{} on a Markov category give a syntax for stochastic processes with some observations of deterministic evidence.
In principle, it is not clear how to compute the semantics of these \constrainedProcesses{} and, in particular, how to compute conditionals of them.
We show that we can give semantics to a \constrainedProcess{} in the original Markov category by computing its normalisation.

\begin{theorem}\label{prop:normalisation-of-programs} Let \(f\) be a
  \constrainedProcess{} in a Markov category \(\cat{C}\). Then, a normalisation
  of \(f\) can be computed by taking conditionals of morphisms in \(\cat{C}\).
  More precisely, \(f\) can be written as below, with \(g\) and \(h\) total, and
  \(z\) total and deterministic.
  \begin{equation}\label{eq:normal-form-programs}
    \normalformprogramsFig{}
  \end{equation}
\end{theorem}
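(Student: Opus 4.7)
The plan is to proceed by induction on the construction of $f$ as a morphism in the free category $\consproc{\cat{C}}$, proving the existence of the normal form $f = g \dcomp (h \tensor \obsv{y})$ with $g \colon X \to A \tensor Y$, $h \colon A \to B$ total in $\cat{C}$, and $y \colon \monoidalunit \to Y$ total and deterministic. The base cases are straightforward: if $f$ comes from $\cat{C}$ it is already total, so one takes $Y = \monoidalunit$, $y = \id$, $g = f$, $h = \id$; if $f = \obsv{y}$ is an observation, one takes $g = \id_Y$ and $h$ trivial; structural morphisms (copy, discard, swap, identity) fit into the same scheme.

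For the parallel composition case, given normal forms $f_i = g_i \dcomp (h_i \tensor \obsv{y_i})$ for $i = 1, 2$, one juxtaposes them and uses the fact that totality and determinism are closed under the monoidal product, together with the identification $\obsv{y_1} \tensor \obsv{y_2} = \obsv{y_1 \tensor y_2}$ which I expect follows from the defining axiom of \Cref{fig:axiom-exact-conditioning-cat} and uniformity of the comonoid structure. The resulting body $g_1 \tensor g_2$ and post-processor $h_1 \tensor h_2$ are total in $\cat{C}$.

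The inductive case for sequential composition $f_1 \dcomp f_2$ is where the genuine work lies. Concatenating the two normal forms produces a diagram with observations appearing at two different depths. The key move is to reassemble the body: starting from $g_1 \colon X \to A_1 \tensor Y_1$, $h_1 \colon A_1 \to A$, $g_2 \colon A \to A_2 \tensor Y_2$, $h_2 \colon A_2 \to B$, one threads the $Y_1$ wire through the second half of the diagram to form a total morphism $G \colon X \to B \tensor (Y_1 \tensor Y_2)$ in $\cat{C}$. Then $f_1 \dcomp f_2 = G \dcomp (\id_B \tensor (\obsv{y_1} \tensor \obsv{y_2}))$, which is in normal form with $h = \id_B$ and $y = y_1 \tensor y_2$. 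The main obstacle is to verify these rewritings are sound: one needs that $(h_1 \tensor \obsv{y_1}) \dcomp g_2$ equals $(h_1 \dcomp g_2) \tensor \obsv{y_1}$ (using naturality of the symmetry and the fact that $\obsv{y_1}$ is on an independent wire that $g_2$ does not touch), and that the axiom of \Cref{fig:axiom-exact-conditioning-cat} suffices to reorganize the observations into a single tensored block.

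Once the normal form is established, the normalisation of $f$ is extracted by taking a conditional $c \colon Y \tensor X \to A$ of $g$ in the base Markov category $\cat{C}$: then $\normal{f} \defn (y \tensor \id_X) \dcomp c \dcomp h \colon X \to B$ is total (being a composition of total morphisms of $\cat{C}$, using that $y$ is a deterministic total state) and one checks via \Cref{prop:QTconditionals-give-nice-marginals} and the defining equation of conditionals that $f = \cp \dcomp (\normal{f} \tensor (f \dcomp \discard))$, so that $\normal{f}$ satisfies \Cref{def:normalisation}. This last step shows that normalisations and, by \Cref{prop:conditionals-of-normalisation}, conditionals of constrained processes reduce to the conditional structure already present in $\cat{C}$.
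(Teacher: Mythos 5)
Your proof is correct, but it runs the induction with a genuinely different invariant from the paper's. The paper's \Cref{eq:normal-form-programs} is arranged so that \(g\) is directly a normalisation of \(f\) --- this is the first observation in the paper's proof --- and maintaining that invariant forces the paper to compute conditionals in \(\cat{C}\) already inside the inductive step for sequential composition (to disentangle the correlation that composition creates between the output wire and the observed wire), invoking the axiom of \Cref{fig:axiom-exact-conditioning-cat} there. You instead carry the weaker normal form \(f = g \dcomp (h \tensor \obsv{y})\) with a joint total body \(g \colon X \to A \tensor Y\), which makes your composition and tensor cases pure monoidal rewriting (the interchange law plus merging of observations), deferring all conditioning to a single application at the very end. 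Both routes work: yours isolates the probabilistic content in one final conditional, while the paper's computes the normalisation incrementally and never has to condition a morphism carrying more than one dangling observation wire. Two details to tighten. First, the identification \(\obsv{y_1} \tensor \obsv{y_2} = \obsv{y_1 \tensor y_2}\) is most cleanly justified via the faithful embedding of \Cref{prop:embedding-cons-to-par} together with uniformity of the comparator in \(\parproc{\cat{C}}\) (or avoided altogether by allowing a finite tensor of observations in your normal form), rather than derived ``from the defining axiom'' alone. Second, the axiom of \Cref{fig:axiom-exact-conditioning-cat} is genuinely needed in your final extraction step --- it is what lets you replace the observed copy of the \(Y\)-wire feeding the conditional \(c\) by the constant state \(y\) --- so it should be cited there, not only in the composition case.
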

\begin{proof}[Proof sketch]
  We start by observing that any \(g\) in \(\cat{C}\) that satisfies \Cref{eq:normal-form-programs} must be a normalisation of \(f\).
  The proof then proceeds by structural induction on \(f\).
\end{proof}

A consequence of this result is that conditionals of \constrainedProcesses{} can be computed by conditionals in the original Markov category.

\begin{theorem}\label{prop:conditionals-of-programs}
Let \(f \colon X \to Y \tensor Z\) be a \constrainedProcess{} in a Markov category \(\cat{C}\).
Then, \(f\) has conditionals and they can be computed by taking conditionals of morphisms in \(\cat{C}\).
In particular, \(\consproc{\cat{C}}\) is a partial Markov category.
\end{theorem}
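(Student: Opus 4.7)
The plan is to combine \Cref{prop:normalisation-of-programs} with \Cref{prop:conditionals-of-normalisation}. By the former, the \constrainedProcess{} $f \colon X \to Y \tensor Z$ admits a decomposition of the shape of \Cref{eq:normal-form-programs}, in which the total morphism $g \colon X \to Y \tensor Z$ lies in the Markov category $\cat{C}$ and serves as a normalisation of $f$ in $\consproc{\cat{C}}$. Thus the problem of building conditionals for $f$ is reduced to conditioning $g$ inside $\cat{C}$.

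First I would compute a conditional of $g$ in $\cat{C}$: since $\cat{C}$ is a Markov category, there exist a marginal $m \colon X \to Y$ and a conditional $c \colon Y \tensor X \to Z$ in $\cat{C}$ satisfying the defining equation of \Cref{fig:conditionals} for $g$; both morphisms are total. I would then transport these back to $f$ by invoking \Cref{prop:conditionals-of-normalisation}: its string-diagrammatic proof uses only the copy-discard axioms and the normalisation equation of \Cref{def:normalisation}, so it applies verbatim to $f$ and $g$ in the copy-discard category $\consproc{\cat{C}}$. Consequently $(m,c)$ is a conditional of $f$ in $\consproc{\cat{C}}$ that has been obtained entirely by conditioning in $\cat{C}$.

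To finish the proof I would check that $c$ is quasi-total in $\consproc{\cat{C}}$. Since $c$ is total in $\cat{C}$, one has $c \dcomp \discard = \discard$, and the counit axiom of the copy-discard comonoid then gives
\[ \cp \dcomp \bigl(c \tensor (c \dcomp \discard)\bigr) \;=\; \cp \dcomp (c \tensor \discard) \;=\; c, \]
which is exactly the equation of \Cref{def:quasi-total-morphism}. Hence every \constrainedProcess{} admits quasi-total conditionals computed by conditioning in $\cat{C}$, and together with the copy-discard structure from \Cref{def:exact-conditioning-cat} this upgrades $\consproc{\cat{C}}$ to a partial Markov category.

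The main obstacle I anticipate is verifying that the argument of \Cref{prop:conditionals-of-normalisation} truly transfers into $\consproc{\cat{C}}$ before we know that it is a partial Markov category: one must check, by unwinding \Cref{eq:normal-form-programs}, that the failure factor $\obsv{y} \circ h$ on the $X$ side commutes past the copy structure used in the splitting of $g$, so that the defining equation of conditionals for $f$ really follows from that of $g$. Since \Cref{prop:normalisation-of-programs} already encapsulates the normalisation equation for $f$ and $g$ in $\consproc{\cat{C}}$, this reduces to a purely copy-discard diagrammatic manipulation, after which the result and the partial Markov structure follow at once.
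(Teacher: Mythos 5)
Your proof is correct and follows essentially the same route as the paper: apply \Cref{prop:normalisation-of-programs} to reduce to a total morphism $g$ in $\cat{C}$, condition $g$ there, and transport the result back via \Cref{prop:conditionals-of-normalisation}. You additionally make explicit two points the paper leaves implicit --- that the argument of \Cref{prop:conditionals-of-normalisation} only needs the copy-discard axioms and the normalisation equation (so there is no circularity in applying it before knowing $\consproc{\cat{C}}$ is a partial Markov category), and that totality of $c$ in $\cat{C}$ yields its quasi-totality --- both of which are welcome clarifications.
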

\begin{proof}
By \Cref{prop:normalisation-of-programs}, we can compute a normalisation \(\normal{f}\) of a \constrainedProcess{} \(f\) in \(\cat{C}\) by only taking conditionals in \(\cat{C}\).
By \Cref{prop:conditionals-of-normalisation}, a conditional of \(\normal{f}\) is a conditional of \(f\) and it can be computed by only taking conditionals in \(\cat{C}\).
\end{proof}
 \section{Conclusions and further work}
We have introduced partial Markov categories and discrete partial Markov categories as a common extension of Markov categories and (discrete) cartesian restriction categories.
We have shown a synthetic version of Bayes' theorem, which holds in any discrete partial Markov category.
Partial Markov categories provide a good formalism to express Pearl's and Jeffrey's updates, model decision problems like Newcomb's paradox, and solve them according to Evidential Decision Theory.

Programming with exact conditioning~\cite{stein_thesis_2021} relies on being able to express stochastic processes and deterministic observations.
We have shown that discrete partial Markov categories provide a good syntax for observations and updates, but, while we have shown that the Kleisli category of any \emph{Maybe} monad on a Markov category is a partial Markov category, not all of them are \emph{discrete}.
We have solved this issue by defining a construction that freely adds deterministic observations to any Markov category, which provides a syntax for programming with exact conditioning.
We have shown that all the morphisms in this free construction can be normalised by computations in the base Markov category.

\subsection{Further work}
Monoidal categories are a minimalistic framework for process description.
Thus, necessarily, some aspects, traditionally useful in probabilistic programming, are not considered here, e.g.~higher-order functions~\cite{heunen_kammar_staton_yang_2017, vakar2019domain, ehrhard2017measurable} or recursion.
The strength of the minimalistic approach is that it can be extended with these features via, e.g., monoidal closed categories~\cite{dahlqvist19semantics} and traced monoidal categories~\cite{hasegawa97}.
Lastly, an already very promising connection is the proximity of our $\consproc{}$ construction, deduced from partial Markov categories, and Stein's \(\mathbf{Cond}\) construction~\cite{stein2021compositional,stein_thesis_2021} for probabilistic programming with exact conditioning.
These connections can potentially translate, for the first time, between the string diagrammatic approach to decision theory and the syntax of probabilistic programming.

The present manuscript focuses on one of the most important and perhaps less discussed frameworks for decision theory: Evidential Decision Theory.
A comprehensive comparison of the different frameworks in terms of partial Markov categories is left for further work. 
For instance,
Evidential Decision Theory needs careful modelling to solve problems such as the ``Smoking Lesion Problem''~\cite{yudkowsky_soares_2017}.
It is sometimes claimed that these problems are better solved by Causal Decision Theory~\cite{gibbard_harper_1978}, which makes use of ``interventions'' to apply an action to a node of a causal graph~\cite{pearl_2009,jacobs_2019,jacobs2021causal}. These interventions may be still analysed in freely generated Markov or partial Markov categories~\cite{fritz_liang_2022}.

Finally, even when this manuscript deals only with a finite number of updates, we have the tools to study \emph{iterated updates}, which are the basis for results such as Aumann's agreement theorem~\cite{aumann2016agreeing}.
Indeed, iterated probabilistic processes in terms of monoidal categories have been a recent contribution to this Symposium on Logic in Computer Science~\cite{monoidalstreams}.

 \newpage
\subsection*{Acknowledgements}
We thank our supervisor, Paweł Sobociński, and the anonymous referees for the useful comments that improved this manuscript. We gratefully thank Dario Stein for pointing out a mistake in  a previous version of \Cref{ex:borel-subdistributions-no-comparator}. We thank Paolo Perrone for finding a typo where ``probability of failure'' meant ``probability of success''. We thank Márk Széles for pointing out that a previous version of the proof of \Cref{prop:embedding-cons-to-par} used an ``always fail'' map that does not exist in arbitrary partial Markov categories. We thank Siddharth Bhat, Pim de Haan, Miguel Lopez and Ruben Van Belle for discussion on the first versions of this manuscript. Finally, we thank helpful discussions with Vanessa Kosoy, David Darlymple, David Spivak and Scott Garrabrant.

\bibliographystyle{alpha}
\bibliography{bibliography}
\newpage
\appendices
\section{Preliminaries}\label{sec:app-detailed-background}
A symmetric monoidal category $(\catC, \otimes, I)$ is said to be \emph{cartesian monoidal} whenever the tensor $A \tensor B$ of two objects $A$ and $B$ is their categorical product, and associators and unitors are derived from the universal property of the categorical product.

A well-known characterization of cartesian monoidal categories is \emph{Fox's theorem}~\cite{fox76}, which states that a symmetric monoidal category is cartesian if and only if it is a copy-discard category where the comonoid structure is natural.
We present a similar characterization: cartesian monoidal categories are monoidal categories where all joint maps split.

\begin{proposition}
  A symmetric monoidal category \((\cat{C}, \tensor, \monoidalunit)\) is cartesian if and only if
  (i) every object \(X \in \obj{\cat{C}}\) has a uniform comonoid structure \((X,\cp,\discard)\), meaning that \(\discard_{X \tensor Y} = \discard \tensor \discard_Y\), \(\discard_{\monoidalunit} = \id{\monoidalunit}\) and that \(\cp_{X \tensor Y} = (\cp_X \tensor \cp_Y) \dcomp (\id{X} \tensor \swap{X,Y} \tensor \id{Y})\), \(\cp_{\monoidalunit} = \id{\monoidalunit}\);
  (ii) every morphism \(f \colon X \to \monoidalunit\) is equal to \(\discard_{X}\);
  and (iii) every morphism \(f \colon X \to Y_1 \tensor Y_2\) splits as \(\cp_X \dcomp (f_1 \tensor f_2)\) for some \(f_1 \colon X \to Y_1\) and \(f_2 \colon X \to Y_2\).
  \begin{figure}[h!]
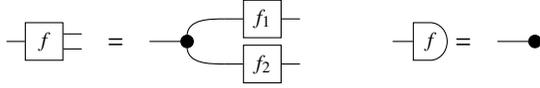

    \cartesiancatfactorisationFig{}
    \caption{Cartesian category}\label{diagram-cartesian-split}
  \end{figure}
\end{proposition}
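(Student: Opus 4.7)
The plan is to prove the two directions separately, in the style of Fox's theorem. For the forward direction, assume \(\cat{C}\) is cartesian. I would take \(\cp_X\) to be the diagonal \(\langle \id{X}, \id{X}\rangle\) and \(\discard_X\) to be the unique map \(X \to \monoidalunit\); uniformity of this comonoid is a direct consequence of the universal property of products, yielding (i). Condition (ii) is immediate because \(\monoidalunit\) is terminal. For condition (iii), any \(f \colon X \to Y_1 \tensor Y_2\) is determined by its components \(f_i \defn f \dcomp \pi_i\) and, unpacking the universal property, satisfies \(f = \cp_X \dcomp (f_1 \tensor f_2)\).

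For the reverse direction, I would first note that (ii) is exactly the statement that \(\monoidalunit\) is terminal. The main task is then to show that each \(X \tensor Y\) is a categorical product, with projections \(\pi_1 \defn \id{X} \tensor \discard_Y\) and \(\pi_2 \defn \discard_X \tensor \id{Y}\). For existence of the pairing of \(f_1 \colon Z \to X\) and \(f_2 \colon Z \to Y\), I would set \(\langle f_1, f_2\rangle \defn \cp_Z \dcomp (f_1 \tensor f_2)\). The projection equation \(\langle f_1, f_2\rangle \dcomp \pi_1 = f_1\) follows by using bifunctoriality to rewrite the composition as \(\cp_Z \dcomp (f_1 \tensor (f_2 \dcomp \discard_Y))\), invoking (ii) to replace \(f_2 \dcomp \discard_Y\) by \(\discard_Z\), and then applying the counit axiom together with the right unitor; the argument for \(\pi_2\) is symmetric.

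For uniqueness, given any \(g \colon Z \to X \tensor Y\) satisfying \(g \dcomp \pi_i = f_i\), condition (iii) lets me split \(g = \cp_Z \dcomp (g_1 \tensor g_2)\); the same calculation used for existence then yields \(g_i = f_i\), whence \(g = \langle f_1, f_2\rangle\). Finally, one verifies that the monoidal associators and unitors coincide with the ones induced by the universal property of the product; this is automatic from the uniqueness of universal maps and hence exhibits the cartesian structure.

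The main obstacle will be the uniqueness step in the reverse direction. Without condition (iii), we would only know that \(\cp_Z \dcomp (f_1 \tensor f_2)\) is one map satisfying the projection equations, not that every such map arises in this form. Condition (iii) here plays the role that naturality of the comonoid plays in Fox's theorem: it forces every morphism into a tensor to decompose as a pairing, so that uniqueness of \(g\) reduces to uniqueness of its components, which is already settled by the existence calculation.
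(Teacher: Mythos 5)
Your proof is correct, but it takes a genuinely different route from the paper's. The paper proves only the non-trivial direction and does so by reduction to Fox's theorem (which it states immediately beforehand): it first observes that condition (ii) gives naturality of \(\discard\), then shows that any splitting of \(f \colon X \to Y_1 \tensor Y_2\) granted by (iii) is \emph{forced} to have components \(f \dcomp (\id{} \tensor \discard)\) and \(f \dcomp (\discard \tensor \id{})\), so that applying (iii) to \(g \dcomp \cp_Y\) yields \(g \dcomp \cp_Y = \cp_X \dcomp (g \tensor g)\), i.e.\ naturality of \(\cp\); Fox's theorem then finishes the job. You instead verify the universal property of the product directly, with pairing \(\cp_Z \dcomp (f_1 \tensor f_2)\) and projections built from \(\discard\). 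The two arguments hinge on exactly the same computation --- your existence/uniqueness calculation showing that the components of any splitting are determined by the marginals is precisely the paper's ``forced to admit'' step --- but they package it differently. What the paper's route buys is brevity and the automatic handling of coherence: Fox's theorem already guarantees that the associators and unitors are the canonical ones. Your route is more elementary and self-contained (no appeal to Fox), at the cost of having to discharge that coherence check yourself; you correctly flag it, and it does follow from uniqueness of universal maps as you say, so there is no gap. You also spell out the easy forward direction, which the paper leaves implicit.
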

\begin{proof}
  We will show that the comonoid structure is natural.
  Firstly, we know from the premises that the discard is natural: that is, \(f \dcomp \discard_Y = \discard_X\) for each \(f \colon X \to Y\).
  We now show that every time we split \(f  \colon X \to Y_1 \tensor Y_2\) into \(f_1\) and \(f_2\), we are forced to admit that \(f_1 = f \dcomp (\id{Y_1} \tensor \discard_{Y_2})\) and that \(f_2 = f \dcomp (\discard_{Y_1} \tensor \id{Y_2})\).
  This implies in turn that copying must be a natural transformation: given any \(g \colon X \to Y\), we know that \(f = g \dcomp \cp_Y\) must split into \(f = \cp_X \dcomp (g \tensor g)\).
  By Fox's theorem, the category is cartesian.
\end{proof}

We spell out the definitions of Markov and partial Markov category in detail.

\begin{definition}
    A \emph{Markov category} is a symmetric monoidal category $(\catC, \tensor, I)$ such that (i) every object $X \in \obj{\catC}$ has a uniform comonoid structure $(X,\cp_X,\discard_X)$, meaning that $\discard_{X \tensor Y} = \discard_X \tensor \discard_Y$, $\discard_I = \im$ and that \(\cp_{X \tensor Y} = (\cp_X \tensor \cp_Y) \comp (\id{X} \tensor \swap[X,Y] \tensor \id{Y})\), $\cp_I = \im$; (ii) every morphism $f \colon X \to I$ is equal to $\discard_X$; and (iii) every morphism $f \colon X \to Y_1 \tensor Y_2$ splits both as \(\cp \comp (\id{} \tensor (f_{2} \comp \cp_{Y_{2}})) \comp (g_1 \tensor \id{})\) and as \(\cp \comp ((f_{1} \comp \cp_{Y_{1}}) \tensor \id{}) \comp (\id{} \tensor g_2)\) for some $f_1 \colon X \to Y_1$, some $f_2 \colon X \to Y_2$, some $g_1 \colon X \tensor Y_2 \to Y_1$ and some $g_2 \colon Y_{1} \tensor X \to Y_2$.

\end{definition}

\begin{remark}[Conditionals in \(\kleisli{\distr}\)]\label{rem:distr-conditionals}
  Given a morphism \(f \colon X \to A \tensor B\), its \emph{marginal} on \(A\) is
  \[m(a \given x) \defn \sum_{b \in B} f(a,b \given x),\]
  and a \emph{conditional} with respect to \(A\) is
  \[c(b \given a,x) \defn \begin{cases} \dfrac{f(a,b \given x)}{m(a \given x)}, & \mbox{ when } m(a \given x) \neq 0; \\ \sigma(b), & \mbox{ when } m(a \given x) = 0. \end{cases}\]
  for any distribution \(\sigma \colon \monoidalunit \to B\).
\end{remark}

\begin{proof}[\textbf{Proof of \Cref{prop:finitary-subdistributions}}]
  For any \(f \colon X \to A \tensor B\), let \(m \colon X \to A\) indicate \(m \defn f \dcomp (\id{A} \tensor \discard_{B})\).
  Explicitly, this means that \(m(a \given x) = \sum_{b \in B} f(a,b \given x)\) and \(m(\bot \given x) = f(\bot \given x)\).
  We define a map \(c \colon A \tensor X \to B\) and show that it is a conditional of \(f\) with marginal \(m\).
  \begin{align*}
    c(b \given a,x) & \defn
      \begin{cases}
        \dfrac{f(a,b \given x)}{m(a \given x)},
          & \mbox{ if } m(a \given x) \neq 0; \\
        0, & \text{otherwise};
      \end{cases}\\
    c(\bot \given a,x) &  \defn
      \begin{cases}
        0, & \mbox{ if } m(a \given x) \neq 0; \\
        1, & \text{otherwise}.
      \end{cases}
  \end{align*}
  This definition gives a map in \(\kleisli{\subdistr}\) because, for all \(a \in A\) and \(x \in X\) such that \(m(a \given x) \neq 0\), the total probability mass is \(\sum_{b \in B} c(b \given a,x) + c(\bot \given a,x) = \sum_{b \in B} \dfrac{f(a,b \given x)}{m(a \given x)} + 0 = 1\), and, for all \(a \in A\) and \(x \in X\) such that \(m(a \given x) = 0\), the total probability mass is \(\sum_{b \in B} c(b \given a,x) + c(\bot \given a,x) = 0 + 1 = 1\).

  We show that \(m \condcomp c = f\) (as defined in \Cref{rem:conditional-composition} and \Cref{fig:conditionals}).
  \begin{align*}
    m \condcomp c (a,b \given x)
      & = m(a \given x) \cdot c(b \given a,x)\\
      & = \begin{cases}
        m(a \given x) \cdot \dfrac{f(a,b \given x)}{m(a \given x)}, &
        \mbox{ if } m(a \given x) \neq 0; \\
        0, & \mbox{ if } m(a \given x) = 0;
      \end{cases} \\
      & = \begin{cases}
        f(a,b \given x), & \mbox{ if }
        \exists b' \in B. \ f(a,b' \given x) \neq 0; \\
        0, & \forall b' \in B. \ f(a,b' \given x) = 0;
      \end{cases} \\
      & = f(a,b \given x). \\
      m \condcomp c (\bot \given x) & =
      m(\bot \given x) + \sum_{a \in A} m(a \given x) c(\bot \given a,x) \\
                                 & = f(\bot \given x) + 0 \\
                                 & = f(\bot \given x),
  \end{align*}
  which proves these are a marginal and a conditional.
\end{proof}

\begin{definition}\label{def:measurable-spaces}
  A \emph{measurable space} \((X, \sigma_{X})\) is a set \(X\) equipped with a \(\sigma\)-algebra \(\sigma_{X}\) on \(X\).
  A \emph{measurable function} \(f \colon (X,\sigma_{X}) \to (Y,\sigma_{Y})\) is given by a function \(f \colon X \to Y\) such that, for each set \(E \in \sigma_{Y}\), its preimage under \(f\) belongs to the \(\sigma\)-algebra on \(X\): \(f^{-1}(E) \in \sigma_{X}\).
  Measurable spaces and measurable functions form a cartesian monoidal \defining{linkmeas}{category} \(\Meas\) with composition lifted from the category \(\Set\) of sets and functions and cartesian product given by \((X,\sigma_{X}) \times (Y,\sigma_{Y}) \defn (X \times Y, \sigma_{X} \times \sigma_{Y})\), where \(\sigma_{X} \times \sigma_{Y}\) is the \(\sigma\)-algebra generated by subsets of the form \(E \times F\) with \(E \in \sigma_{X}\) and \(F \in \sigma_{Y}\).
\end{definition}

\begin{definition}\label{def:standard-borel-spaces}
  A measurable space \((X,\sigma_{X})\) is a standard Borel space if there is a metric on \(X\) that makes \(X\) a complete separable metric space and \(\sigma_{X}\) the Borel \(\sigma\)-algebra on \(X\).
  We indicate with \(\sigmaBorel{X}\) the Borel \(\sigma\)-algebra on a set \(X\) when the metric can be deduced from the context.
\end{definition}

\begin{definition}\label{def:probability-measure}
  A \emph{measure} \(p\) on a measurable space \((X, \sigma_{X})\) is given by a function \(p \colon \sigma_{X} \to \reals^{+}_{\infty}\) to the extended positive reals such that \(p(\emptyset) = 0\) and, for each sequence of disjoint sets \(X_{i}\) with \(i \in \naturals\), \(p(\Union_{i \in \naturals} X_{i}) = \sum_{i \in \naturals} p(X_{i})\).
  A measure \(p\) is a \emph{probability measure} if, additionally, its total mass is \(1\): \(p(X) = 1\).
  We indicate the set of probability measures on a measurable space \((X, \sigma_{X})\), endowed with the \(\sigma\)-algebra generated by the sets \(S_{T,E} \defn \{p \st p(E) \in T\}\) for each \(T \in \sigmaBorel{[0,1]}\) and \(E \in \sigma_{X}\), by \(\Giry(X, \sigma_{X})\).
  When the \(\sigma\)-algebra \(\sigma_{X}\) can be deduced from the context, we simply write \(\Giry(X)\).
\end{definition}

\begin{definition}
  The category \(\Gauss\) models affine maps with gaussian-distributed noise. It has natural numbers as objects; and
  a morphism \(f \colon n \to m\) is given by a tuple \((M,C,s)\) of matrices over the reals, with \(M \in \Mat{\reals}(m,n)\), \(C \in \Mat{\reals}(m,m)\) and \(s \in \Mat{\reals}(m,1)\).
  This data defines the conditional distribution of a random variable \(Y\) in \(\reals^{m}\) as an affine transformation of a random variable \(X\) in \(\reals^{n}\): \(Y = M \cdot X + S\), where \(S\) is a Gaussian random variable independent of \(X\) with mean \(s\) and covariance matrix \(C\).
  The composition of two morphisms \((N,D,t) \colon p \to n\) and \((M,C,s) \colon n \to m\) is a morphism \((M \cdot N, M \cdot D \cdot \transpose{M} + C, M \cdot t + s)\), while their monoidal product is the morphism \((N \biproduct M, D \biproduct C, \binom{t}{s})\).
  The identity morphism is \((\idmat{n},\zeromat{n},0)\).
\end{definition}

\section{Partial Markov categories}\label{sec:app-partial-markov}

\begin{definition}
  A \emph{partial Markov category} is a symmetric monoidal category $(\catC, \tensor, I)$ such that
  (i) every object $X \in \obj{\catC}$ has a partial Frobenius monoid (\Cref{diagram-partial-frobenius}) structure $(X,\cp_X,\discard_X,\mu_X)$ which is uniform, meaning that $\discard_{X \tensor Y} = \discard_X \tensor \discard_Y$, $\discard_I = \im$, \(\cp_{X \tensor Y} = (\cp_X \tensor \cp_Y) \comp (\id{X} \tensor \swap[X,Y] \tensor \id{Y})\), $\cp_I = \im$, \(\compare_{X \tensor Y} = (\id{X} \tensor \swap[X,Y] \tensor \id{Y}) \comp (\compare_{X} \tensor \compare_{Y})\), and \(\compare_{\monoidalunit} = \id{}\);
  and (ii) every morphism $f \colon X \to Y_1 \tensor Y_2$ splits both as \(\cp \comp (\id{} \tensor (f_{2} \comp \cp_{Y_{2}})) \comp (g_1 \tensor \id{})\) and as \(\cp \comp ((f_{1} \comp \cp_{Y_{1}}) \tensor \id{}) \comp (\id{} \tensor g_2)\) for some $f_1 \colon X \to Y_1$, some $f_2 \colon X \to Y_2$, some $g_1 \colon X \tensor Y_2 \to Y_1$ and some $g_2 \colon Y_{1} \tensor X \to Y_2$.
\end{definition}

\begin{proof}[Proof of \Cref{prop:characterisation-quasi-total}]
  Suppose \(f \colon X \to Y\) is quasi-total.
  By composing the both sides of the equation that defines quasi-totality with the discard map \(\discard\), we obtain that the probability of success of \(f\) is deterministic.
  \begin{align*}
    \quasitotalThenDeterministicFig{}
  \end{align*}

  Conversely, suppose that \(f\) has a domain of definition, i.e. that its probability of success \(f \dcomp \discard\) is deterministic.
  \begin{align}
    \deterministicThenQuasitotalFig{}
  \end{align}
  \Cref{eq:quasitotal-unitality1,eq:quasitotal-unitality2,eq:quasitotal-associativity} follow from counitality and coassociativity of the copy-discard structure, \Cref{eq:quasitotal-conditional1,eq:quasitotal-conditional2} are applications of quasi-total conditionals and \Cref{prop:QTconditionals-give-nice-marginals}, and \Cref{eq:quasitotal-determinism} follows from the hypothesis of determinism of \(f \dcomp \discard\).
\end{proof}

\section{Proof of distributive law}\label{sec:distributive law}
\begin{proposition}\label{prop:composing-monads}
  Let \(\monad, \monad_{1},\monad_{2} \colon \cat{C} \to \cat{C}\) be a monads with multiplications \(\monadmultiplication\), \(\monadmultiplication_{1}\), and \(\monadmultiplication_{2}\), and units \(\monadunit\), \(\monadunit_{1}\) and \(\monadunit_{2}\), respectively.
  Suppose that:
  \begin{enumerate}
    \item the underlying functor of \(\monad\) is the composition of the underlying functors of \(\monad_{1}\) and \(\monad_{2}\): \(\monad = \monad_{1} \dcomp \monad_{2}\);
    \item there is a natural transformation \(\distributivelaw \colon \monad_{2} \dcomp \monad_{1} \to \monad_{1} \dcomp \monad_{2}\);
    \item \(\monadmultiplication = (\id{} \tensor \distributivelaw \tensor \id{}) \dcomp (\monadmultiplication_{1} \tensor \monadmultiplication_{2})\);
    \item \(\monadunit = \monadunit_{1} \tensor \monadunit_{2}\).
  \end{enumerate}
  Then, the transformation \(\distributivelaw\) is a distributive law of monads.
\end{proposition}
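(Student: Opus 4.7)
The plan is to verify the four coherence axioms of a distributive law of monads, namely the unit axioms
\[(\monad_{2}\monadunit_{1})\dcomp\distributivelaw = \monadunit_{1}\monad_{2}, \qquad (\monadunit_{2}\monad_{1})\dcomp\distributivelaw = \monad_{1}\monadunit_{2},\]
and the multiplication axioms
\[(\monad_{2}\monadmultiplication_{1})\dcomp\distributivelaw = (\distributivelaw\monad_{1})\dcomp(\monad_{1}\distributivelaw)\dcomp(\monadmultiplication_{1}\monad_{2}),\]
\[(\monadmultiplication_{2}\monad_{1})\dcomp\distributivelaw = (\monad_{2}\distributivelaw)\dcomp(\distributivelaw\monad_{2})\dcomp(\monad_{1}\monadmultiplication_{2}).\]
Naturality of \(\distributivelaw\) is already part of the hypotheses, so only these four equations remain.

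For the two unit axioms, I would start from the unit laws \((\monad\monadunit)\dcomp\monadmultiplication = \id{\monad} = (\monadunit\monad)\dcomp\monadmultiplication\) of the composite monad and expand using the hypothesised decompositions \(\monadunit = \monadunit_{1}\tensor\monadunit_{2}\) and \(\monadmultiplication = (\id{}\tensor\distributivelaw\tensor\id{})\dcomp(\monadmultiplication_{1}\tensor\monadmultiplication_{2})\). Using naturality of \(\monadunit_{1}\) to slide it past \(\distributivelaw\), and then cancelling matched pairs via the unit laws \((\monad_{i}\monadunit_{i})\dcomp\monadmultiplication_{i} = \id{\monad_{i}} = (\monadunit_{i}\monad_{i})\dcomp\monadmultiplication_{i}\) of the two constituent monads, the unit law of \(\monad\) reduces to a single identity that isolates one of the two required axioms for \(\distributivelaw\). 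The two choices of unit law (left versus right) deliver the two separate axioms.

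For the multiplication axioms, I would expand both sides of the associativity law \((\monadmultiplication\monad)\dcomp\monadmultiplication = (\monad\monadmultiplication)\dcomp\monadmultiplication\) using the given decomposition of \(\monadmultiplication\). Each side unfolds into a composite containing three whiskered copies of \(\distributivelaw\) together with pairs of \(\monadmultiplication_{1}\) and \(\monadmultiplication_{2}\). The two unit axioms for \(\distributivelaw\) just established, combined with the unit laws of \(\monad_{1}\) and \(\monad_{2}\), let us precompose with strategically chosen units that cancel the outermost \(\monadmultiplication_{i}\), reducing the associativity equation to one of the two pentagon equations characterising a multiplicative distributive law. The two axioms arise symmetrically from the two available whiskerings.

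The main obstacle will be the bookkeeping of whiskerings: each manipulation is a moderately long composite of natural transformations whiskered at various positions, and extracting the distributive-law axiom from a monad law requires selecting the correct units to insert so that the outer \(\monadmultiplication_{i}\) collapse. Conceptually, however, this is precisely the ``only if'' direction of Beck's theorem on composite monads and distributive laws, so once the types are aligned the remaining manipulations are systematic rather than inventive.
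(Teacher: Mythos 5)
Your proposal is correct and follows essentially the same route as the paper: both derive the two unit axioms by specialising the unit laws of the composite monad and cancelling with unitality of \(\monadmultiplication_{1}\) and \(\monadmultiplication_{2}\), and both derive the two multiplication axioms by precomposing the associativity law of the composite multiplication with suitably chosen units so that the outer \(\monadmultiplication_{i}\) collapse. The only difference is presentational: the paper carries out the bookkeeping with string diagrams in the endofunctor category, whereas you describe the same manipulations algebraically.
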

\begin{proof}
  The proof is a string diagrammatic reasoning in the monoidal category of endofunctors on \(\cat{C}\) and natural transformations between them.

  \Cref{eq:right-unitality-composite-monoid} represents right unitality of \(\monadmultiplication\).
  \begin{equation}\label{eq:right-unitality-composite-monoid}
    \rightunitalitycompositemonoidFig{}
  \end{equation}
  By precomposing \Cref{eq:right-unitality-composite-monoid} with \((\monadunit_{1} \tensor \id{\monad_{2}})\), we obtain \Cref{eq:right-unit-unitality1}.
  Unitality of \(\monadmultiplication_{1}\) and \(\monadmultiplication_{2}\) imply (\ref{eq:right-unit-unitality2}).
  \distributivelawrightunitaxiomproofFig{}
  This shows that the unit \(\monadunit_{1}\) commutes with \(\distributivelaw\).
  Similarly, by applying left unitality of \(\monadmultiplication\), one can show that the unit \(\monadunit_{2}\) commutes with \(\distributivelaw\).
  \Cref{eq:associativity-composite-monoid} represents associativity of \(\monadmultiplication\).
  \begin{equation}\label{eq:associativity-composite-monoid}
    \associativitycompositemonoidFig{}
  \end{equation}
  By precomposing \Cref{eq:associativity-composite-monoid} with \((\monadunit_{1} \tensor \id{\monad_{2}} \tensor \monadunit_{1} \tensor \id{\monad_{2}} \tensor \id{\monad_{1}} \tensor \monadunit_{2})\), we obtain (\ref{eq:left-multiplication-associativity}).
  Equations (\ref{eq:left-multiplication-unitality1}) and (\ref{eq:left-multiplication-unitality2}) follow from unitality of \(\monadmultiplication_{1}\) and \(\monadmultiplication_{2}\).
  Lastly, we showed above that the unit \(\monadunit_{1}\) commutes with \(\distributivelaw\).
  This implies \Cref{eq:left-multiplication-rightunit1}.
  \distributivelawleftmultiplicationaxiomproofFig{}
  This shows that the multiplication \(\monadmultiplication_{2}\) commutes with the distributive law.
  By a similar reasoning, the multiplication \(\monadmultiplication_{1}\) commutes with the distributive law.
\end{proof}

\begin{lemma}\label{lemma:distributive-law-natural}
  The components \(\distributivelaw_{X}\) defined in \Cref{rem:subBorelStoch-kleisli-cat} form a natural transformation \(\distributivelaw \colon \maybe[\Giry(-)] \to \Giry(\maybe)\).
\end{lemma}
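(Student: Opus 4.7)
The plan is to verify the naturality square for $\distributivelaw$ by checking that both squares agree pointwise on the domain. Since $\maybe[\Giry(X)] = \Giry(X) + 1$ is a coproduct, it suffices to check the equality separately on an arbitrary $\sigma \in \Giry(X)$ and on the point $\bot$. For a measurable map $f \colon X \to Y$, we must compare
\[
\Giry(\maybe[f]) \circ \distributivelaw_X \quad\text{and}\quad \distributivelaw_Y \circ \maybe[\Giry(f)],
\]
as maps $\maybe[\Giry(X)] \to \Giry(\maybe[Y])$. Equality of measures will be verified on the generating sets $T \in \sigma_Y$ and $\{\bot\}$ of the $\sigma$-algebra on $\maybe[Y]$.

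For the first case, I take $\sigma \in \Giry(X)$. The top composite sends $\sigma \mapsto \extenddomain{\sigma} \mapsto \Giry(\maybe[f])(\extenddomain{\sigma})$, and the bottom composite sends $\sigma \mapsto \Giry(f)(\sigma) \mapsto \extenddomain{\Giry(f)(\sigma)}$. For $T \in \sigma_Y$, using that $(\maybe[f])^{-1}(T) = f^{-1}(T) \subseteq X$, I compute
\[
\Giry(\maybe[f])(\extenddomain{\sigma})(T) = \extenddomain{\sigma}(f^{-1}(T)) = \sigma(f^{-1}(T)) = \Giry(f)(\sigma)(T),
\]
which by definition of extension equals $\extenddomain{\Giry(f)(\sigma)}(T)$. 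For $\{\bot\}$, I observe that $(\maybe[f])^{-1}(\{\bot\}) = \{\bot\}$, so $\Giry(\maybe[f])(\extenddomain{\sigma})(\{\bot\}) = \extenddomain{\sigma}(\{\bot\}) = 0 = \extenddomain{\Giry(f)(\sigma)}(\{\bot\})$.

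For the second case, on $\bot \in \maybe[\Giry(X)]$, the top composite yields $\Giry(\maybe[f])(\dirac{\bot})$, while the bottom composite yields $\dirac{\bot} \in \Giry(\maybe[Y])$. For $T \in \sigma_Y$, I have $\Giry(\maybe[f])(\dirac{\bot})(T) = \dirac{\bot}(f^{-1}(T)) = 0 = \dirac{\bot}(T)$, and on $\{\bot\}$, I get $\Giry(\maybe[f])(\dirac{\bot})(\{\bot\}) = \dirac{\bot}(\{\bot\}) = 1$, matching $\dirac{\bot}(\{\bot\})$. This establishes the equality on all generators of the $\sigma$-algebra of $\maybe[Y]$.

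The only subtle point, and what I expect to be the least routine step, is confirming that the equality on generators is sufficient: since both sides are probability measures (elements of $\Giry(\maybe[Y])$) and the collection $\{T, \{\bot\} : T \in \sigma_Y\}$ generates $\sigma_{\maybe[Y]}$ and is closed under intersections, a standard $\pi$-$\lambda$ argument (or just the observation that $\sigma_{\maybe[Y]} = \sigma_Y \uplus \{A \cup \{\bot\} : A \in \sigma_Y\}$) upgrades pointwise equality on these sets to equality of measures. Measurability of the resulting map of measurable spaces $\maybe[\Giry(X)] \to \Giry(\maybe[Y])$ follows from the fact that both branches of the case analysis are obtained by composing measurable constructions (the extension $\extenddomain{(-)}$ and the constant $\dirac{\bot}$), which concludes naturality.
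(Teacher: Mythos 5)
Your proof is correct and follows essentially the same route as the paper's: a case split over the coproduct $\Giry(X)+1$, with the two composites compared as measures on $\maybe[Y]$. You are somewhat more explicit than the paper in checking the value on $\{\bot\}$ and in justifying that agreement on the generating sets suffices, but the argument is the same.
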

\begin{proof}
  We define the components of \(\distributivelaw\) as \(\distributivelaw_{X}(\sigma) \defn \extenddomain{\sigma}\) and \(\distributivelaw_{X}(\bot) \defn \dirac{\bot}\), where \(\extenddomain{\sigma}\) is the extension of \(\sigma\) to \(\maybe[X]\) by \(\extenddomain{\sigma}(\{\bot\}) \defn 0\).
  Let \(f \colon (X,\sigma_{X}) \to (Y,\sigma_{Y})\).
  We show that the naturality square commutes for \(p \in \Giry(X)\):
  \begin{align*}
    &p \quad \overset{\maybe[\Giry(f)]}{\mapsto} \quad f^{-1}\dcomp p \quad \overset{\distributivelaw_{Y}}{\mapsto} \quad \extenddomain{(f^{-1}\dcomp p)}\\
    &p \quad \overset{\distributivelaw_{x}}{\mapsto} \quad \extenddomain{p} \quad \overset{\maybe[\Giry(f)]}{\mapsto} \quad (f+1)^{-1} \dcomp \extenddomain{p}
  \end{align*}
  The distributions \((f+1)^{-1} \dcomp \extenddomain{p}\) and \(\extenddomain{(f^{-1}\dcomp p)}\) coincide because they have the same values on measurable sets \(T \in \sigma_{Y}\).
  For the distinguished element \(\bot\), the naturality square also commutes:
  \begin{align*}
    &\bot \quad \overset{\maybe[\Giry(f)]}{\mapsto} \quad \bot \quad \overset{\distributivelaw_{Y}}{\mapsto} \quad \dirac{\bot}\\
    &\bot \quad \overset{\distributivelaw_{x}}{\mapsto} \quad \dirac{\bot} \quad \overset{\maybe[\Giry(f)]}{\mapsto} \quad \dirac{\bot}
  \end{align*}
\end{proof}

\section{Proofs from \Cref{sec:partial-markov}}\label{sec:app-proofs-partial-markov}

\begin{proof}[\textbf{Proof of~\Cref{prop:partial-markov-from-markov}}]
  The Kleisli category \((\kleisli{\monad}, \ktensor, \monoidalunit)\) is monoidal because \(\monad\) is monoidal.
  We check that \(\kleisli{\monad}\) has conditionals.
  We indicate with \(f \colon A \kto B\) morphisms in \(\kleisli{\monad}\) and with \(f \kcomp g\) their composition.
  Let \(f \colon X \kto A \ktensor B\).
  This means that \(f \colon X \to \monad(A \tensor B)\) and \(f \dcomp \laxatorsec_{A,B} \colon X \to \monad A \tensor \monad B\).
  Since \(\cat{C}\) has conditionals, there are a marginal \(m \colon A \to \monad A\) and a conditional \(c \colon \monad A \tensor X \to \monad B\) such that \(f \dcomp \laxatorsec_{A,B} = m \condcomp c\), where \(m \condcomp c\) is defined in \Cref{rem:conditional-composition} and \Cref{fig:conditionals}.
  We want to find a marginal \(m' \colon X \kto A\) and a conditional \(c' \colon A \ktensor X \kto B\) such that \(m' \kcondcomp c' = f\), where \(\kcondcomp\) indicates the operation \(\condcomp\) instantiated in \(\kleisli{\monad}\).
  Good candidates for the marginal and the conditional are \(m' \defn m\) and \(c' \defn (\monadunit_A \tensor \id{X}) \dcomp c\).
  We check that the desired equation holds.
  \begin{align*}
      & m \kcondcomp c' \\
      \explain[\defn]{by definition of \(\kcondcomp\)}\\
      & \cp_X \kcomp (m' \ktensor \id{X}) \kcomp (\cp_{A} \ktensor \id{X}) \kcomp (\id{A} \ktensor c') \\
      \explain{by definition of \(\kcomp\) and \(\ktensor\)}\\
      & \cp_X \dcomp \monadunit_{X \tensor X} \dcomp \monad((m \tensor \monadunit_X) \dcomp \laxator_{A,X}) \dcomp \monadmultiplication_{A \tensor X} \\
      & \qquad \dcomp \monad(((\cp_A \dcomp \monadunit_{A \tensor A}) \tensor \monadunit_X) \dcomp \laxator_{A \tensor A, X}) \dcomp \monadmultiplication_{A \tensor A \tensor X}\\
      & \qquad \dcomp \monad((\monadunit_A \tensor c') \dcomp \laxator_{A,B}) \dcomp \monadmultiplication_{A \tensor B}\\
      \explain{by naturality and monoidality of \(\monadunit\)}\\
      & \cp_X \dcomp (m \tensor \monadunit_X) \dcomp \laxator_{A,X} \dcomp \monadunit_{\monad(A \tensor X)} \dcomp \monadmultiplication_{A \tensor X} \dcomp \monad(\cp_A \tensor \id{X}) \\
      & \qquad \dcomp \monad\monadunit_{A \tensor A \tensor X} \dcomp \monadmultiplication_{A \tensor A \tensor X} \dcomp \monad(\id{A} \tensor c') \\
      & \qquad \dcomp \monad(\monadunit_A \tensor \monad\id{B}) \dcomp \monad\laxator_{A,B} \dcomp \monadmultiplication_{A \tensor B}\\
      \explain{by unitality of \(\monadmultiplication\)}\\
      & \cp_X \dcomp (m \tensor \monadunit_X) \dcomp \laxator_{A,X} \dcomp \monad(\cp_A \tensor \id{X}) \dcomp \monad(\id{A} \tensor c')\\
      & \qquad \dcomp \monad(\monadunit_A \tensor \monad\id{B}) \dcomp \monad\laxator_{A,B} \dcomp \monadmultiplication_{A \tensor B}\\
      \explain{because \(\laxator_{A,B}\) is a split epimorphism}\\
      & \cp_X \dcomp (m \tensor \monadunit_X) \dcomp \laxator_{A,X} \dcomp \monad(\cp_A \tensor \id{X}) \dcomp \monad(\id{A} \tensor c') \\
      & \qquad \dcomp \monad(\monadunit_A \tensor \monad\id{B}) \dcomp \laxatorsec_{\monad A, \monad B} \dcomp \laxator_{\monad A, \monad B} \dcomp \monad\laxator_{A,B} \dcomp \monadmultiplication_{A \tensor B}\\
      \explain{by monoidality of \(\monadmultiplication\)}\\
      & \cp_X \dcomp (m \tensor \monadunit_X) \dcomp \laxator_{A,X} \dcomp \monad(\cp_A \tensor \id{X}) \dcomp \monad(\id{A} \tensor c') \\
      & \qquad \dcomp \monad(\monadunit_A \tensor \monad\id{B}) \dcomp \laxatorsec_{\monad A, \monad B} \dcomp (\monadmultiplication_A \tensor \monadmultiplication_B) \dcomp \laxator_{A, B}\\
      \explain{by naturality of \(\laxatorsec\)}\\
      & \cp_X \dcomp (m \tensor \monadunit_X) \dcomp \laxator_{A,X} \dcomp \monad(\cp_A \tensor \id{X}) \dcomp \laxatorsec_{A, A \tensor X} \\
      & \qquad \dcomp (\monad\id{A} \tensor \monad c') \dcomp ((\monad\monadunit_A \dcomp \monadmultiplication_A) \tensor \monadmultiplication_B) \dcomp \laxator_{A, B}\\
      \explain{by unitality of \(\monadmultiplication\)}\\
      & \cp_X \dcomp (m \tensor \monadunit_X) \dcomp \laxator_{A,X} \dcomp \monad(\cp_A \tensor \id{X}) \dcomp \laxatorsec_{A, A \tensor X} \\
      & \qquad \dcomp (\monad\id{A} \tensor \monad c') \dcomp (\monad\id{A} \tensor \monadmultiplication_B) \dcomp \laxator_{A, B}\\
      \explain{by definition of \(c'\)}\\
      & \cp_X \dcomp (m \tensor \monadunit_X) \dcomp \laxator_{A,X} \dcomp \monad(\cp_A \tensor \id{X}) \dcomp \laxatorsec_{A, A \tensor X} \\
      & \qquad \dcomp (\monad\id{A} \tensor \monad \monadunit_A) \dcomp (\monad\id{A} \tensor \monad c) \dcomp (\monad\id{A} \tensor \monadmultiplication_B) \dcomp \laxator_{A, B}\\
      \explain{by naturality of \(\laxator\) and \(\monadunit\)}\\
      & \cp_X \dcomp (m \tensor \id{X}) \dcomp ((\monad\cp_A \dcomp \laxatorsec_{A,A} \dcomp (\monad\id{A} \tensor \monad \monadunit_A)) \tensor \id{X}) \\
      & \qquad \dcomp (\monad\id{A} \tensor \monad\monad\id{A} \tensor \monadunit_X) \dcomp (\monad\id{A} \tensor \laxator_{\monad A, X}) \\
      & \qquad \dcomp (\monad\id{A} \tensor \monad c) \dcomp (\monad\id{A} \tensor \monadmultiplication_B) \dcomp \laxator_{A, B}\\
      \explain{by the first assumption on \(\cp\)}\\
      & \cp_X \dcomp (m \tensor \id{X}) \dcomp ((\cp_{\monad A} \dcomp (\monad\id{A} \tensor \monad \monadunit_A)) \tensor \id{X}) \\
      & \qquad \dcomp (\monad\id{A} \tensor \monad\monad\id{A} \tensor \monadunit_X) \dcomp (\monad\id{A} \tensor \laxator_{\monad A, X}) \\
      & \qquad \dcomp (\monad\id{A} \tensor \monad c) \dcomp (\monad\id{A} \tensor \monadmultiplication_B) \dcomp \laxator_{A, B}\\
      \explain{by the second assumption on \(\cp\)}\\
      & \cp_X \dcomp (m \tensor \id{X}) \dcomp ((\cp_{\monad A} \dcomp (\monad\id{A} \tensor \monadunit_{\monad A})) \tensor \id{X}) \\
      & \qquad \dcomp (\monad\id{A} \tensor \monad\monad\id{A} \tensor \monadunit_X) \dcomp (\monad\id{A} \tensor \laxator_{\monad A, X}) \\
      & \qquad \dcomp (\monad\id{A} \tensor \monad c) \dcomp (\monad\id{A} \tensor \monadmultiplication_B) \dcomp \laxator_{A, B}\\
      \explain{by naturality and monoidality of \(\monadunit\)}\\
      & \cp_X \dcomp (m \tensor \id{X}) \dcomp (\cp_{\monad A} \tensor \id{X}) \\
      & \qquad \dcomp (\monad\id{A} \tensor (\monadunit_{\monad A \tensor X} \dcomp \monad c \dcomp \monadmultiplication_B)) \dcomp \laxator_{A, B}\\
      \explain{by naturality of \(\monadunit\)}\\
      & \cp_X \dcomp (m \tensor \id{X}) \dcomp (\cp_{\monad A} \tensor \id{X}) \\
      & \qquad \dcomp (\monad\id{A} \tensor (c \dcomp \monadunit_{\monad B} \dcomp \monadmultiplication_B)) \dcomp \laxator_{A, B}\\
      \explain{by unitality of \(\monadmultiplication\)}\\
      & \cp_X \dcomp (m \tensor \id{X}) \dcomp (\cp_{\monad A} \tensor \id{X}) \dcomp (\monad\id{A} \tensor c) \dcomp \laxator_{A, B}\\
      \explain{by definition of \(\condcomp\)}\\
      & (m \condcomp c) \dcomp \laxator_{A,B}
  \end{align*}

  Then, \(m \kcondcomp c' = (m \condcomp c) \dcomp \laxator_{A,B} = f \dcomp \laxatorsec_{A,B} \dcomp \laxator_{A,B} = f\).
\end{proof}

\begin{proof}[\textbf{Proof of \Cref{lemma:quasi-total-conditionals}}]
  By spelling out the definition of quasi-total morphism (\Cref{def:quasi-total-morphism}) in \(\kleisli{\monad}\), we show that it becomes the condition in the statement.
  \begin{align*}
    & \cp \dcomp \monadunit \dcomp \monad(\monadunit \tensor \id{} \tensor \laxatorsec \tensor \monadunit \tensor \id{}) \dcomp (\monad c \tensor \monad c)\\
    & \qquad \dcomp (\monadmultiplication \tensor \monadmultiplication) \dcomp (\monad \id{} \tensor (\monad \discard \dcomp \monad \monadunit \dcomp \monadmultiplication)) \dcomp \laxator\\
    \explain{by unitality of \(\monadmultiplication\)}\\
    & \cp \dcomp \monadunit \dcomp \monad(\monadunit \tensor \id{} \tensor \laxatorsec \tensor \monadunit \tensor \id{}) \dcomp (\monad c \tensor \monad c)\\
    & \qquad \dcomp (\monadmultiplication \tensor \monadmultiplication) \dcomp (\monad \id{} \tensor \monad \discard) \dcomp \laxator\\
    \explain{by naturality of \(\monadmultiplication\)}\\
    & \cp \dcomp \monadunit \dcomp \monad(\monadunit \tensor \id{} \tensor \laxatorsec \tensor \monadunit \tensor \id{}) \dcomp (\monad c \tensor \monad c)\\
    & \qquad \dcomp (\monad\monad \id{} \tensor \monad\monad \discard) \dcomp (\monadmultiplication \tensor \monadmultiplication) \dcomp \laxator\\
    \explain{by monoidality of \(\monadmultiplication\)}\\
    & \cp \dcomp \monadunit \dcomp \monad(\monadunit \tensor \id{} \tensor \laxatorsec \tensor \monadunit \tensor \id{}) \dcomp (\monad c \tensor \monad c)\\
    & \qquad \dcomp (\monad\monad \id{} \tensor \monad\monad \discard) \dcomp \laxator \dcomp \laxator \dcomp \monadmultiplication\\
    \explain{by naturality of \(\monadunit\) and \(\laxatorsec \dcomp \laxator = \id{}\)}\\
    & \cp \dcomp (\monadunit \tensor \id{} \tensor \monadunit \tensor \id{}) \dcomp (c \tensor c) \dcomp (\monad \id{} \tensor \monad \discard) \dcomp \laxator \dcomp \monadunit \dcomp \monadmultiplication\\
    \explain{by unitality of \(\monadmultiplication\)}\\
    & \cp \dcomp (\monadunit \tensor \id{} \tensor \monadunit \tensor \id{}) \dcomp (c \tensor c) \dcomp (\monad \id{} \tensor \monad \discard) \dcomp \laxator\\
    \explain{by determinism of \(\monadunit\)}\\
    & (\monadunit \tensor \id{}) \dcomp \cp \dcomp (c \tensor c) \dcomp (\monad \id{} \tensor \monad \discard) \dcomp \laxator\\
    \explain{by assumption}\\
    & (\monadunit \tensor \id{}) \dcomp c
  \end{align*}
\end{proof}

\begin{proof}[\textbf{Proof of \Cref{lemma:borel-quasi-total}}]
  Let \(f \colon X \to A \tensor (\maybe[B])\).
  On some pairs \((a,x)\), the conditional \(c\) is determined by \(f\) and its marginal.
  On the pairs \((a,x)\) in which it is not, we can set \(c(a,x) \defn \dirac{\bot}\).
  This makes it satisfy the condition in \Cref{lemma:quasi-total-conditionals}.
\end{proof}

\begin{proof}[\textbf{Proof of \Cref{lemma:maybe-monad-satisfies-assumptions}}]
  Recall that the structural transformation \(\laxator\) for the Maybe monad is defined by \(\laxator_{A,B} \defn \id{A \tensor B} + \coproductmap{\finmap{A + B}}{\id{1}}\).
  We define a candidate for the section \(\laxatorsec\) of the structural transformation \(\laxator\) as \(\laxatorsec_{A,B} \defn \id{A \tensor B} + \initmap{A+B} + \id{1}\).
  Note that \(\laxatorsec\) is a natural transformation because it is a coproduct of natural transformations.
  With this definition, \(\laxatorsec\) is a section of \(\laxator\):
  \begin{align*}
    &  && \laxatorsec_{A,B} \dcomp \laxator_{A,B}\\
    & = && (\id{A \tensor B} + \initmap{A+B} + \id{1}) \dcomp (\id{A \tensor B} + \coproductmap{\finmap{A + B}}{\id{1}})\\
    & = && \id{A \tensor B} + \coproductmap{\initmap{A+B} \dcomp \finmap{A+B}}{\id{1}}\\
    & = && \id{A \tensor B} + \coproductmap{\initmap{1}}{\id{1}}\\
    & = && \id{A \tensor B} + \id{1}\\
    & = && \id{(A \tensor B) +1}
  \end{align*}

  For the second assumption, we need to show that \((\cp_{A} +1) \dcomp \laxatorsec_{A,A}\) coincides with the copy map \(\cp_{A+1}\).
  By \Cref{prop:markov-comonoid-is-copy}, it suffices to realize that \((\cp_{A} +1) \dcomp \laxatorsec_{A,A}\) forms a comonoid with unit \(\discard_{A+1}\).
  Associativity follows from associativity of \(\cp_{A}\) and naturality of \(\laxatorsec\).
  Unitality follows, less trivially, from unitality of \(\cp_{A}\) and naturality of \(\laxatorsec\).
  \begin{align*}
    &&& (\cp_{A} + 1) \dcomp \laxatorsec_{A,A} \dcomp (\id{A+1} \tensor \discard_{A+1})\\
    &= && (\cp_{A} + 1) \dcomp \laxatorsec_{A,A} \dcomp ((\id{A}+1) \tensor (\discard_{A}+1)) \dcomp (\id{A+1} \tensor \cocopy_{1})\\
    &= && ((\cp_{A} \dcomp (\id{A} \tensor \discard_{A}))+1) \dcomp \laxatorsec_{A,A} \dcomp (\id{A+1} \tensor \cocopy_{1})\\
    &= && (\id{A}+1) \dcomp \laxatorsec_{A,A} \dcomp \cocopy_{A+1}\\
    &= && (\id{A}+1) \dcomp \id{A+1}\\
    &= && \id{A+1}
  \end{align*}
  Similarly, one can prove that \((\cp_{A} + 1) \dcomp \laxatorsec_{A,A} \dcomp (\discard_{A+1} \tensor \id{A+1}) = \id{A+1}\).

  For the third assumption, we rewrite the two sides of the equation in normal form, using distributivity of \(\tensor\) over \(+\), and unitality of the universal map from the coproduct \(\cocopy_{A} \colon A + A \to A\).
  The left side becomes:
  \begin{align*}
    &\ ((\cp_{A+1} \dcomp (\id{A+1} \tensor (\monadunit_{A}+1))) \tensor \id{X}) \dcomp (\id{A+1} \tensor f) \dcomp \laxator_{A,B} \\
    =& \ ((\cp_{A} \tensor \id{X}) + \initmap{A \tensor X} + \initmap{A \tensor X} + \initmap{A \tensor X} + \initmap{1} + \id{X}) \dcomp ((\id{A} \tensor f) + f) \\
    & \qquad \dcomp (\id{A \tensor B} + \finmap{A} + \finmap{B} + 1) \dcomp (\id{A \tensor B} + 1 + \cocopy_{1}) \dcomp (\id{A \tensor B} + \cocopy_{1}) \\
    =& \ ((((\cp_{A} \tensor \id{X}) + \initmap{A \tensor X} + \initmap{A \tensor X}) \dcomp (\id{A} \tensor f) \dcomp (\id{A \tensor B} + \finmap{A})) \\
    &\qquad + ((\initmap{A \tensor X} + \initmap{1} + \id{X}) \dcomp f \dcomp (\finmap{B} + 1) \dcomp \cocopy_{1})) \dcomp (\id{A \tensor B} + \cocopy_{1})\\
    =& \ ((((\cp_{A} \tensor \id{X}) + \initmap{A \tensor X} + \initmap{A \tensor X}) \dcomp (\id{A} \tensor f) \dcomp (\id{A \tensor B} + \finmap{A})) + \finmap{X}) \\
    & \qquad \dcomp (\id{A \tensor B} + \cocopy_{1}),
  \end{align*}
  while the right side becomes:
  \begin{align*}
    &\ = ((\cp_{A+1} \dcomp (\id{A+1} \tensor \monadunit_{A+1})) \tensor \id{X}) \dcomp (\id{A+1} \tensor f) \dcomp \laxator_{A,B}\\
    &\ = ((\cp_{A} \tensor \id{X}) + \initmap{A \tensor X} + \initmap{A \tensor X} + \initmap{A \tensor X} + \id{X} + \initmap{1}) \dcomp ((\id{A} \tensor f) + f)\\
    & \qquad \dcomp (\id{A \tensor B} + \finmap{A} + \finmap{B} + 1) \dcomp (\id{A \tensor B} + 1 + \cocopy_{1}) \dcomp (\id{A \tensor B} + \cocopy_{1})\\
    &\ = ((((\cp_{A} \tensor \id{X}) + \initmap{A \tensor X} + \initmap{A \tensor X}) \dcomp (\id{A} \tensor f) \dcomp (\id{A \tensor B} + \finmap{A})) \\
    & \qquad + ((\initmap{A \tensor X} + \id{X} + \initmap{1}) \dcomp f \dcomp (\finmap{B} + 1) \dcomp \cocopy_{1})) \dcomp (\id{A \tensor B} + \cocopy_{1})\\
    & \ = ((((\cp_{A} \tensor \id{X}) + \initmap{A \tensor X} + \initmap{A \tensor X}) \dcomp (\id{A} \tensor f) \dcomp (\id{A \tensor B} + \finmap{A})) + \finmap{X}) \\
    & \qquad \dcomp (\id{A \tensor B} + \cocopy_{1}).
  \end{align*}
  The last equality in both the derivations holds because both the maps \((\initmap{A \tensor X} + \initmap{1} + \id{X}) \dcomp f \dcomp (\finmap{B} + 1) \dcomp \cocopy_{1}\) and \((\initmap{A \tensor X} + \id{X} + \initmap{1}) \dcomp f \dcomp (\finmap{B} + 1) \dcomp \cocopy_{1}\) have type \(X \to 1\), which means that they both coincide with the map \(\finmap{X}\) to the terminal object \(1\).
  This proves that the original equality holds.
\end{proof}

\begin{proof}[\textbf{Proof of \Cref{prop:pearlandjeffrey}}]
  Pearl's update on \(q\) is defined to be \(\bayesinv{(c \dcomp q)}{\sigma}\).
  This coincides with \(\bayesinv{q}{\sigma \dcomp c} \dcomp \bayesinv{c}{\sigma}\) by \Cref{prop:composition-bayes-inversions}.
  By applying the definition of Bayes inversion (\Cref{def:bayes-inversion}), \Cref{prop:QTconditionals-give-nice-marginals} and the partial Frobenius axioms (\Cref{diagram-partial-frobenius}), we show that Pearl's updated prior is Jeffrey's updated prior by a string diagrammatic reasoning similar to that of the proof of \Cref{th:bayes}.
  \pearlVSjeffreyupdatesFig{}
  Equality \((\ast)\) holds because \(y\) is deterministic.
\end{proof}

\begin{proof}[\textbf{Proof of \Cref{prop:unique-conditionals-collapse}}]
  If conditionals are unique, in particular, the conditional of \(\cp\) is unique.
  Both \(\id{} \tensor \discard\) and \(\discard \tensor \id{}\) are conditionals of \(\cp\) and they must coincide:
  \begin{equation}\label{eq:projections-coincide}
    \projectionscoincideFig{}
  \end{equation}
  Now, let \(f,g \colon X \to Y\) be two morphisms with the same probability of success:
  \begin{equation}\label{eq:same-probability-failure}
    \sameprobabilityfailureFig{}\ .
  \end{equation}
  We show that they must coincide.
  \uniqueconditionalscollapseproofFig{}
  Equations (\ref{eq:collapse-normal1}), (\ref{eq:collapse-normal2}), (\ref{eq:collapse-normal3}) and (\ref{eq:collapse-normal4}) follow from normalisation (\Cref{def:normalisation}), Equation (\ref{eq:collapse-proj1}) follows from \Cref{eq:projections-coincide}, and Equations (\ref{eq:collapse-hp1}), (\ref{eq:collapse-hp2}) and (\ref{eq:collapse-hp3}) follow from \Cref{eq:same-probability-failure}.
\end{proof}

\section{Proofs from \Cref{sec:observations-conditionals}}\label{sec:app-proofs-observations}

\begin{proof}[\textbf{Proof of \Cref{prop:embedding-cons-to-par}}]
  We define an identity-on-objects functor \(\fun{F} \colon \consproc{\cat{C}} \to \parproc{\cat{C}}\).
  Every morphism \(f\) in \(\consproc{\cat{C}}\) that comes from a morphism in \(\cat{C}\) is left unchanged: \(\fun{F}(f) \defn f\).
  For every deterministic state \(y \colon 1 \to Y\) in \(\cat{C}\), the image of its corresponding costate in \(\parproc{\cat{C}}\) is defined in \Cref{fig:map-deterministic-evidence}, using the comparator structure.
  The functor \(\fun{F}\) is defined freely on composite morphisms and it is well-defined because it preserves the axiom of \(\consproc{\cat{C}}\).
  This follows by the Frobenius axioms in \(\parproc{\cat{C}}\) (\Cref{diagram-partial-frobenius}) and the fact that \(y \colon 1 \to Y\) is deterministic:
  \consprocToParprocWelldefinedFig{}
  Moreover, if we assume that $y$ and $z$ are total, the following holds.
  \consprocToParprocFaithfulFig{}
  Equalities (\ref{eq:faithful-embed-frob1}) and (\ref{eq:faithful-embed-frob2}) follow from the Frobenius axioms (\Cref{diagram-partial-frobenius}), equalities (\ref{eq:faithful-embed-det1}) and (\ref{eq:faithful-embed-det2}) follow from determinism of \(y\) and \(z\), while equality (\ref{eq:faithful-embed-tot1}) follows from totality of \(y\) and \(z\).
  Equality (\ref{eq:faithful-embed-hp1}) is implied by the assumption \(\fun{F}(\obsv{y}) = \fun{F}(\obsv{z})\).
\end{proof}

\begin{proof}[\textbf{Proof of \Cref{prop:normalisation-of-programs}}]
  We start by observing that any \(g\) in \(\cat{C}\) that satisfies \Cref{eq:normal-form-programs} must be the normalisation of \(f\):
  \normalisationfromnormalformFig{}
  Now, we show that we can compute the normal form of \(f\) inductively.
  For the base cases, we note that all the generators are already in normal form, with either \(Y = \monoidalunit\) or \(Z = \monoidalunit\).

  For the inductive step, there are two possibilities.
  First, suppose that \(f\) is a composition of two \constrainedProcesses{}: \(f = f_1 \dcomp f_2\).
  By induction hypothesis, we can compute normal forms of \(f_1\) and \(f_2\).
  We compute the normal form of \(f\) by combining the normal forms of \(f_1\) and \(f_2\), computing conditionals of total morphisms and using the axiom in \Cref{fig:axiom-exact-conditioning-cat}.
  \normalformcompositionprobabilisticprogramsProofFig{}
  The second possibility is that \(f\) is a monoidal product of two \constrainedProcesses{}: \(f = f_1 \tensor f_2\).
  By induction hypothesis, we can compute normal forms of \(f_1\) and \(f_2\).
  The normal form of \(f\) is the monoidal product of the normal forms of \(f_1\) and \(f_2\).
  \normalformtensorprobabilisticprogramsProofFig{}
\end{proof}

\section{Some decision problems}\label{sec:app-decision-problems}
\subsection{Death in Damascus}\label{sec:death-damascus}
In a completely deterministic world, Death collects people on a designated place on a designated day. If the chosen people is not there to confront Death, they survive (which represents a great utility, say, \(1000 \utility\)).

The legend says that a merchant found Death in Damascus, and Death promised to come for him in the next day. The merchant thought of fleeing to Aleppo, trying to escape death; but that came with a cost (a small negative utility, say, \(-1 \utility\)). However, Death is a perfect predictor, so the merchant found Death in Aleppo. Should the merchant have fled to Aleppo?

\begin{figure}[h!]
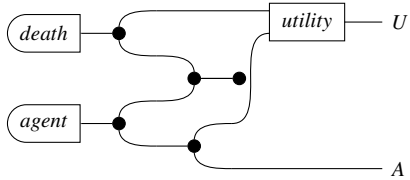

  \[\deathdamascusFig{}\]
    \caption{Model for Death in Damascus.}\label{diagram-damascus-model}
\end{figure}

Consider the model in \Cref{diagram-damascus-model}.
Evidential Decision Theory prescribes just waiting for Death in Damascus. In this model, if Death is really omniscient, it will be impossible to avoid it. It only makes sense to avoid the small negative utility of a last trip to Aleppo, accepting \(0 \utility\).

\begin{center}
  \begin{tabular}{| c | c  c |}
    \hline
    & \(D = \ket{A}\) & \(D = \ket{D}\) \\
    \hline
    \(A = \ket{A}\) & \(0 \utility\) & \(999 \utility\) \\
    \(A = \ket{D}\) & \(1000 \utility\) & \(-1 \utility\)\\
    \hline
  \end{tabular}
\end{center}

\subsection{Cheating Death in Damascus with a random oracle}\label{sec:cheating-death}
The reader may observe this problem could have been also modelled in a way similar to Newcomb's.
The only difference is that, in this case, the predictor is always \emph{adversarial}.
How could the agent cheat against such a predictor?
A possible answer is to allow the agent to use a true random oracle: if it were to decide whether to flee to Damascus or Aleppo based on a random oracle that even Death cannot predict, it would still have a chance of cheating Death.

In this second formulation of the problem (\Cref{diagram-damascus-coin-model}), the agent can use a fair coin that Death cannot predict.
We can even allow Death to choose the same strategy and toss a coin as well.
Should the agent try to cheat Death and choose to toss a coin?

\begin{figure}[h!]
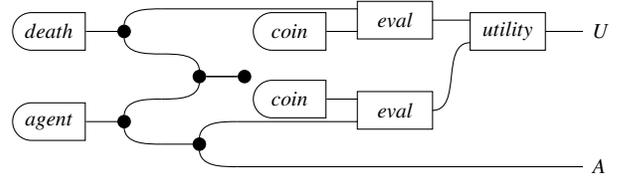

  \[\cheatingdeathdamascusFig{}\]
    \caption{Model for Cheating Death in Damascus with a random coin.}\label{diagram-damascus-coin-model}
\end{figure}

Evidential Decision Theory now prescribes that the agent should use the coin to try to cheat Death. This is no longer a lost cause: the expected utility is now $445.5 \utility$, see \Cref{subsec:codeDamascus}.

\section{Implementation}\label{app:implementation}

\lstset{basicstyle=\ttfamily,breaklines=true}

\subsection{Newcomb's Problem}\label{subsec:codeNewcomb}
The following is the model for Newcomb's Problem.
An agent will take action \texttt{a} with an uninformative prior. A predictor will try to predict it with \texttt{p}, again using an uninformative prior. We observe that the prediction is correct. Which is the action \texttt{x} that we would like to observe we have chosen?
\lstinputlisting[linerange={25-50}]{code/Newcomb.hs}
Our program will evaluate \lstinline{argmax newcomb} to the answer \lstinline{OneBox}.

\subsection{Death in Damascus}\label{subsec:codeDamascus}
The following is the model for the ``Death in Damascus'' problem. We sample a \lstinline{merchant} from the population of the world, and this information is also known by \lstinline{death}, who uses it to decide which city to go to. The merchant throws a \lstinline{coin} and chooses whether to flee or to stay following some strategy. Which is the strategy \texttt{f} that we would like to observe the merchant to have chosen?
\lstinputlisting[linerange={20-62}]{code/Damascus.hs}

\subsection{Partial Markov Category of Subdistributions}
The following is the library for Evidential reasoning using the partial Markov category of subdistributions. The subdistribution monad is better modelled here as a relative monad \cite{altenkirch_chapman_uustalu} from \lstinline{Finitary} types to arbitrary types. We employ rebindable syntax in order to be able to use do-notation \cite{haskell_2007} for the Kleisli category of this relative monad.
\lstinputlisting{code/Bayes.hs}
 
\end{document}